\crefname{table}{Protocol}{Protocols}
\let\oldaddcontentsline\addcontentsline
\newcommand{\starttocentries}{\let\addcontentsline\oldaddcontentsline}
\newtheoremstyle{definition}% ⟨name⟩ 
  {3pt}% ⟨Space above⟩ 
  {3pt}% ⟨Space below⟩
  {\sl}% ⟨Body font⟩
  {}% ⟨Indent amount⟩1
  {\bf}% ⟨Theorem head font⟩
  {:}% ⟨Punctuation after theorem head⟩
  {.5em}% ⟨Space after theorem head⟩2
  {}% ⟨Theorem head spec (can be left empty, meaning ‘normal’)⟩
\theoremstyle{definition}
\newtheorem{defn}{Definition}
\newtheorem*{sug}{Suggestion}
\newtheorem{prop}[defn]{Proposition}
\newtheorem*{cor}{Corollary}
\newcommand{\bb}[1]{\mathbb{#1}}
\newcommand{\tr}{\text{tr}}
\newcommand{\posint}{\mathbb{N}_+}
\newcommand{\qtol}{q_{\textnormal{tol}}}
\newcommand{\lstring}[1]{(#1_i)_{i=1}^l}
\newcommand{\mstring}[1]{(#1_i)_{i=1}^m}
\newcommand{\nstring}[1]{(#1_j)_{j=1}^n}
\newcommand{\Mid}{\,\middle|\,}
\newcommand{\abs}[1]{\left\vert#1\right\vert}
\newcommand{\upto}[1]{[#1]}
\newcommand{\expect}[1]{\left\langle #1 \right\rangle}
\newcommand{\lambdatest}{\lambda_{\text{test}}}
\newcommand{\ptail}{p_{\textnormal{tail}}}
\newcommand{\ppass}{p_{\textnormal{pass}}}
\newcommand{\Lambdatest}{\Lambda_{\textnormal{test}}}
\newcommand{\Lambdakey}{\Lambda_{\textnormal{key}}}
\newcommand{\Lambdatot}{\Lambda_{\textnormal{tot}}}
\newcommand{\lcube}{\{0,1\}^l}
\newcommand{\lk}{\{0,1\}^l_k}
\newcommand{\mn}{\{0,1\}^m_n}
\newcommand{\mk}{\{0,1\}^m_k}
\newcommand{\m}{\{0,1\}^m}
\newcommand{\basis}{\vartheta}
\newcommand{\Basis}{\Theta}
\newcommand{\no}[1]{\overline{#1}}
\newcommand{\range}[1]{\mathcal{#1}}
\newcommand{\co}[1]{\text{co}(#1)}
  \newcommand{\xMapsto}[2][]{\ext@arrow 0599{\Mapstofill@}{#1}{#2}}
  \def\Mapstofill@{\arrowfill@{\Mapstochar\Relbar}\Relbar\Rightarrow}
\newcommand{\Sigmakey}{\Sigma_{\textnormal{key}}}
\newcommand{\Sigmatest}{\Sigma_{\textnormal{test}}}
\newcommand{\Sigmatot}{\Sigma_{\textnormal{tot}}}
\newcommand{\sigmatot}{\sigma_{\textnormal{tot}}}
\newcommand{\Gammabort}{\Gamma_{\text{abort}}}
\newcommand{\Iabort}{I_{\text{abort}}}
\newcommand{\Ipass}{I_{\text{pass}}}
\newcommand{\pabort}{p_{\text{abort}}}
\DeclareMathOperator{\Forall}{\forall}
\begin{document}

%Definition of new floats (needs to be after \begin{document} in revtex)
\floatstyle{boxed}
\newfloat{protocol}{htb}{lop}
\floatname{protocol}{Protocol }

%The following is a tentative title!
\title{Sifting attacks in finite-size quantum key distribution}

\author{Corsin Pfister}
\affiliation{QuTech, Delft University of Technology, Lorentzweg 1, 2628 CJ
  Delft, Netherlands} 
\affiliation{Centre for Quantum Technologies, 3 Science
  Drive 2, 117543 Singapore}
\author{Norbert L\"utkenhaus}
\affiliation{Institute for Quantum Computing and Department of Physics and
  Astronomy, University of Waterloo, N2L3G1 Waterloo, Ontario, Canada}
\author{Stephanie Wehner}
\affiliation{QuTech, Delft University of Technology, Lorentzweg 1, 2628 CJ
  Delft, Netherlands} 
\affiliation{Centre for Quantum Technologies, 3 Science Drive 2, 117543
  Singapore}
\author{Patrick J. Coles}
\affiliation{Institute for Quantum Computing and Department of Physics and
  Astronomy, University of Waterloo, N2L3G1 Waterloo, Ontario, Canada}
\begin{abstract}
  A central assumption in quantum key distribution (QKD) is that Eve has no
  knowledge about which rounds will be used for parameter estimation or key
  distillation. Here we show that this assumption is violated for
  \emph{iterative sifting}, 
  a sifting procedure that has been employed in many (but not all) of the
  recently suggested QKD protocols in order to increase their efficiency.
   
  We show that iterative sifting leads to two security issues: (1) some
  rounds are more likely to be key rounds than others, (2) the public
  communication of past measurement choices changes this bias round by round. We
  analyze these two previously unnoticed problems, present eavesdropping
  strategies that exploit them, and find that the two problems are independent. 

  We discuss some sifting protocols in the literature that are immune to these
  problems. While some of these would be inefficient replacements for iterative
  sifting, we find that the sifting subroutine of an \emph{asymptotically}
  secure protocol suggested by Lo, Chau and Ardehali [\textit{J.\ Cryptol.},
  18(2):133-165, 2005], which we call LCA sifting, has an efficiency on par with
  that of iterative sifting. One of our main results is to show that LCA sifting
  can be adapted to achieve secure sifting in the \emph{finite}-key regime. More
  precisely, we combine LCA sifting with a certain parameter estimation
  protocol, and we prove the finite-key security of this combination. Hence we
  propose that LCA sifting should replace iterative sifting in future QKD
  implementations. More generally, we present two formal criteria for a sifting
  protocol that guarantee its finite-key security. Our criteria may guide the
  design of future protocols and inspire a more rigorous QKD analysis, which has
  neglected sifting-related attacks so far.
\end{abstract}

\maketitle

\section*{Introduction}

Quantum key distribution (QKD) allows for unconditionally secure communication
between two parties (Alice and Bob). A recent breakthrough in the theory of QKD
is the treatment of finite-key scenarios, pioneered by Renner and collaborators
(see~\cite{Renner2005}, for example). This has made QKD theory practically
relevant, since the asymptotic regime associated with infinitely many exchanged
quantum signals is an insufficient description of actual experiments. In
practice, Alice and Bob have limited time, which in turn limits the number of
photons they can exchange. For example, in satellite-based QKD \cite{BTDNV09}
where, say, Bob is on the satellite and Alice is on the ground, the time
allotted for exchanging quantum signals corresponds to the time for the
satellite to pass overhead Alice's laboratory on the ground. Even if such
considerations would not play a role, the necessity of error correction forces
the consideration of finite-size QKD because error correcting codes operate on
blocks of fixed finite length.

Finite-key analysis attempts to rigorously establish the security of finite-size
keys extracted from finite raw data. A systematic framework for such analysis
was developed by Tomamichel et al.~\cite{TLGR12} involving the smooth entropy
formalism. This framework was later extended to a decoy-state protocol by Lim et
al.~\cite{LCW14}. An alternative framework was developed by Hayashi and
collaborators~\cite{Hayashi2011, Hayashi2014}. Other extensions of the
finite-key framework include the treatment of device-independency by Tomamichel
et al.~\cite{TFKW13}, Curty et al.~\cite{CXC14} and Lim et al.~\cite{LPTRG13},
and continuous-variable protocols by Furrer et al.~\cite{Furrer2011} and
Leverrier~\cite{Leverrier2014}.  The framework used in the aforementioned works,
relying on some fairly technical results,\footnote{
  These results include the uncertainty principle for smooth entropies and the
  operational meanings of these entropies.}
represents the current state-of-the-art in the level of mathematical rigor for
QKD security proofs. These theoretical advances have led to experimental
implementations \cite{BCLVV13,Xu2014,Korzh2014} with finite-key analysis.

For practical reasons, it is important to consider not only a protocol's
security but also its efficiency. Ideally a protocol should use as little
quantum communication as possible, for a given length of the final secret key.
For example, it was noted by Lo, Chau and Ardehali \cite{lca05} that---in the
asymptotic regime---protocols with biased basis-choice probabilities can
dramatically decrease the necessary amount of quantum communication per bit of
the raw key.  This is because a bias increases the probability that Alice and
Bob measure in the same basis. As a consequence, when Alice and Bob perform the
\emph{sifting} step of the protocol, where they discard the outcomes of all
measurements that have been made in different bases, they lose less data (see
\Cref{fig:eff-bars} and the discussion in \Cref{sec:fix}).

Some authors have adapted this bias in the basis choice in finite-key protocols
and combined it with another measure to further decrease the amount of data that
is lost through sifting. In the resulting sifting scheme, which we call
\emph{iterative sifting}, Alice and Bob announce previous basis choices while
the quantum communication is still in process, and they
terminate the quantum communication as soon as they have collected sufficiently
many measurement outcomes in identical bases. This way, less quantum
communication takes place, while at the same time they always make sure that
they collect enough data. The implicit assumption here is that the knowledge of
previous basis choices, but not of upcoming ones, does not help a potential
eavesdropper.

As we show in this article, this assumption is wrong. Iterative sifting breaks
the security proofs that have been presented for these protocols. This sifting
scheme was part of theoretical protocols \cite{TLGR12,LPTRG13,CXC14,LCW14} and
has found experimental implementations \cite{BCLVV13}. Therefore, many (but not
all) of the recently suggested protocols in QKD have serious security flaws.

\subsection*{Summary of the results}

The issue with iterative sifting that we point out is as follows.
Typical QKD protocols involve randomly choosing some rounds to be used for
parameter estimation (PE) (i.e.  testing for the presence of an eavesdropper
Eve) and other rounds for key generation (KG). Naturally, if Eve knows ahead of
time whether a round will be used for PE, i.e., if Eve knows which rounds will
form the \emph{sample} for testing for an eavesdropper's presence, then she can
adjust her attack appropriately and the protocol is insecure. Hence a central
assumption in the QKD security analysis is that Eve has no knowledge about the
sample. We show that this assumption is violated for iterative sifting.

To be more precise, the iterative sifting scheme has two problems which, to our
knowledge, have been neither addressed nor noted in the literature: 
\begin{itemize}
  \item \emph{Non-uniform sampling}: The sampling probability, due to which the
    key bits and the encoding basis are chosen, is not uniform.\footnote{
      In general, the sampling probability (which decides over which of the bits
      are chosen as test bits) is distinguished from the probability
      distribution which decides in which basis the information is encrypted. In
      the literature, however, iterative sifting is combined with parameter
      estimation in a way such that bits measured in the $X$-basis are raw key
      bits, and bits measured in the $Z$-basis are used for parameter
      estimation. We will discuss this in more detail in the second half of
      \Cref{sec:itsif}.}
    In other words, there is an a priori bias: Eve knows ahead of time that some
    rounds are more likely to end up in the sample than others.
  \item \emph{Basis information leak}: Alice and Bob's public communication
    about their previous basis choices (which, in iterative sifting, happens
    before the quantum communication is over) allows Eve to update her knowledge
    about which of the upcoming (qu)bits end up in the sample. As a consequence,
    the quantum information that passes the channel thereafter can be correlated
    to this knowledge of Eve. 
\end{itemize}
It is conceivable that these two problems become smaller as the size of the
exchanged data increases. This would remain to be shown. More importantly,
however, the protocols in question are designed to be secure for finite key
lengths. In the light of these two problems, the analysis in the literature does
currently not account for these finite-size effects. This is not a purely
theoretical objection but a practically very relevant issue, as we present some
eavesdropping attacks that exploit the problems.

As we discuss in \Cref{sec:fix}, the basis information leak can trivially be
avoided by fixing the number of rounds in advance, and only announcing the basis
choices after all quantum communication has taken place. We examine some sifting
protocols from the literature with this property. In contrast to protocols that
use iterative sifting, they often use fresh uniform randomness for the choice of
the sample, and therefore are trivially sampling uniformly. This means that they
are secure with respect to our concerns. However, we find that there is room for
improvement over these protocols regarding efficiency aspects.

Concretely, we note that one aspect that makes iterative sifting very efficient
is the parameter estimation protocol that is used with it: after sifting, it
simply uses the $Z$-bits as the sample for parameter estimation and the $X$-bits
for raw key, which is why we call it the \emph{single-basis parameter estimation
{SBPE}}. This is efficient because the sample choice requires no aditional
randomness and no authenticated communication. While SBPE is insecure when used
in conjunction with iterative sifting, it turns out to be secure when used with
a sifting subroutine of a protocol suggested by Lo, Chau and Ardehali, which we
call \emph{LCA sifting}. The combination of LCA sifting and SBPE is essentially
as efficient as iterative sifting. It has trivially no basis information leak
and, as we prove, samples uniformly (see \Cref{prop:uniform-sampling}). We
therefore suggest this combination in future QKD protocols.

More generally, we find clear and explicit mathematical criteria that are
sufficient for a sifting protocol to be secure in combination with SBPE. In
contrast, current literature on QKD does not state such assumptions explicitly,
but rather uses them implicitly.

In our formulation, they take the form of two equations, 
\begin{align}
  &P_{\Basis}(\basis) = P_{\Basis}(\basis') \quad \Forall \basis, \basis' \in
      \lk \quad \text{and} \label{eq:uniform-formal} \\
  &\rho_{A^lB^l\Basis^l} = \rho_{A^lB^l} \otimes \rho_{\Basis^l} \,.
    \label{eq:uncorr-formal}
\end{align}
Here, \Cref{eq:uniform-formal} expresses the absence of non-uniform sampling,
i.e., that the probability $P_{\Basis}(\basis)$ for a partitioning $\basis$ of
the total rounds into sample rounds and key-generation rounds is independent of
$\basis$. \Cref{eq:uncorr-formal} expresses the absence of basis information
leak, which is formally expressed by stating that the classical communication
$\Basis^l$ associated with the sifting process is uncorrelated (i.e., in a
tensor product state) with Alice's and Bob's quantum systems $A^lB^l$. (The
precise details of these two equations will be explained in
\Cref{sec:formal-criteria}.) We find that the two problems are in fact
independent. Hence, security from one of the two problems does not imply
security from the other. The two formal criteria can be used to check whether a
candidate protocol is subject to the two problems or not.

\subsection*{Outline of the paper}

We introduce the iterative sifting protocol in \Cref{sec:itsif}, where we also
explain our conventions and notation. We give a detailed description of the two
problems with iterative sifting in \Cref{sec:problems}.  We show how these
problems can be exploited in \Cref{sec:attacks} by presenting some
intercept-resend attack strategies.

In \Cref{sec:fix}, we discuss some sifting protocols that are immune to these
problems. We study how ideas of existing protocols can be combined to get new
secure protocols that are more efficient. As a result, we suggest the
aforementioned combination of LCA sifting and SBPE, and prove its security.

In \Cref{sec:formal-criteria}, we give a more general answer to the question of
how the two problems can be avoided by presenting formal mathematical criteria
that a sifting protocol needs to satisfy in order to avoid the problems. We
conclude with a summary in \Cref{sec:conclusion}.

\section{Iterative sifting and parameter estimation}
\label{sec:itsif}

A typical QKD protocol consists of the following subroutines \cite{TLGR12}:
\begin{enumerate}[label=(\roman*)]
  \item Preparation, distribution, measurement and sifting, which we
    collectively refer to as ``sifting'', \label{it:sifting}
  \item Parameter estimation, \label{it:paramest}
  \item Error correction, \label{it:errcorr}
  \item Privacy amplification. \label{it:privamp}
\end{enumerate}
What we discuss in this paper refers to the subroutines \ref{it:sifting} and
\ref{it:paramest}, whereas subroutines \ref{it:errcorr} and \ref{it:privamp} are
not of our concern. We refer to subroutine \ref{it:sifting} collectively as
``sifting''. Even though the word sifting usually only refers to the process of
discarding part of the data acquired in the measurements, we refer to the
preparation, distribution, measurement and sifting together as ``sifting'',
because they are intertwined in iterative sifting. 

Our focus in this article is on a particular sifting scheme that we call
iterative sifting. It has been formulated in slightly different ways in the
literature, where the differences lie mostly in the choice of the wording and in
whether it is realized as a prepare-and-measure protocol \cite{TLGR12, BCLVV13,
CXC14, LCW14} or as an entanglement-based protocol \cite{LPTRG13}. These details
are irrelevant for the problems that we describe. Another difference is that
some of the above-mentioned references take into consideration that sometimes, a
measurement may not take place (no-detection event) or may have an inconclusive
outcome. This is done by adding a third symbol $\emptyset$ to the set of
possible outcomes, turning the otherwise dichotomic measurements into
trichotomic ones with symbols $\{0,1,\emptyset\}$. We choose not to do so,
because the problems that we describe arise independently of whether
no-detection events or inconclusive measurements take place. Incorporating them
would not solve the problems that we address but rather complicate things and
distract from the main issues that we want to point out. 

The essence of the iterative sifting protocol is shown in \Cref{prot:iterative}.
There, and in the rest of the paper, we use the notation
\begin{align}
  \upto{r} := \{1, 2, \ldots, r\} \quad \text{for all } r \in \posint \,.
\end{align}
Our formulation of this protocol is close to the one described in \cite{TLGR12},
with the main difference that we choose an entanglement-based protocol instead
of a prepare-and-measure protocol.  This will have the advantage that the formal
criteria in \Cref{sec:formal-criteria} are easier to formulate, but a
prepare-and-measure based protocol would otherwise be equally valid to
demonstrate our points.

\begin{table}
  \textbf{Iterative Sifting} 
  \smallskip
  \hrule
  \begin{description}
    \item[Parameters] $n,k \in \posint$ ; $p_x, p_z \in [0,1]$ with
      $p_x + p_z = 1$.
    \item[Output] For $l=n+k$, the outputs are:
      \begin{itemize}
	\item[Alice:] $l$-bit string $\lstring{s} \in \{0,1\}^l$ (sifted
	outcomes), \item[Bob:] $l$-bit string $\lstring{t} \in \{0,1\}^l$
	  (sifted outcomes), 
	\item[public:] $l$-bit string $\lstring{\basis} \in \{0,1\}^l$ with
	  $\sum_i \basis_i = k$ (basis choices, sifted), where 0 means $X$-basis
	  and 1 means $Z$-basis.
      \end{itemize}
    \item[Number of rounds] Random variable $M$, determined by reaching the
      termination condition (TC) after Step \ref{step:intreport}.
  \end{description}
  \smallskip
  \hrule
  \smallskip
  \textbf{The protocol}
  \begin{description}    
    \item[Loop phase] Steps \ref{step:prep} to \ref{step:intreport} are iterated
      roundwise (round index $r=1,2,\ldots$) until the TC after Step
      \ref{step:intreport} is reached. Starting with round $r=1$, Alice
      and Bob do:
      \begin{enumerate}[label=Step \arabic*:, ref=\arabic*]
	\item \textnormal{(Preparation):} \label{step:prep} Alice prepares a
	  qubit pair in a maximally entangled state.
	\item \textnormal{(Channel use):} \label{step:channel} Alice uses the
	  quantum channel to send half of the qubit pair to Bob.
	\item \textnormal{(Random bit generation):} \label{step:bitgen} Alice
	  and Bob each (independently) generate a random classical bit $a_r$ and
	  $b_r$, respectively, where $0$ is generated with probability $p_x$ and
	  $1$ with probability $p_z$.
	\item \textnormal{(Measurement):} \label{step:meas} Alice measures her
	  share in the $X$-basis (if $a_r = 0$) or in the $Z$-basis (if $a_r =
	  1$), and stores the outcome in a classical bit $y_r$. Likewise, Bob
	  measures his share in the $X$-basis (if $b_r = 0$) or in the
	  $Z'$-basis (if $b_r = 1$), and stores the outcome in a classical bit
	  $y'_r$.
	\item \textnormal{(Interim report):} \label{step:intreport} Alice and
	  Bob communicate their basis choice $a_r$ and $b_r$ over a public
	  authenticated channel. Then they determine the sets
	  \begin{align*} 
	    &u(r) := \{ j \in \upto{r} \mid a_j = b_j = 0 \} \,, \\
	    &v(r) := \{ j \in \upto{r} \mid a_j = b_j = 1 \} 
	  \end{align*}
      \end{enumerate}
      \begin{itemize}
	\item[TC:] If the condition ($\abs{u(r)} \geq n$ and
	  $\abs{v(r)} \geq k$) is reached, Alice and Bob set $m := r$ and
	  proceed with Step \ref{step:discarding}.  Otherwise, they increment
	  $r$ by one and repeat from Step \ref{step:prep}.
      \end{itemize}
    \item[Final phase] The following steps are performed only once:
    \begin{enumerate}[label=Step \arabic*:, ref=\arabic*]
      \setcounter{enumi}{5}
    \item \textnormal{(Random discarding):} \label{step:discarding} Alice and
      Bob choose a subset $u \subseteq u(m)$ of size $n$ at random,
      i.e. each subset of size $k$ is equally likely to be chosen. Analogously,
      they choose a subset $v \subseteq v(m)$ of size $k$ at random.
      Then they discard the bits $a_r$, $b_r$, $y_r$ and $y'_r$ for which $r
      \notin u \cup v$.
    \item \textnormal{(Order-preserving relabeling):} Let $r_i$ be the $i$-th
      element of $u \cup v$. Then Alice determines $\lstring{s} \in
      \{0,1\}^l$, Bob determines $\lstring{t} \in \{0,1\}^l$ and together they
      determine $\lstring{c} \in \{0,1\}^l$, where for every $i \in \upto{l}$,
      \begin{align*}
	s_i = y_{r_i} \,, \quad t_i = y'_{r_i} \,, \quad \basis_i = a_{r_i}
	\left( = b_{r_i} \right) \,.
      \end{align*}
    \item \textnormal{(Output):} \label{step:output} Alice [Bob] locally outputs
      $\lstring{s}$ [$\lstring{t}$], and they publicly output
      $\lstring{\basis}$.
  \end{enumerate}
  \end{description}
  \caption{The iterative sifting protocol. 
    \label{prot:iterative}}
\end{table}

In the protocol, Alice iteratively prepares qubit pairs in a maximally entangled
state (Step \ref{step:prep}) and sends one half of the pair to Bob (Step
\ref{step:channel}).\footnote{
  Choosing a maximally entangled state as the state that Alice prepares
  maximizes the probability that the correlation test in the parameter
  estimation (after sifting) is passed, i.e.  the maximally entangled state
  maximizes the robustness of the protocol.  However, for the security of the
  protocol, which is the concern of the present article, the choice of the state
  that Alice prepares is irrelevant.} 
Then, Alice and Bob each measure their qubit with respect
to a basis $a_i, b_i \in \{0,1\}$, respectively, where $0$ stands for the
$X$-basis and $1$ stands for the $Z$-basis (Steps \ref{step:bitgen} and
\ref{step:meas}). Thereby, Alice and Bob make their basis choice independently,
where for each of them, $0$ ($X$) is chosen with probability $p_x$, and $1$
($Z$) with probability $p_z$. These probabilities $p_x$ and $p_z$ are parameters
of the protocol. The important and problematic parts of the protocol are Step
\ref{step:intreport} and the subsequent check of the termination condition (TC):
after \emph{each} measurement, Alice and Bob communicate their basis choice over
an authenticated classical channel. With this information at hand, they then
check whether the termination condition is satisfied: if for at least $n$ of the
qubit pairs they had so far, they both measured in the $X$-basis, and for at
least $k$ of them, they both measured in the $Z$-basis, the termination
condition is satisfied and they enter the \emph{final phase} of the protocol by
continuing with Step \ref{step:discarding}. These \emph{quota} $n$ and $k$ are
parameters of the protocol. If the condition is not met, they repeat the Steps
\ref{step:prep} to \ref{step:intreport} (which we call the \emph{loop phase} of
the protocol) until they meet this condition.  Because of this iteration, whose
termination condition depends on the history\footnote{
  By the \emph{history} of a protocol run, we mean the record of everything that
  happened during the run of the protocol. In the case of iterative sifting,
  this means the random bits $a_r$, $b_r$, the measurement outcomes $y_r$,
  $y'_r$ etc.}
of the protocol run up to that point,
we call it the iterative sifting protocol. Its number of rounds is a random
variable that we denote by $M$. We denote possible values of $M$ by $m$ (see the
TC and Step~\ref{step:discarding}).

After the loop phase of the protocol, in which the whole data is generated,
Alice and Bob enter the final phase of the protocol, in which this data is
processed. This processing consists of discarding data of rounds in which Alice
and Bob measured in different bases, as well as randomly discarding a surplus of
data for rounds where both measured in the same basis, where a ``surplus''
refers to having more than $n$ ($k$) rounds in which both measured in the $X$
($Z$) basis, respectively. This discarding of surplus is done to simplify the
analysis of the protocol, which is easier if the number of bits where both
measured in the $X$ ($Z$) basis is fixed to a number $n$ ($k$). Since after the
loop phase, Alice and Bob can end up with more bits measured in this same basis,
they throw away surplus at random. Finally, after throwing away the surplus,
Alice and Bob locally output the remaining bit strings $\lstring{s}$ and
$\lstring{t}$ of measurement outcomes and publicly output the remaining bit
string $\lstring{\basis}$ of basis choices. 

Iterative sifting is problematic, but to fully understand why, one needs to see
how the output of the iterative sifting protocol is processed in the subsequent
subroutine \ref{it:paramest}, the parameter estimation, where Alice and Bob
check for the presence of an eavesdropper. 
Protocols that use iterative sifting use a particular protocol for parameter
estimation. To make clear what we are talking about, we have written it out in
\Cref{prot:paramest}.

Alice and
Bob start the protocol with the strings $\lstring{s}$, $\lstring{t}$ and
$\lstring{\basis}$ that they got from sifting.  Then, in a first step, they
communicate the \emph{test bits}. The test bits are those bits $s_i$, $t_i$ that
resulted from measurements in the $Z$-basis, i.e.  the bits $s_i$, $t_i$ with
$i$ such that $\basis_i=1$. Then, they determine the fraction of the test bits
that are different for Alice and Bob, i.e. they determine the \emph{test bit
error rate}. If it is higher than a certain protocol parameter $\qtol \in
[0,1]$, they abort.  Otherwise, they locally output the \emph{raw keys}, which
are the bits $s_i$, $t_i$ that result from measurements in the $X$-basis, i.e.
those $s_i$, $t_i$ with $i$ for which $\basis_i=0$.

\begin{table}
  \textbf{Single-Basis Parameter Estimation (SBPE)} 
  \smallskip
  \hrule
  \begin{description}
    \item[Protocol Parameters] $n,k \in \posint$, $p_x, p_z \in [0,1]$ with
      $p_x+p_z=1$ and $\qtol \in [0,1]$.
    \item[Input] For $l=n+k$, the inputs are:
      \begin{itemize}
	\item[Alice:] $l$-bit string $\lstring{s} \in \{0,1\}^l$ (measurement
	  outcomes, sifted),
	\item[Bob:] $l$-bit string $\lstring{t} \in \{0,1\}^l$ (measurement
	  outcomes, sifted), 
	\item[public:] $l$-bit string $\lstring{\basis} \in \{0,1\}^l$ with
	  $\sum_i \basis_i = k$ (basis choices, sifted), where 0 means $X$-basis
	  and 1 means $Z$-basis.
      \end{itemize}
    \item[Output] Either no output (if the protocol aborts in Step
      \ref{step:corrtest}) or:
      \begin{itemize}
	\item[Alice:] $n$-bit string $\nstring{x} \in \{0,1\}^n$ (raw key),
	\item[Bob:] $n$-bit string $\nstring{x'} \in \{0,1\}^n$ (raw key). 
      \end{itemize}
  \end{description}
  \smallskip
  \hrule
  \smallskip
  \textbf{The protocol}
  \begin{enumerate}[leftmargin=1.3cm, label=Step \arabic*:, ref=\arabic*] 
    \item \textnormal{(Test bit communication):} Alice and Bob communicate their
      test bits, i.e. the bits $s_i$ and $t_i$ with $i$ for which $\basis_i=1$,
      over a public authenticated channel.
    \item \textnormal{(Correlation test):} \label{step:corrtest} Alice and Bob
      determine the \emph{test bit error rate}
      \begin{align*}
	\lambdatest := \frac{1}{k} \sum_{i=1}^l \basis_i (s_i \oplus t_i) \,,
      \end{align*}
      where $\oplus$ denotes addition modulo 2, and do the \emph{correlation
      test}: if $\lambdatest \leq \qtol$, they continue the protocol and move on
      to Step \ref{step:raw-key-output}.  If $\lambdatest > \qtol$, they abort.
    \item \textnormal{(Raw key output):} \label{step:raw-key-output} Let $i_j$
      be the $j$-th element of $\{i\in\upto{l} \mid \basis_i=0\}$. Then Alice
      outputs the $n$-bit string $\nstring{x}$ and Bob outputs the $n$-bit
      string $\nstring{x'}$, where
      \begin{align*}
	x_j = s_{i_j} \,, \quad x'_j = t_{i_j} \,.
      \end{align*}
  \end{enumerate}
  \caption{The single-basis parameter estimation (SBPE) protocol. 
    \label{prot:paramest}}
\end{table}

It is important to emphasize that if the output of iterative sifting serves as
the input of the parameter estimation protocol as in \Cref{prot:paramest}, then
the bits that result from measurements in the $X$-basis are used for the raw
key, and the bits that result from measurements in the $Z$-basis are used for
parameter estimation (i.e. they form the \emph{sample} for the parameter
estimation). Hence, the sample is determined by the basis choice; no additional
randomness is injected to choose the sample. 
  We call this the \emph{single-basis parameter estimation (SBPE)}, because the
  parameter estimation is done in only one basis.

This is not necessarily a problem by itself. However, as we will show in
\Cref{sec:non-uniform-sampling}, in iterative sifting, some rounds are more
likely to end up in the sample than other rounds.  This leads to non-uniform
sampling, which is a problem since uniform sampling is one of the assumptions
that enter the analysis of the parameter estimation. This seems to be unnoticed
so far, as we found that protocols in the literature that use iterative sifting
as a subroutine use SBPE as a subroutine for parameter estimation (or
something equivalent) \cite{TLGR12, LPTRG13, CXC14, LCW14, BCLVV13}. In
contrast, the LCA sifting protocol that we discuss in
\Cref{sec:fix} \emph{does} sample uniformly, even if bits from
$X$-measurements are used for the raw key and $Z$-measurements are used for
paremeter estimation, without injecting additional randomness.

We will discuss randomness injection for the sample choice in more detail in
\Cref{sec:fix}.

The idea behind the parameter estimation is the following: if the correlation
test passes, then the likelihood that Eve knows much about the raw key is
sufficiently low. The exact statement of this is subtle, and involves more
details than are necessary for our purposes. We refer to \cite{TLGR12} for more
details.  Here, what is important is that this estimate of Eve's knowledge is
done via estimating another probability that we call the \emph{tail probability}
$\ptail(\mu)$ which, for $\mu \in [0,1]$, is given by
\begin{align}
  \label{eq:ptail-first}
  \ptail(\mu) = P[\Lambdakey \geq \Lambdatest + \mu \mid \Lambdatest \leq \qtol]
  \,. 
\end{align}
Here, $\Lambdatest$ is the random variable of the test bit error rate
$\lambdatest$ determined in the parameter estimation protocol, 
  \begin{align}
    \lambdatest := \frac{1}{k} \sum_{i=1}^l \basis_i (s_i \oplus t_i) \,.
  \end{align}
The random variable $\Lambdakey$ is the random variable of a quantity that is
not actually measured: it is the random variable of the error rate on the raw
key bits \emph{if they had been measured in the $Z$-basis}. Since in the actual
protocol, the raw key bits have been measured in the $X$-basis, the random
variable $\Lambdakey$ is the result of a \emph{Gedankenexperiment} rather than
an actually measured quantity. We will define $\Lambdakey$ formally in
\Cref{sec:formal-criteria}. 

The usual analysis, as in Reference \cite{TLGR12}, aims at proving that
\begin{align}
  \label{eq:ptail-bound}
  \ptail(\mu) &\leq \frac{\exp\left( -2\frac{kn}{l}\frac{k}{k+1}\mu^2
  \right)}{\ppass} \,, 
\end{align}
where
\begin{align}
  \label{eq:ppass}
  \ppass = P[\Lambdatest \leq \qtol]
\end{align}
Inequality \eqref{eq:ptail-bound} is turned into an inequality about the
eavesdropper's knowledge about the raw key using an uncertainty relation for
smooth entropies \cite{TR11,TLGR12}.

\subsection*{Notation and terminology}
\label{sec:notation}

In the following sections, we will have a closer look at the probabilities of
certain outputs of the iterative sifting protocol in \Cref{prot:iterative}. For
example, in \Cref{sec:non-uniform-sampling} we will consider the probability
that iterative sifting with parameters $n=1$, $k=2$ outputs the string $\basis =
(\basis_i)_{i=1}^3 = (1,1,0)$. Since the output of the protocol is
probabilistic, the output string becomes a random variable. We denote random
variables by capital letters and their values by lower case letters. For
example, the random variable for the output string $\basis$ is denoted by
$\Basis$, and the probability of the output string to have a certain value
$\basis$ is $P[\Basis=\basis]$. For strings in
$\basis=\lstring{\basis}\in\{0,1\}^l$, we write $\lstring{\basis} =
\basis_1\basis_2\ldots\basis_l$ instead of $\lstring{\basis} =
(\basis_1,\basis_2,\ldots,\basis_l)$, i.e. we omit the brackets and commas. For
example, we write $110 \in \{0,1\}^3$ instead of $(1,1,0) \in \{0,1\}^3$, so the
probability that we calculate in \Cref{sec:non-uniform-sampling} is
$P[\Basis=110]$. Other random variables that we consider include the random
variable $A_1$ ($B_1$) of Alice's (Bob's) first basis choice $a_1$ ($b_1$) or
the random variable $M$ of the number $m$ of total rounds performed in the loop
phase of the iterative sifting protocol.

To simplify the calculations, it is convenient to introduce the following
terminology. For a round $r$ in the loop phase of the iterative sifting
protocol, $r$ is an $X$-agreement if $a_r = b_r = 0$, $r$ is a $Z$-agreement if
$a_r = b_r = 1$ and $r$ is a disagreement if $a_r \neq b_r$. We sometimes say
that $r$ is an agreement if it is an $X$- or a $Z$-agreement.

For calculations with random variables like $\Basis$, $A_1$, $B_1$ or $M$, the
sample space of the relevant underlying probability space is the set of all
possible histories of the iterative sifting protocol. This set is hard to model,
as it contains not only all possible strings $(a_r)_r$, $(b_r)_r$, $(y_r)_r$ and
$(y'_r)_r$ of the loop phase (which can be arbitrarily long) but also a record
of the choice of the subsets $u$ and $v$ in the random discarding
during the final phase. It is, however, not necessary for our calculations to
have the underlying sample space explicitly written out. In order to avoid
unnecessarily complicating things, we therefore only deal with the relevant
events, random variables and their probability mass functions directly, assuming
that the reader understands what probability space they are meant to be defined
on. In contrast, the LCA sifting protocol which we discuss in
\Cref{sec:fix}, has a simpler set of histories, and we will derive a probability
space model for it in \Cref{app:fixed-round}.

We often write expressions in terms of probability mass functions instead of in
terms of probability weights of events, e.g. we write
\begin{align}
  P_{\Basis}(\basis) := P[\Basis =\basis] \,. 
\end{align}

\section{The problems}
\label{sec:problems}

\subsection{Non-uniform sampling}
\label{sec:non-uniform-sampling}

To show that iterative sifting leads to non-uniform sampling, we calculate the
sampling probabilities for some example parameters $k,n \in \posint$ as
functions of the probabilities $p_x$ and $p_z$. By a sampling probability, we
mean the probability that some subset of $k$ of the $l=n+k$ bits is used as a
sample for the parameter estimation, i.e. the sampling probabilities are
$P_\Basis(\basis)$ for $\basis \in \lk$, where
\begin{align}
  \lk := \left\{ \lstring{\basis} \in \lcube \middle\vert \sum_{i=1}^l \basis_i = k
  \right\}
\end{align}
is the set of all $l$-bit strings with Hamming weight $k$. We say that sampling
is uniform if $P_\Basis(\basis)$ is the same for all $\basis \in \lk$, and
non-uniform otherwise. While non-uniform sampling already arises in the case of
the smallest possible parameters $k=n=1$, the results are even more interesting
in cases where $k \neq n$. Let us consider iterative sifting
(\Cref{prot:iterative}) with $n=1$, $k=2$ and arbitrary $p_x, p_z \in [0,1]$.
Let $\Basis $ denote the random variable of the string $\basis =
(\basis_i)_{i=1}^3 = \basis_1\basis_2\basis_3$ of sifted basis choices which is
generated by the protocol. The possible values of $\Basis $ are $110$, $101$ and
$011$. The probabilities of these strings are given as follows (see
\Cref{app:sample-prob} for a proof).

\begin{prop}
  \label{prop:sample-prob}
  For the iterative sifting protocol as in \Cref{prot:iterative} with $n=1$ and
  $k=2$, it holds that
  \begin{align}
    P_\Basis(110) = g_z^2 \,, \quad \text{where} \quad g_z =
    \frac{p_z^2}{p_z^2+p_x^2} \,. 
  \end{align}
  For the other two possible values of $\Basis $, it holds that
\begin{align}
  P_\Basis (011) = P_\Basis (101) = \frac{1-g_z^2}{2} \,.
  \label{eq:other-partitions2}
\end{align}
\end{prop}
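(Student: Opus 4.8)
The plan is to strip the iterative sifting protocol down to the only data that the output $\Basis$ actually depends on, namely the subsequence of \emph{agreement} rounds and their relative order. First I would note that disagreement rounds play no role: the termination condition depends only on the counts $\abs{u(r)}$ and $\abs{v(r)}$ of $X$- and $Z$-agreements, while the random discarding and the order-preserving relabeling act only on agreement rounds and respect their round order. I can therefore marginalize over the disagreement rounds and over \emph{when} the agreements occur, retaining only the sequence of agreement \emph{types}. Conditioned on a round being an agreement, it is a $Z$-agreement with probability $g_z = \frac{p_z^2}{p_z^2+p_x^2}$ and an $X$-agreement with probability $g_x := \frac{p_x^2}{p_z^2+p_x^2} = 1 - g_z$, independently from round to round. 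Thus the agreements form an i.i.d.\ sequence $(G_i)_i$ with $P[G_i = Z] = g_z$. (I assume $p_x, p_z \in (0,1)$, so that both agreement types occur and the TC is reached almost surely; the boundary values are degenerate.)

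With $n=1$ and $k=2$ the TC reads ``$\abs{u(r)} \geq 1$ and $\abs{v(r)} \geq 2$''. I would next inspect the first index $\tau$ at which it holds, using that it fails at $\tau - 1$. This splits into two mutually exclusive and exhaustive cases. In case (B), $G_\tau$ is the \emph{first} $X$-agreement and is preceded by $\tau - 1 \geq 2$ $Z$-agreements; then there is exactly one $X$-agreement, it is necessarily selected, and it comes after both selected $Z$-agreements, so the relabeled output is \emph{forced} to be $\basis = 110$. In case (A), $G_\tau$ is the \emph{second} $Z$-agreement and $G_1, \ldots, G_{\tau-1}$ contain exactly one $Z$-agreement together with some number $a \geq 1$ of $X$-agreements; then both $Z$-agreements are selected and one of the $a$ $X$-agreements is chosen uniformly, and since the second $Z$ is last, the output is either $011$ or $101$ --- never $110$ --- according to whether the selected $X$-agreement precedes or follows the first $Z$-agreement. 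I expect this case distinction to be the main obstacle: the argument hinges on verifying that the two cases exhaust the stopping event, that (B) forces $110$, and that (A) can never yield $110$.

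Computing the probability of case (B) is then a geometric sum: the sequences are $Z^{\,j} X$ with $j \geq 2$, so $P_\Basis(110) = \sum_{j \geq 2} g_z^{\,j} g_x = g_x \cdot \frac{g_z^2}{1-g_z} = g_z^2$, using $1 - g_z = g_x$. For the remaining mass I would establish $P_\Basis(011) = P_\Basis(101)$ by symmetry rather than direct evaluation. Conditioned on case (A) with $a$ fixed, the lone early $Z$-agreement sits at a uniformly random position among the first $a+1$ agreements, and the selected $X$-agreement is then uniform among the remaining $a$ positions; hence the joint law of (position of the early $Z$, position of the selected $X$) is uniform over ordered pairs of distinct positions and is invariant under swapping the two coordinates. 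This forces the selected $X$-agreement to fall before or after the early $Z$-agreement with equal probability, i.e.\ $P_\Basis(011) = P_\Basis(101)$. Since the three strings exhaust the possible outputs and $P_\Basis(110) = g_z^2$, the two equal probabilities share the remaining mass $1-g_z^2$, giving $P_\Basis(011) = P_\Basis(101) = \frac{1-g_z^2}{2}$; as a consistency check, summing case (A) directly via $\sum_{a \geq 1}(a+1)\,g_z^2\,g_x^{\,a} = 1 - g_z^2$ confirms that cases (A) and (B) account for all the probability.
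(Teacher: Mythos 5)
Your proof is correct, and it shares the paper's skeleton --- identify $\basis=110$ with the runs in which the $X$-quota is filled last, compute that probability directly, then get $P_\Basis(011)=P_\Basis(101)$ by symmetry and normalization --- but the execution is genuinely different at two points. First, you marginalize out disagreements at the outset, reducing everything to the i.i.d.\ sequence of agreement types with $P[Z]=g_z$; the paper instead works with full histories, so $P_\Basis(110)$ appears there as a double sum $\sum_{n_z\geq2}\sum_{n_d\geq0}p_x^2(p_z^2)^{n_z}(2p_xp_z)^{n_d}\binom{n_z+n_d}{n_d}$ rather than your one-line geometric series $g_x\sum_{j\geq2}g_z^j=g_z^2$. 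Your reduction buys simplicity and generality: it yields the paper's unproved claim $P_\Basis(1\cdots10)=g_z^k$ for $n=1$ and arbitrary $k$ with no extra work, at the cost of invoking a thinning step (the agreement types are i.i.d.\ with parameter $g_z$, independently of which rounds are agreements) that should be stated as a lemma in a written version, though it is standard. Second, your symmetry argument is more explicit than the paper's: conditional on your case (A) with $a$ $X$-agreements, you show the pair formed by the position of the lone early $Z$ and the position of the selected $X$ is uniform over ordered pairs of distinct positions, hence exchangeable; the paper merely asserts that the events $\{\Basis=101,N_x=n_x,N_z=2,N_d=n_d\}$ and $\{\Basis=011,N_x=n_x,N_z=2,N_d=n_d\}$ contain equally many equiprobable histories, without exhibiting the bijection. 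Your stopping-time case analysis (case (B) forces $110$, case (A) excludes it, and the two exhaust the stopping event) is the same structural fact the paper uses when arguing that $\Basis=110$ forces $N_x=1$ with the $X$-agreement last, and both verifications are sound; the consistency check and your remark that $p_x,p_z\in(0,1)$ is needed for almost-sure termination are welcome extras the paper leaves tacit.
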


Hence, different samples have different probabilities, in general. In order for
the sampling probability $P_\Basis$ to be uniform, in the case where $n=1$ and
$k=2$, we need to have $P_\Basis (\basis) = 1/3$ for $\basis=011,101,110$. This
holds if and only if $g_z = g_z^*$, where $g_z^* = 1/\sqrt{3}$, which in turn is
equivalent to $p_z = p_z^*$, where
\begin{align}
  \label{eq:equalizing-p_z-2}
  p_z^* = \frac{\left(3+2\sqrt{3}\right)\left( 1+\sqrt{\sqrt{3}-1}
  \right)}{\sqrt{3}} \approx 0.539 \,.
\end{align}
This is bad news for iterative sifting: it means that iterative sifting leads to
non-uniform sampling for all values of $p_z$ except $p_z=p_z^*$. Interestingly,
the value of $p_z^*$ does not seem to be a probability that has been considered
in the QKD literature. In particular, $p_z^*$ corresponds to neither the
symmetric case $p_z=1/2$ nor to a certain asymmetric probability which has been
suggested to be chosen in order to maximize the key rate \cite{TLGR12}.

The value $g_z$ can be interpreted as the probability that in a certain round of
the loop phase, Alice and Bob have a $Z$-agreement, given that they have an
agreement in that round (this conditional is why the $p_z^2$ is renormalized
with the factor $1/(p_z^2+p_x^2)$). Hence, $g_z^2$ is the probability that Alice
and Bob's first two basis agreements are $Z$-agreements. Therefore, $P_\Basis
(110) = g_z^2$ is what one would intuitively expect: to end up with $\Basis
=110$, the first two basis agreements need to be $Z$-agreements, and conversely,
whenever the first two basis agreements are $Z$-agreements, Alice and Bob end up
with $\Basis =110$.

More generally, it turns out that for $n=1$ and for $k \in \posint$ arbitrary,
the iterative sifting protocol leads to
\begin{align}
  P_\Basis (1\ldots10) &= g_z^k \,, \\
   P_\Basis (\basis) &= \frac{1-g_z^k}{k}
  \quad \text{for all other } \basis \in \lk \,.
\end{align}
This is a uniform probability distribution if and only if $g_z = g_z^*$, where
\begin{align}
  \label{eq:equalizing-g}
  g_z^* = \left( \frac{1}{k+1} \right)^{1/k} \,,
\end{align}
which is true iff $p_z = p_z^*$, where
\begin{align}
  \label{eq:equalizing-p}
  p_z^* = \frac{g_z^*-\sqrt{g_z^*(1-g_z^*)}}{2g_z^*-1} \,.
\end{align}

Hence, we conclude that iterative sifting does not lead to uniformly random
sampling, unless $p_x$ and $p_z$ are chosen in a very particular way. This
particular choice does not seem to correspond to anything that has been
considered in the literature so far. 

\subsection{Basis information leak}

In iterative sifting, information about Alice's and Bob's basis choices reaches
Eve in every round of the loop phase. In Step \ref{step:intreport} of round $r$,
Alice and Bob communicate their basis choice $a_r$, $b_r$ of that round. They do
so because they want to condition their upcoming action on the strings
$a_1\ldots a_r$ and $b_1\ldots b_r$: if they have enough basis agreements, they
quit the loop phase; otherwise they keep looping. 

What seems to have remained unnoticed in the literature is that Eve can also
condition her actions on $a_1\ldots a_r$ and $b_1\ldots b_r$. This means that if
there is a round $r+1$, Eve can correlate the state of the qubit that Alice
sends to Bob in round $r+1$ with $a_1\ldots a_r$ and $b_1\ldots b_r$. Hence, the
state of the qubit that Bob measures is correlated with the classical register
that keeps the information about the basis choice. Note that the basis
information leak tells Eve how close Alice and Bob are to meeting their quotas
for each basis.  Eve can tailor her attack on future rounds based on this
information.  For example, if Alice and Bob have already met their $Z$-quota,
but not their $X$-quota, then Eve can measure in the $X$-basis, knowing that, if
Alice and Bob happen to both measure $Z$, the round may be discarded anyway. 

We want to emphasize that the basis information leak is not resolved by
injecting additional randomness for the choice of the sample. As we will discuss
in \Cref{sec:fix}, such additional randomness can ensure that the sampling is
uniform, but it does not help against the basis information leak. Randomness
injection for the sample is effectively equivalent to performing a random
permutation on the qubits \cite{ren07}. This does not remove the correlation
between the classical basis information register and the qubits.

We will see more concretely how the basis information leak is a problem when we
present an eavesdropping attack in \Cref{sec:attack-on-non-uniform} and when we
treat the problem more formally in \Cref{sec:formal-criteria}.

\section{Eavesdropping attacks}
\label{sec:attacks}

A detailed analysis of the effect of non-uniform sampling and basis information
leak on the key rate is beyond the scope of the present paper. It would involve
developing a new security analysis for a whole protocol involving iterative
sifting. Instead of attempting to find a modified analysis for iterative
sifting, we will discuss alternative protocols in \Cref{sec:fix}.

However, to give an intuitive idea of the effect, we will calculate another
figure of merit: the error rate for an intercept-resend attack. We devise a
strategy for Eve to attack the iterative sifting protocol during its loop phase
and calculate the expected value of the error rate 
\begin{align}
  \label{eq:error-rate}
  E = \frac{1}{l} \sum_{i=1}^l S_i \oplus T_i
\end{align}
that results from this attack. Here, $\oplus$ denotes addition modulo 2 and
$S_i$ and $T_i$ are the random variables of the bits $s_i$ and $t_i$,
respectively, which are generated by the protocol. One would typically expect an
error rate no lower than $25\%$ for an intercept-resend attack \cite{he94},
which is why our results below are alarming.

\subsection{Attack on non-uniform sampling}
\label{sec:attack-on-non-uniform}

Let us first consider an attack on non-uniform sampling, i.e., on the fact that
not every possible value of $\Basis$ is equally likely. It will be a particular
kind of intercept-resend attack, i.e. Eve intercepts all the qubits that Alice
sends to Bob during the loop phase, measures them in some basis and afterwards,
prepares another qubit in the eigenstate associated with her outcome and sends
it to Bob.  Then we will show that the attack strategy leads to an error rate
below $25\%$.

For the error rate calculation, we assume that the $X$- and $Z$-basis is the
same for Alice, Bob and Eve, and that they are mutually unbiased. This way, if
Alice and Bob measure in the same basis, but Eve measures in the other basis,
then Eve introduces an error probability of $1/2$ on this qubit. Moreover, for
simplicity, we make this calculation for the easiest possible choice of
parameters. Consider the iterative sifting iterative sifting protocol
(\Cref{prot:iterative}) with the parameters $k=n=1$. From
Equations~\eqref{eq:equalizing-g} and \eqref{eq:equalizing-p}, we get that the
sampling probabilities in this case are
\begin{align}
  \label{eq:k=n=1-sampling-probs}
  P_\Basis (01) = \frac{p_x^2}{p_x^2 + p_z^2} \,, \quad P_\Basis (10) =
  \frac{p_z^2}{p_x^2 + p_z^2} \,.
\end{align}
These sampling probabilities are uniform for the symmetric case $p_x = p_z$, but
are non-uniform for all other values. In the following, we assume $p_x>1/2$,
which makes the sample $\Basis =01$ more likely than the sample $\Basis =10$. We
choose the following attack: in the first round of the loop phase, she attacks
in the $X$-basis, and in all the other rounds, she attacks in the $Z$-basis. We
choose the attack this way because we know that the first non-discarded
basis agreement is more likely to be an $X$-agreement, whereas the second one is
more likely to be a $Z$-agreement.\footnote{
  The attentive reader may point out that this attack could be improved by
  making Eve's basis choice dependent on the communication between Alice and
  Bob. This is correct, but we intentionally design the attack such that Eve
  ignores Alice and Bob's communication. That allows one to see the effect of
  non-uniform sampling alone and to compare it to attacks on basis
  information leak alone, see \Cref{sec:attack-on-basis,sec:independence}.}

We calculate the expected error rate for this attack in \Cref{app:attack1}. The
black curve in \Cref{fig:attack1} shows $\expect{E}$ as a function of
$p_x$ for this attack.  Notice that $\expect{E}$ falls below $25\%$ for $1/2 <
p_x < 1$, and reaches a minimum of $\expect{E} \approx 22.8\%$ for $p_x \approx
0.73$. 

\begin{figure}[tbp]
  \centering
  \begin{tikzpicture}
    \node[anchor=south west,inner sep=0] at (0,0)
    {\includegraphics[width=8cm]{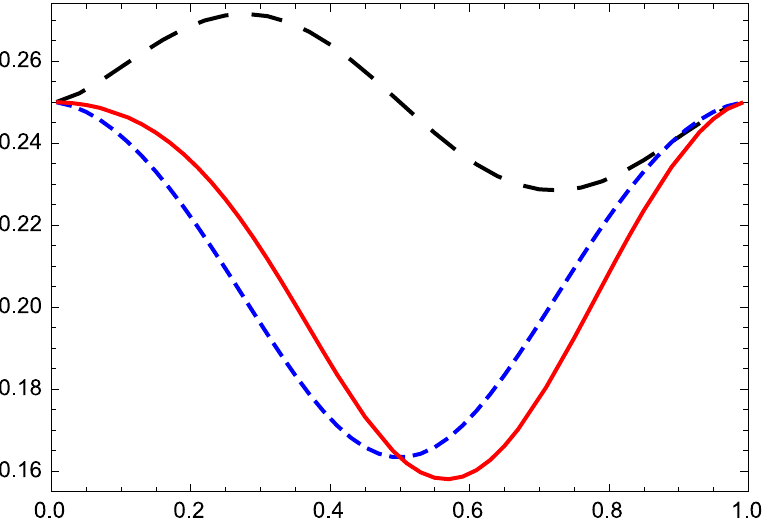}}; 
%    {\includegraphics[width=8cm]{allthreeattacks}}; 
    \node at (4.2,-0.2) {$p_x$};
    \node at (-0.3,2.7) {$\expect{E}$};
  \end{tikzpicture}
  \caption{The error rate for three different eavesdropping attacks
    iterative sifting: (1) attack
    on non-uniform sampling (long-dashed, black curve), (2) attack on
    basis-information leak (short-dashed, blue curve), (3) attack on both
    problems (solid, red curve).
    \label{fig:attack1}}
\end{figure}

The concerned reader might worry that the $25\%$ error rate associated with the
intercept-resend attack was derived under the assumption of equal weighting for
the two bases $X$ and $Z$, whereas it seems here that we choose unequal
weightings.  However, for the protocol under consideration, the a priori
probability distribution $\{p_x, p_z\}$ is not the relevant quantity. Rather,
the fact that $n = k$ in our example ensures that the X and Z bases enter in
with equal weighting.

\subsection{Attack on basis information leak}
\label{sec:attack-on-basis}

We now give an eavesdropping strategy that exploits the basis information leak.
It is an adaptive strategy, in which Eve's action in round $r+1$ depend on the
past communication of the strings $a_1\ldots a_r$ and $b_1 \ldots b_r$. Again,
we consider the simple case of $n=k=1$. To make sure our attack is really
exploiting the basis information leak and not the non-uniform sampling, we set
$p_x=p_z=1/2$. In this case, from Eq.~\eqref{eq:k=n=1-sampling-probs}, the
sampling is uniform:
\begin{align}
  P_\Basis (01) = P_\Basis (10) = \frac{1}{2}.
\end{align}

Before we define Eve's strategy, we want to give some intuition. Suppose that
during the protocol, Eve learns that Alice and Bob just had their first basis
agreement. If this first agreement is a $Z$-agreement, say, what does this mean
for Eve? She knows that the protocol will now remain in the loop phase until
they end up with an $X$-agreement. Suppose that she now decides that she will
measure all the remaining qubits in the $X$-basis. Then, if the next basis
agreement of Alice and Bob is an $X$-agreement, Eve knows the raw key bit
perfectly, and her measurement on that bit did not introduce an error. If the
next basis agreement is a $Z$-agreement, she may introduce an error on that test
bit. However, there will be a chance that Alice and Bob discard this test bit,
because they have a total of two (or more, in the end) $Z$-agreements, and the
protocol forces them to discard all $Z$-agreements except $k=1$ of them. Hence,
learning that the first basis agreement was a $Z$-agreement brings Eve into an
favorable position: she knows that attacking in the $X$-basis for the rest of
the loop phase will necessarily tell her the raw key bit, while she has quite
some chance to remain undetected. 

This intuition inspires the following intercept-resend attack. Before the first
round of the loop phase, Eve flips a fair coin. Let $F$ be the random variable
of the coin flip outcome and let $0$ and $1$ be its possible values. If $F=0$,
then in the first round, Eve attacks in the $X$ basis, and if $F=1$, she attacks
in the $Z$-basis.  In the subsequent rounds, she keeps attacking in that basis
until Alice and Bob first reached a basis agreement. If it is an $X$-agreement
(equivalent to $\Basis =01$), Eve attacks in the $Z$-basis in all remaining
rounds, and if it is a $Z$-agreement (equivalent to $\Basis =10$), she attacks
in the $X$-basis in all remaining rounds.\footnote{
  We let Eve flip a coin in order to make the attack symmetric between $X$ and
  $Z$. This allows for a more meaningful comparison with the attack on
  non-uniform sampling, as this attack here does not exploit non-uniform
  sampling even if $p_x \neq 1/2$, see \Cref{sec:attack-on-non-uniform,sec:independence}.}

We calculate the expected error rate for this attack in the \Cref{app:attack2}.
We find that
\begin{align}
  \expect{E} = \frac{2-\ln2}{8} \approx 16.3\% \,.
\end{align}
Hence, the basis information leak allows Eve to go far below the typical
expected error rate of $25\%$ for intercept-resend attacks \cite{SBP09}. The
blue curve in \Cref{fig:attack1} shows, more generally,
$\expect{E}$ as a function of $p_x$, for this attack.

\subsection{Independence of the two problems}
\label{sec:independence}

Are non-uniform sampling and basis information leak really two different
problems, or is one a consequence of the other? We will argue now that the two
problems are in fact independent. To this end, we describe a protocol that
suffers from non-uniform sampling but not from basis information leak, and
another protocol that suffers from basis information leak but not from
non-uniform sampling.

We have already seen an instance of a protocol that suffers from basis
information leak but not from non-uniform sampling: in
\Cref{sec:attack-on-basis}, we looked at the iterative sifting protocol with
$n=k=1$ and $p_x=p_z=1$, in which case the sampling is uniform. Hence, there was
no exploitation of non-uniform sampling, but the attack strategy exploited basis
information leak.

What about the other way round? Can non-uniform sampling occur without
basis information leak? A closer look at the attack on non-uniform sampling
presented in \Cref{sec:attack-on-non-uniform} hints that this is possible: the
attack strategy works, even though it completely ignores the communication
between Alice and Bob, so it did not make any use of the basis information leak
due to this communication.

A more dramatic example shows clearly that non-uniform sampling can occur
without basis information leak. To this end, we forget about iterative sifting
for a moment and look at a different protocol. Consider a sifting-protocol in
which Alice and Bob agree in advance that they will measure the first $n=100$
qubits in the $X$-basis, and that they will measure the second $k=100$ qubits in
the $Z$-basis, without any communication during the protocol. Of course, there
is no hope for this protocol to be useful for QKD, but it serves well to
demonstrate our point. It leads to a very dramatic form of non-uniform sampling,
because $P_\Basis (0\ldots01\ldots1)=1$ and $P_\Basis (\basis)=0$ for all other
$\basis\in\lk$. If Eve attacks the first $100$ rounds in $X$ and the second
$100$ rounds in $Z$, then she knows the raw key perfectly, without introducing
any error. At the same time, there is no communication between Alice and Bob
during the protocol, so no information about the basis choice is \emph{leaked
during the protocol}. Instead, Eve (who is always assumed to know the protocol)
already had this information before the first round.

Hence, we conclude that the problems of non-uniform sampling and basis
information leak are independent. They just happen to occur simultaneously for
iterative sifting, but they can occur separately in general.  We will see the
independence of the two problems more formally in \Cref{sec:formal-criteria}.

\subsection{Attack on both problems}
\label{sec:attack-on-both}

Since the two problems are independent, it is interesting to devise an attack
that exploits both of them. Let us again consider $k=n=1$ and suppose $p_x>1/2$
to ensure that we have non-uniform sampling. Suppose Eve begins in the same way
as in the attack on non-uniform sampling, measuring in the $X$-basis. However,
as in the attack on the basis-information leak, she makes her attack adaptive by
following the rule that she switches to the $Z$-basis when Alice and Bob
announce that they had an $X$-agreement. If Alice and Bob announce a
$Z$-agreement, Eve keeps attacking in the $X$-basis. 

We give an expression for the error rate induced by this attack in
\Cref{app:attack3}. The red curve in \Cref{fig:attack1} shows a plot of
this error rate as a function of $p_x$. As one can see, the error rate attains
its minimum of $\expect{E} \approx 15.8\%$ for $p_x \approx 0.57$. Hence, this
combined attack on both problems performs much better than the one on
non-uniform sampling alone (with a minimal error rate of $\sim 22.8\%$) and
even better than the attack on the basis information leak alone (with a minimal
error rate of \mbox{$\sim16.3\%$}).

\section{Solutions to the problems}
\label{sec:fix}

How can these problems be avoided? Roughly speaking, we can say that protocols
with iterative sifting are characterized by three properties that make it
efficient: (1) asymmetric basis choice probabilities and quota, $p_x > p_z$ and
$n > k$, (2) single-basis parameter estimation (\Cref{prot:paramest}), (3)
communication in Step \ref{step:intreport} of the loop phase. As we have seen,
it is the communication which causes the basis information leak.

An obvious fix to this problem is to take this communication out of the loop
phase and to postpone it to the final phase, when all the quantum communication
is over. Then there is no classical communication during the loop phase, and
hence, there cannot be a termination condition that depends on classical
communication.  Instead, the number of rounds in the loop phase is set to a
fixed number $m \in \posint$. This number $m$ then becomes a parameter of the
protocol.

Fixing the number of rounds introduces a new issue: there is no guarantee that
the quotas for $X$- and $Z$- agreements will be met after $m$ rounds. In order
to perform the parameter estimation, however, the quotas $n$ and $k$ must be
met. Otherwise, Inequality \eqref{eq:ptail-bound} is not applicable, because the
number of $X$- and $Z$-agreements in the loop phase are random numbers that can
be below $n$ and $k$, respectively. Thus, unless one wants to introduce a new
tail probability analysis as well, there is a strictly positive probability that
Alice and Bob have to abort the sifting protocol because they have too many
basis disagreements. If the sifting scheme is modified in this way, it no longer
involves any communication about the basis choices during its loop phase. Thus,
it is trivially true that there is no basis information leak.

Many protocols in the QKD literature have such a fixed number $m$ of rounds
(which is often denoted by $N$ instead) and an according abort event. It seems
that before iterative sifting was introduced, the sifting procedure was either
not clearly written out in the protocols, or it had such a fixed round number.
For example, in the original BB84 paper \cite{BB84}, the sifting scheme is not
written out in enough detail to say whether this is the case, but the protocol
for which Shor and Preskill showed asymptotic security uses a fixed number of
rounds \cite{SP00}. In addition, they use symmetric basis choice probabilities
and quota, i.e. $p_x=p_z=1/2$ and $k=n$. Alice sends $4n+\delta$ qubits to Bob
(where $\delta$ is a positive but small overhead) without any intermediate
classical communication. Afterwards, they compare their bases and check whether
they have at least $n$ $X$-agreements and at least $n$ $Z$-agreements. If not,
they abort, otherwise they choose $n$ $X$-agreements and $n$ $Z$-agreements and
discard the rest. 

With the remainin $2n$ bits, they continue with parameter estimation. However,
instead of performing SBPE, they choose $n$ bits at random (i.e. with fresh
randomness) for parameter estimation and use the rest for the raw key. Hence,
this protocol shares none of the three properties with iterative sifting that we
listed above.

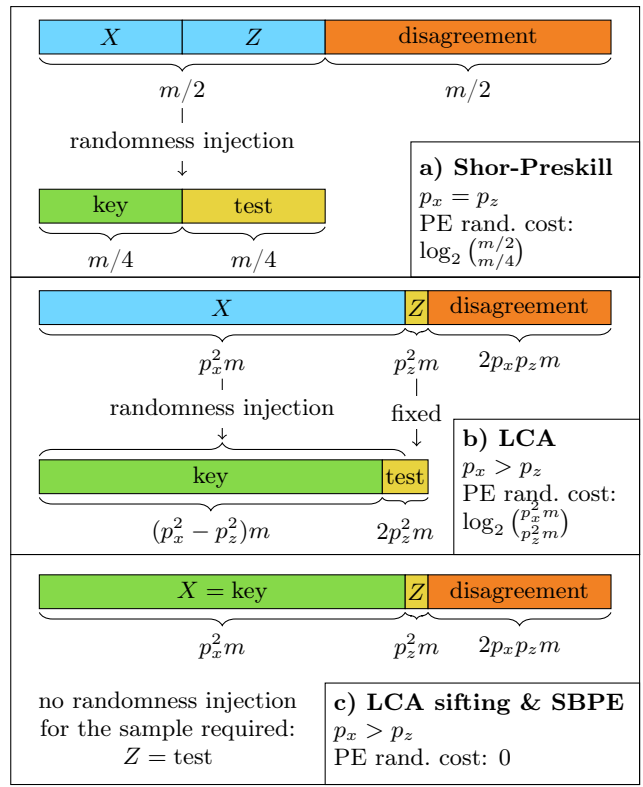
\begin{figure}[htb]
  \centering
  \begin{tikzpicture}[yscale=0.9,xscale=1.9]
    \definecolor{agreement}{HTML}{70D6FF}
    \definecolor{disagreement}{HTML}{F28123}
    \definecolor{test}{HTML}{E8D33F}
    \definecolor{key}{HTML}{86D949}

    %%%%%%%%%%%%%%%%%%%

    \draw (-0.2,-3.3) rectangle (4.2,0.7);

    \draw[fill=agreement] (0,0) rectangle node {$X$} (1,0.5);
    \draw[fill=agreement] (1,0) rectangle node {$Z$} (2,0.5);
    \draw[decorate,decoration={brace,amplitude=5pt,mirror,raise=0.05cm}] 
      (0,0) -- (2,0) node[midway,yshift=-0.5cm] {$m$/2};

    \draw[fill=disagreement] (2,0) rectangle node {disagreement} (4,0.5);
    \draw[decorate,decoration={brace,amplitude=5pt,mirror,raise=0.05cm}] 
      (2,0) -- (4,0) node[midway,yshift=-0.5cm] {$m$/2};

    \draw[->] (1,-0.8) -- node[fill=white] {randomness injection} (1,-1.8);

    \draw[fill=key] (0,-2.5) rectangle node {key} (1,-2);
    \draw[decorate,decoration={brace,amplitude=5pt,mirror,raise=0.05cm}] 
      (0,-2.5) -- (1,-2.5) node[midway,yshift=-0.5cm] {$m$/4};

    \draw[fill=test] (1,-2.5) rectangle node {test} (2,-2);
    \draw[decorate,decoration={brace,amplitude=5pt,mirror,raise=0.05cm}] 
      (1,-2.5) -- (2,-2.5) node[midway,yshift=-0.5cm] {$m$/4};
      
    \draw (2.6,-3.3) rectangle (4.2,-1.3);
    \node[anchor=north west] at (2.6,-1.4) 
      {\parbox{4cm}{\raggedright \textbf{a) Shor-Preskill} \\ $p_x = p_z$ \\
	PE rand.\ cost: \\ $\log_2 {m/2 \choose m/4}$}};

    %%%%%%%%%%%%%%%%%%%%

    \draw (-0.2,-7.4) rectangle (4.2,-3.3);

    \draw[fill=agreement] (0,-4) rectangle node {$X$} (2.56,-3.5);
    \draw[decorate,decoration={brace,amplitude=5pt,mirror,raise=0.05cm}] 
      (0,-4) -- (2.56,-4) node[midway,yshift=-0.5cm] {$p_x^2m$};

    \draw[fill=test] (2.56,-4) rectangle node {$Z$} (2.72,-3.5);
    \draw[decorate,decoration={brace,mirror,raise=0.05cm}] 
      (2.56,-4) -- (2.72,-4) node[midway,yshift=-0.5cm] {$p_z^2m$};

    \draw[fill=disagreement] (2.72,-4) rectangle node {disagreement} (4,-3.5);
    \draw[decorate,decoration={brace,amplitude=5pt,mirror,raise=0.05cm}] 
      (2.72,-4) -- (4,-4) node[midway,yshift=-0.5cm] {$2p_xp_zm$};

    \draw[->] (1.28,-4.8) -- node[fill=white] {randomness injection} (1.28,-5.7);
    \draw[->] (2.64,-4.8) -- node[fill=white] {fixed} (2.64,-5.8);

    \draw[fill=key] (0,-6.5) rectangle node {key} (2.4,-6);
    \draw[decorate,decoration={brace,amplitude=5pt,mirror,raise=0.05cm}] 
      (0,-6.5) -- (2.4,-6.5) node[midway,yshift=-0.5cm] {$(p_x^2-p_z^2)m$};
    \draw[decorate,decoration={brace,amplitude=5pt,raise=0.05cm}] 
      (0,-6) -- (2.56,-6) node[midway,yshift=-0.5cm] {};

    \draw[fill=test] (2.4,-6.5) rectangle node {test} (2.72,-6);
    \draw[decorate,decoration={brace,mirror,raise=0.05cm}] 
      (2.4,-6.5) -- (2.72,-6.5) node[midway,yshift=-0.5cm] {$2p_z^2m$};
      
    \draw (2.9,-7.4) rectangle (4.2,-5.4);
    \node[anchor=north west] at (2.9,-5.4) 
      {\parbox{4cm}{\raggedright \textbf{b) LCA} \\ $p_x > p_z$ \\
	PE rand.\ cost: \\ $\log_2 {p_x^2m \choose p_z^2m}$}};

    %%%%%%%%%%%%%%%%%%%

    \draw (-0.2,-10.8) rectangle (4.2,-7.4);

    \draw[fill=key] (0,-8.2) rectangle node {$X=\text{key}$} (2.56,-7.7);
    \draw[decorate,decoration={brace,amplitude=5pt,mirror,raise=0.05cm}] 
      (0,-8.2) -- (2.56,-8.2) node[midway,yshift=-0.5cm] {$p_x^2m$};

    \draw[fill=test] (2.56,-8.2) rectangle node {$Z$} (2.72,-7.7);
    \draw[decorate,decoration={brace,mirror,raise=0.05cm}] 
      (2.56,-8.2) -- (2.72,-8.2) node[midway,yshift=-0.5cm] {$p_z^2m$};

    \draw[fill=disagreement] (2.72,-8.2) rectangle node {disagreement} (4,-7.7);
    \draw[decorate,decoration={brace,amplitude=5pt,mirror,raise=0.05cm}] 
      (2.72,-8.2) -- (4,-8.2) node[midway,yshift=-0.5cm] {$2p_xp_zm$};
      
    \draw (2,-10.8) rectangle (4.2,-9.3);
    \node[anchor=north west] at (2,-9.3) 
      {\parbox{4cm}{\raggedright \textbf{c) LCA sifting \& SBPE} \\ $p_x > p_z$ \\
	PE rand.\ cost: 0 }};
    \node[anchor=north] at (0.9,-9.3) {\parbox{3.7cm}{no randomness injection for
      the sample required: $Z=\text{test}$}};
  \end{tikzpicture}
  \caption{Comparison of the expected sifting efficiencies. \textbf{a)} In the
  protocol of Shor and Preskill \cite{SP00}, only about a quarter of the
  measurement results end up in the raw key. Moreover, a relatively large amount
  of randomness needs to be injected for the sample choice, which in turn
  increases the length of pre-shared secret key that Alice and Bob use for
  authenticated communication. \textbf{b)} The protocol by Lo, Chau and
  Ardehali \cite{lca05} allows for a bias, $p_x>p_z$. This way, the expected
  fraction of bits with basis disagreements shrinks from one half to $2p_xp_z$.
  The proportions drawn in this figure correspond to $p_x=0.8$. However, it
  still requires randomness injection for the choice of the sample. \textbf{c)}
  If, instead, LCA sifting and SBPE are used, as we suggest, then no randomness
  injection is required for the choice of the sample. Moreover, less bits are
  consumed for parameter estimation in the finite-key regime, resulting in a
  longer raw key.
    \label{fig:eff-bars}}
\end{figure}

This scheme trivially has no basis information leak. In addition, it trivially
samples uniformly, as the whole sample is chosen with fresh randomness that is
injected for that purpose. Thus, it is secure with respect to the concerns
raised in this article. However, it is unnecessarily inefficient: speaking in
expectation values, half of the bits are discarded because they were determined
in different bases, and another quarter of the bits is used for parameter
estimation, leaving only a quarter of the original bits for the raw key, see
\Cref{fig:eff-bars} a).

A similar protocol has recently been suggested by Tomamichel and Leverrier with
a complete proof of its security, modelling all its subroutines \cite{TL15}.
They also use symmetric basis choice probabilities $p_x = p_z$ and randomness
injection for the sample choice. However, they do not use half of the sifted
bits for parameter estimation but less. Their protocol also samples uniformly,
because additional randomness is injected for the choice of the sample.

To increase the efficiency, Lo, Chau and Ardehali (LCA) suggested to use
asymmetric basis choice probabilities and quota, i.e. $p_x > 0$ and $k \neq n$.
As shown in \Cref{fig:eff-bars} b), this decreases the number of expected
disagreements from a value of $m/2$ to a value of $2p_xp_zm$. This is great for
efficiency: for larger block lengths, relatively smaller samples are required to
gain the same confidence that Alice's and Bob's bits are correlated.\footnote{
  This can be seen from inequality \eqref{eq:ptail-bound}, for example.} In 
the limit where $m \to \infty$, the probability $p_x$ can be chosen to be
arbitrarily close to one, and the fraction of data lost due to basis
disagreements converges to zero. We call this protocol \emph{LCA sifting}. It
shares property (1) with iterative sifting.

As for the protocol of Shor-Preskill, Lo Chau and Ardehali did not consider
SBPE. Their parameter estimation also requires some randomness injection for the
choice of the sample: the $Z$-agreements form one half of the sample, and the
other half is chosen at random from the $X$-agreements. Then, not just one but
two error rates are determined, namely on the $X$-part and the $Z$-part of the
sample separately. Only if \emph{both} error rates are below a fixed error
tolerance, they continue the protocol using the rest as the raw key (for
details, see their article \cite{lca05}). The LCA protocol trivially has no
basis information leak. In addition, it turns out that it also samples
uniformly. This is in fact non-trivial, and to our knowledge, it was not proved
in the literature. We fill this gap: the uniform sampling property of the LCA
protocol turns out to be a corollary of \Cref{prop:uniform-sampling} below.
Thus, the LCA protocol could be used as a secure replacement for iterative
sifting.

\begin{table}
  \textbf{LCA Sifting}
  \smallskip
  \hrule
  \begin{description}
    \item[Protocol Parameters] $n,k,m \in \posint$ with $m \geq n+k \in \posint$
      and $p_x, p_z \in [0,1]$ with $p_x + p_z = 1$.
    \item[Output] For $l=n+k$, the outputs are:
      \begin{itemize}
	\item[Alice:] $l$-bit string $\lstring{s} \in \{0,1\}^l$ (measurement
	  outcomes, sifted) or $s=\perp$ (if the protocol aborts),
	\item[Bob:] $l$-bit string $\lstring{t} \in \{0,1\}^l$ (measurement
	  outcomes, sifted) or $t=\perp$ (if the protocol aborts), 
	\item[public:] $l$-bit string $\lstring{\basis} \in \{0,1\}^l$ with
	  $\sum_i \basis_i = k$ (basis choices, sifted), where 0 means $X$-basis
	  and 1 means $Z$-basis, or $\basis = \perp$ (if the protocol aborts).
      \end{itemize}
    \item[Number of rounds] Fixed number $m$ (protocol parameter)
  \end{description}
  \smallskip
  \hrule
  \smallskip
  \textbf{The protocol}
  \begin{description}
    \item[Loop phase] Steps \ref{step:fixed-prep} to \ref{step:fixed-meas} are
      repeated $m$ times (round index $r=1,\ldots,m$). Starting with round
      $r=1$, Alice and Bob do the following:
      \begin{enumerate}[label=Step \arabic*:, ref=\arabic*]
	\item \textnormal{(Preparation):} \label{step:fixed-prep} Alice prepares
	  a qubit pair in a maximally entangled state.
	\item \textnormal{(Channel use):} Alice uses the quantum channel to send
	  one share of the qubit pair to Bob. \label{step:fixed-channel}
	\item \textnormal{(Random bit generation):} Alice and Bob each
	  (independently) generate a random classical bit $a_r$ and $b_r$,
	  respectively, where $0$ is generated with probability $p_x$ and $1$ is
	  generated with probability $p_z$. \label{step:fixed-random-bit}
	\item \textnormal{(Measurement):} \label{step:fixed-meas} Alice measures
	  her share in the $X$-basis (if $a_r = 0$) or in the $Z$-basis (if $a_r
	  = 1$), and stores the outcome in a classical bit $y_r$. Likewise, Bob
	  measures his share in the $X$-basis (if $b_r = 0$) or in the
	  $Z'$-basis (if $b_r = 1$), and stores the outcome in a classical bit
	  $y'_r$.
      \end{enumerate}
    \item[Final phase] The following steps are performed in a single run:
    \begin{enumerate}[label=Step \arabic*':, ref=\arabic*']
      \setcounter{enumi}{4}
      \item \textnormal{(Quota Check):} \label{step:quota-check} Alice and Bob
	determine the sets
	\begin{align*} 
	  &u(m) = \{ r \in \upto{m} \mid a_r = b_r = 0 \} \,, \\ 
	  &v(m) = \{ r \in \upto{m} \mid a_r = b_r = 1 \} 
	\end{align*}
	They check whether the quota condition ($u(m) \geq n$ and
	$v(m) \geq k$) holds. If it holds, they proceed with Step
	\ref{step:fixed-discarding}. Otherwise, they abort.
    \end{enumerate}
    \begin{enumerate}[label=Step \arabic*:, ref=\arabic*]
      \setcounter{enumi}{5}
      \item \textnormal{(Random Discarding):} \label{step:fixed-discarding}
	Alice and Bob choose a subset $u \subseteq u(m)$ of size $k$
	at random, i.e. each subset of size $k$ is equally likely to be chosen.
	Analogously, they choose a subset $v \subseteq v(m)$ of size
	$k$ at random.  Then they discard the bits $a_r$, $b_r$, $y_r$ and
	$y'_r$ for which $r \notin u \cup v$.
      \item \textnormal{(Order-preserving relabeling):} Let $r_i$ be the $i$-th
      element of $u \cup v$. Then Alice determines $\lstring{s} \in
      \{0,1\}^l$, Bob determines $\lstring{t} \in \{0,1\}^l$ and together they
      determine $\lstring{\basis} \in \{0,1\}^l$, where for every $i \in
      \upto{l}$, \begin{align*}
	s_i = y_{r_i} \,, \quad t_i = y'_{r_i} \,, \quad \basis_i = a_{r_i}
	\left( = b_{r_i} \right) \,.
      \end{align*}
      \item \textnormal{(Output):} \label{step:fixed-output} Alice locally
	outputs $\lstring{s}$, Bob locally outputs $\lstring{t}$ and they
	publicly output $\lstring{\basis}$.
  \end{enumerate}
  \end{description}
  \caption{The Lo-Chau-Ardehali (LCA) sifting protocol. 
    \label{prot:fixed-round}}
\end{table}

On the one hand, we suggest using the sifting part of LCA protocol. To be clear
about the details of the sifting scheme, we have written it out in our notation
in \Cref{prot:fixed-round}. On the other hand, we find that the parameter
estimation part of the LCA protocol is unnecessarily complicated and
inefficient: it needs randomness injection for part of the sample choice, and it
requires the estimation of two instead of one error rate. What if, instead, LCA
sifting is followed by SBPE, i.e., only the error rate on the $Z$-agreements is
determined? The critical question is whether this would still lead to uniform
sampling. As the following propositin shows, this is indeed the case.

\begin{prop}
  \label{prop:uniform-sampling}
  The combination of LCA sifting (\Cref{prot:fixed-round}) and SBPE
  (\Cref{prot:paramest}) samples uniformly. In other words, the LCA sifting
  protocol satisfies
  \begin{align}
    P_\Basis(\basis) = P_\Basis(\basis') \quad \Forall \basis, \basis' \in \lk
    \,.
  \end{align}
\end{prop}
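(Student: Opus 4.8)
The plan is to condition on the \emph{type counts} of the $m$ rounds and to show that the output string is uniform on $\lk$ \emph{for each fixed value} of these counts; the claim then follows by averaging. First I would invoke the probability-space model of \Cref{app:fixed-round}: because the number of rounds $m$ is fixed in advance and no termination condition depends on the history, the per-round types are genuinely i.i.d.\ (this is precisely the structural difference from iterative sifting). Concretely, each round $r \in \upto{m}$ is independently an \emph{$X$-agreement} (probability $p_x^2$), a \emph{$Z$-agreement} (probability $p_z^2$), or a \emph{disagreement} (probability $2p_xp_z$). Writing $N_X,N_Z,N_D$ for the numbers of rounds of each type, I would fix an admissible value $(N_X,N_Z,N_D)=(a,b,c)$ with $a\geq n$, $b\geq k$, $a+b+c=m$; outside this admissible set the protocol aborts in Step~\ref{step:quota-check}, so there is nothing to prove.

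The crucial symmetry is that, since the rounds are i.i.d., conditioning on $(a,b,c)$ makes every arrangement of the $a$ $X$-agreements, $b$ $Z$-agreements and $c$ disagreements over the positions $\upto{m}$ equally likely, each such sequence carrying the same weight $p_x^{2a}p_z^{2b}(2p_xp_z)^c$. Next I would fold in the random discarding of Step~\ref{step:fixed-discarding}: the selected $X$-rounds form a uniform $n$-subset of the $a$ $X$-agreements and the selected $Z$-rounds a uniform $k$-subset of the $b$ $Z$-agreements, independently. Refining each position's label into one of five kinds---selected/unselected $X$-agreement, selected/unselected $Z$-agreement, or disagreement, with respective counts $(n,a-n,k,b-k,c)$---a direct computation assigning each refined arrangement the conditional probability $\binom{m}{a,b,c}^{-1}\binom{a}{n}^{-1}\binom{b}{k}^{-1}$ shows that, given $(a,b,c)$, \emph{every} arrangement of these five refined labels over $\upto{m}$ is equally likely.

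The output $\basis$ is read off the $l=n+k$ \emph{selected} positions in increasing order (selected $X\mapsto 0$, selected $Z\mapsto 1$), so it ignores the $m-l$ ``invisible'' positions (unselected or disagreement). The key counting step is then to verify that each target $\basis\in\lk$ arises from the same number of refined arrangements: choose the $l$ selected positions in $\binom{m}{l}$ ways, assign them the unique selected-$X$/selected-$Z$ pattern dictated by $\basis$, and distribute the invisible labels over the remaining $m-l$ positions in $\binom{m-l}{a-n,\,b-k,\,c}$ ways, a count manifestly independent of $\basis$. Multiplying this by the uniform weight of the previous paragraph collapses to $P[\Basis=\basis\mid(a,b,c)]=n!\,k!/l!=1/\abs{\lk}$ for every $\basis\in\lk$. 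Averaging over all admissible $(a,b,c)$, with the abort branch contributing a common factor $P[\text{no abort}]$ identical for every $\basis$, yields $P_\Basis(\basis)=P[\text{no abort}]/\abs{\lk}$, which is the asserted uniformity.

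I expect the main obstacle to be this final marginalization rather than the conditioning: one must argue carefully that deleting the invisible positions does not bias the \emph{relative} order of the surviving selected positions, i.e.\ that inserting the unselected and disagreement rounds contributes a factor independent of $\basis$. The remaining ingredients---multinomial symmetry from the i.i.d.\ structure and uniformity of the refined labelling---are routine bookkeeping, and the uniform-sampling property of the full LCA protocol follows as the stated corollary by the same argument applied to its two-part sample.
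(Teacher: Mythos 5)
Your proof is correct, and it is organized genuinely differently from the paper's. The paper (\Cref{app:fixed-round}) builds the joint probability space $(\Omega_{ABUV},P_{ABUV})$ explicitly, marginalizes to $P_{UV}$ (\Cref{prop:margin})---showing that $P_{UV}(u,v)$ takes one common value on all disjoint pairs $(u,v)\in\mn\times\mk$---and then counts the preimage $\abs{h^{-1}(\basis)}=\binom{m}{n+k}$, which is independent of $\basis$; the product gives the explicit double-sum formula of \Cref{prop:pbasis}, of which uniformity is a corollary. You instead condition on the type counts $(a,b,c)$, use multinomial exchangeability of the i.i.d.\ rounds together with the uniformity and independence of the two discarding choices to make all five-label ``refined arrangements'' conditionally equiprobable, and then count arrangements per target $\basis$; your $\binom{m}{l}$ factor (choice of surviving positions) is exactly the paper's preimage count $\abs{h^{-1}(\basis)}$, since a disjoint pair $(u,v)$ compatible with $\basis$ is determined by its position set $\alpha(u,v)$. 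The bookkeeping you defer checks out: $\binom{m}{l}\binom{m-l}{a-n,\,b-k,\,c}\cdot\binom{m}{a,b,c}^{-1}\binom{a}{n}^{-1}\binom{b}{k}^{-1}=n!\,k!/l!=\binom{l}{k}^{-1}$, and your worry about order bias is resolved precisely because the invisible-label count $\binom{m-l}{a-n,\,b-k,\,c}$ does not depend on which positions are selected. What your route buys: a stronger, cleaner intermediate statement---conditional on the counts and on passing the quota check, $\Basis$ is \emph{exactly} uniform on $\lk$---and hence the closed form $P_\Basis(\basis)=(1-\pabort)\binom{l}{k}^{-1}$, which the paper never states in this form (it only remarks in \Cref{sec:formal-criteria} that the sampling probabilities sum to $1-\pabort$). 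What the paper's heavier route buys: the explicit expressions \eqref{eq:result-abort} and \eqref{eq:result-sample} for $\pabort$ and $P_\Basis(\basis)$ as functions of $(n,k,m,p_x,p_z)$, which are needed as performance parameters elsewhere in the paper, and a fully formalized probability-space model, which is itself part of the paper's methodological point.
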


In constrast to protocols that use randomness injection for the sample choice,
the uniform sampling property is non-trivial to prove for LCA sifting with SBPE.
We prove \Cref{prop:uniform-sampling} in \Cref{app:fixed-round} (see the
corollary of \Cref{prop:pbasis}). This shows that the combination of LCA sifting
and SBPE is secure and can therefore be used to replace iterative
sifting.\footnote{
  This also establishes uniform sampling for the whole LCA protocol (with the
  parameter estimation protocol with randomness injection instead of SBPE). This
  is because the parameter estimation protocol of LCA can now be seen as a
  two-stage random sampling without replacement, where in both stages, the
  sampling probabilities are uniform. This leads to overall uniform sampling.
}
For protocols that use these subroutines, the abort probability $\pabort$ of the
sifting step is important because it affects the key rate of the QKD protocol.
We calculate $\pabort$ in \Cref{app:fixed-round} as well (\Cref{prop:pbasis}).

This is good news for efficiency, as no randomness injection is required for the
choice of the sample. Since this random sample choice would need to be
communicated between Alice and Bob in an authenticated way, this also uses up
less secret key from the initial key pool (see \cite{FMC10} for a discussion of
the key cost of classical postprocessing). One can see in \Cref{fig:eff-bars}
that in the finite-key regime, this also leads to a larger raw key. Together
with \Cref{prop:tail}, which we will discuss in \Cref{sec:formal-criteria}, this
also establishes security of the protocol in the finite-key regime. In contrast,
the original work of LCA \cite{lca05} only establishes asymptotic security.

\begin{sug}
  Use LCA sifting (\Cref{prot:fixed-round}) and SBPE (\Cref{prot:paramest}).
\end{sug}

Let us briefly remark about the efficiency LCA sifting in
comparison to that of iterative sifting. They differ in that LCA sifting has no
communication during the loop phase, see property (3) above. The question is
whether this necessarily means that the efficiency is strongly reduced in
comparison with iterative sifting.

We define the efficiency $\eta$ of a sifting protocol as 
\begin{align}
  \label{eq:efficiency}
  \eta = \frac{R}{M} \,,
\end{align}
where $R$ is the random variable of the number of rounds that are kept after
sifting and $M$ is the random variable of the total number of rounds performed
in the loop phase of the protocol. We explain this in more detail in
\Cref{app:efficiency}. A plot of the expected efficiency for iterative sifting
and for LCA sifting is shown in \Cref{fig:efficiency} for the special case of
symmetric probabilities $p_x=p_z$ and identical quota $n=k$ (this
  special case is computationally much easier to calculate; for other choices,
the computation becomes very hard). We find that iterative sifting is more
efficient, as expected, but the difference between the two efficiencies becomes
insignificant for practically relevant quota sizes $n$ and $k$. 

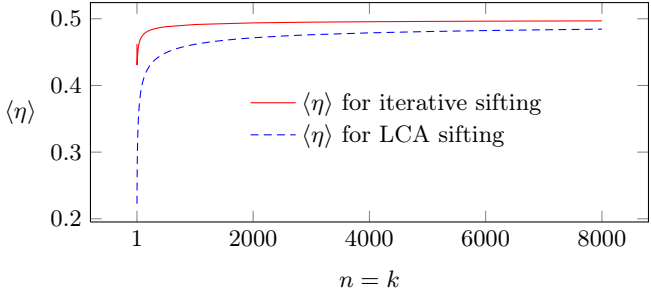
\begin{figure}[tbp]
  \centering
\begin{tikzpicture}
  \begin{axis}[
	  every axis legend/.append style={nodes={right}},
	  width=9cm,
	  height=4.5cm,
	  xtick={1,2000,4000,6000,8000},
	  xticklabels={1,2000,4000,6000,8000},
	  xlabel={$n=k$},
	  ylabel={$\langle\eta\rangle$},
	  ylabel style={rotate=-90, xshift=0.3cm},
	  legend entries={ {$\langle\eta\rangle$ for iterative sifting},
	  {$\langle\eta\rangle$ for LCA sifting}},
	    legend style={draw=none, at={(axis cs:4500, 0.35)}, anchor=center},
  ]
    \addplot[no markers, color=red] table[col sep=comma] {expEffI.csv};
    \addplot[no markers, color=blue, style=densely dashed] table[col sep=comma]
      {expEffF.csv};
  \end{axis}
\end{tikzpicture}
  \caption{Efficiency comparison of the two sifting protocols. The plots show
    lower bounds on the expected efficiencies for symmetric probabilities
    $p_x=p_z=1/2$ and for identical quotas $n=k$. The solid red curve shows a
    lower bound on the expected value of the efficiency for the iterative
    sifting protocol as a function of $n=k$. For the LCA sifting
    protocol, an optimization over the additional parameter $m$ has been made
    for each value of $n=k$.
    \label{fig:efficiency}}
\end{figure}

\section{Formal criteria for good sifting}
\label{sec:formal-criteria}

In \Cref{sec:problems}, we have seen that iterative sifting leads to problems.
In \Cref{sec:fix}, we showed that these problems can be avoided by using
LCA sifting (\Cref{prot:fixed-round}) and SBPE (\Cref{prot:paramest}). In this
section, we give a more complete answer to the question of how these
problems can be avoided by presenting two simple formal criteria that are
sufficient for a sifting protocol to lead to a correct parameter estimation.
More precisely, we describe two formal properties of the state produced by a
sifting protocol which guarantee that if the protocol is followed by SBPE
(\Cref{prot:paramest}), then Inequality \eqref{eq:ptail-bound} holds. As
indicated in the introduction, the two properties take the form of equalities,
see \Cref{eq:uniform-formal,eq:uncorr-formal}. We prove the sufficiency of these
two criteria by deriving \eqref{eq:ptail-bound} from them in \Cref{prop:tail}
below.

In order to state the two criteria and the random variable $\Lambdakey$ in
\eqref{eq:ptail-bound} formally, we need to define a certain kind of quantum
state $\rho_{A^lB^l\Basis^l}$ associated with a sifting protocol. To explain
what this state is, we explain what the state $\rho_{A^lB^l\Basis^l}$ is like
for LCA sifting. It is a state that is best described in a
variation of the protocol. Suppose that Alice and Bob run the protocol, but they
skip the measurement in every round.  Instead, they keep each qubit system in
their lab without modifying its state.  With current technology, this is
practically impossible, but since $\rho_{A^lB^l\Basis^l}$ is a purely
mathematical construct, we do not worry about the technical feasibility. Notice
that Alice and Bob still make basis choices, compare them and discard
rounds---they just do not actually perform the measurements. Let us compare the
output of this modified protocol with the output of the original protocol:
\renewcommand{\arraystretch}{1.3}
\begin{center}
  \begin{tabular}{rll} 
    & \quad original protocol & \quad modified protocol \\
    Alice: \quad & \quad $l$ bits $s=\lstring{s}$ & \quad $l$-qubit state
      $\rho_{A^l}$ \\
    Bob: \quad & \quad $l$ bits $t=\lstring{t}$ & \quad $l$-qubit state
      $\rho_{B^l}$ \\
    public: \quad & \quad $l$ bits $\basis=\lstring{\basis}$ & \quad $l$ bits
      $\basis=\lstring{\basis}$ 
  \end{tabular}
\end{center}
Hence, if we model the classical bit string $\basis$ as the state of a classical
register $\Basis^l$, we can say that the output of the modified protocol is a
quantum-quantum-classical (QQC) state $\rho_{A^lB^l\Basis^l}$. More generally,
the state $\rho_{A^lB^l\Basis^l}$ associated with a sifting protocol is its
output state in the case where all the measurements are skipped. 

This state still carries all the probabilistic information of the original
protocol. To see this, let $\bb{X} = \{\bb{X}_0, \bb{X}_1 \}$ and $\bb{Z} =
\{\bb{Z}_0, \bb{Z}_1\}$ be the POVMs describing Alice's $X$- and
$Z$-measurement, let $\bb{X}' = \{\bb{X}'_0, \bb{X}'_1 \}$ and $\bb{Z}' =
\{\bb{Z}'_0, \bb{Z}'_1\}$ be the POVMs describing Bob's $X$- and
$Z$-measurement, and let $\bb{M} = \{\bb{M}_0, \bb{M}_1\}$ be the projective
measurement on $\Basis$ with respect to which the state of the register $\Basis$
is diagonal. Define the operators
\begin{align}
  \begin{array}{llll}
    \bb{O}_0 = \bb{X}_0 \,, & \bb{O}_1 = \bb{X}_1 \,, & \bb{O}_2 = \bb{Z}_0 \,,
      & \bb{O}_3 = \bb{Z}_1 \,, \\
    \bb{O}'_0 = \bb{X}'_0 \,, & \bb{O}'_1 = \bb{X}'_1 \,, & \bb{O}'_2 =
      \bb{Z}'_0 \,, & \bb{O}'_3 = \bb{Z}'_1 \,.
  \end{array}
\end{align}
Then, the probability distribution over the output of the protocol is
\begin{align}
  P_{ST\Basis}(s,t,\basis) = \tr(\Pi(s,t,\basis) \rho_{(AB\Basis)^l}) \,,
\end{align}
where $\rho_{(AB\Basis)^l}$ is the same state as $\rho_{A^lB^l\Basis^l}$, but
with the registers reordered in the obvious way, and where
\begin{align}
  \Pi(s,t,\basis) = \bigotimes_{i=1}^l \left( \bb{O}_{2\basis_i+s_i} \otimes
  \bb{O}'_{2\basis_i+t_i} \otimes \bb{M}_{\basis_i} \right) \,.
\end{align}
With the state $\rho_{A^lB^l\Basis^l}$ associated with a sifting protocol at
hand, it is easy to define the random variable $\Lambdakey$ associated with the
protocol. The relevant probability space is the discrete probability space
$\left( \Omega_{ZZ'\Basis}, P_{ZZ'\Basis} \right)$, where $\Omega_{ZZ'\Basis}$
is the sample space
\begin{align}
  \label{eq:sample-space}
  \Omega_{ZZ'\Basis} &= \lcube\times\lcube\times\lk
\end{align}
and where $P_{ZZ'\Basis}$ is the probability mass function
\begin{align}
  \label{eq:pmf}
  \begin{array}{lccl}
    P_{ZZ'\Basis}: & \Omega_{ZZ'\Basis} & \rightarrow & [0,1] \\
    & (z,z',\basis) & \mapsto & \tr\Bigg( \left( \bigotimes_{i=1}^l \bb{Z}_{z_i}
    \right) \otimes \left( \bigotimes_{i=1}^l \bb{Z}'_{z'_i} \right) \\
    & & &\otimes
    \left( \bigotimes_{i=1}^l \bb{M}_{\basis_i} \right) \rho_{A^lB^l\Basis^l}
    \Bigg) \,.
  \end{array}
\end{align}
The probability mass function $P_{ZZ'\Basis}$ corresponds to a
\emph{Gedankenexperiment} in which Alice and Bob measure \emph{all} qubits in
the $Z$-basis.

Now we are able to formally say what the random variable $\Lambdakey$ of a
sifting protocol is. Let $\rho_{A^lB^l\Basis^l}$ be the state associated with
the sifting protocol, let $(\Omega_{ZZ'\Basis}, P_{ZZ'\Basis})$ be the
probability space as in \Cref{eq:sample-space,eq:pmf}. Then $\Lambdakey$ is the
random variable
\begin{align}
  \label{eq:lkey}
  \begin{array}{lccl}
    \Lambdakey: & \Omega_{ZZ'\Basis} & \rightarrow & [0,1] \\
    & (z,z',\basis) & \mapsto & {\displaystyle \frac{1}{n} \sum_{i=1}^N
    (1-\basis_i) (z \oplus z') } \,,
  \end{array}
\end{align}
which is the \emph{key bit error rate}. Analogously, we have the \emph{test bit
error rate}
\begin{align}
  \label{eq:ltest}
  \begin{array}{lccl}
    \Lambdatest: & \Omega_{ZZ'\Basis} & \rightarrow & [0,1] \\
    & (z,z',\basis) & \mapsto & {\displaystyle \frac{1}{k} \sum_{i=1}^l
    \basis_i (z \oplus z') } \,.
  \end{array}
\end{align}
This allows us to formally define the tail probability $\ptail$. We define it
via the same formula as in \eqref{eq:ptail-first}, which we repeat here for the
reader's convenience:
\begin{align}
  \label{eq:ptail}
  \ptail(\mu) = P[\Lambdakey \geq \Lambdatest + \mu \mid \Lambdatest \leq \qtol]
  \,. \tag{\ref{eq:ptail-first}}
\end{align}
The difference is that now, we have formally defined all the components of the
equality. The following proposition states the tail probability bound in a
formal way.

\begin{prop}[Tail probability estimate]
  \label{prop:tail}
  Let $\rho_{A^lB^l\Basis^l}$ be a density-operator of a system $A^l B^l
  \Basis^l$ where $A$ and $B$ are qubit systems and $\Basis$ is a classical
  system, let $\{\bb{Z}_0, \bb{Z}_1\}$ and $\{\bb{Z}'_0, \bb{Z}'_1\}$ be POVMs
  on the quantum systems $A$ and $B$, respectively, let $\{\bb{M}_0,\bb{M}_1\}$
  be the read-out measurement of the classical system $\Basis$, let
  $\Lambdakey$, $\Lambdatest$ be random variables on the discrete probability
  space $(\Omega_{ZZ'\Basis}, P_{ZZ'\Basis})$ as defined in
  \Cref{eq:sample-space,eq:pmf,eq:lkey,eq:ltest} and let $\ptail$ be as in
  \Cref{eq:ptail}. Let $\rho_{A^lB^l}$ and $\rho_{\Basis^l}$ denote the
  according reduced states of $\rho_{A^lB^l\Basis^l}$ and $P_{\Basis}$ denote
  the according marginal of $P_{ZZ'\Basis}$. If the two conditions
  \begin{align}
    &P_{\Basis}(\basis) = P_{\Basis}(\basis') \quad \Forall \basis, \basis' \in
      \lk \quad \text{and} \tag{\ref{eq:uniform-formal}} \\
    &\rho_{A^lB^l\Basis^l} = \rho_{A^lB^l} \otimes \rho_{\Basis^l}
      \tag{\ref{eq:uncorr-formal}}
  \end{align}
  hold, then
  \begin{align}
    \label{eq:ptail-bound2}
    \ptail(\mu) &\leq \frac{\exp\left( -2\frac{kn}{l}\frac{k}{k+1}\mu^2
  \right)}{\ppass} \,, \tag{\ref{eq:ptail-bound}}
  \end{align}
  where
  \begin{align}
    \ppass = P[\Lambdatest \leq \qtol] \,. \tag{\ref{eq:ppass}}
  \end{align}
\end{prop}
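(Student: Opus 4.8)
The plan is to reduce the quantum statement to a purely classical statement about sampling without replacement, and then to invoke a Serfling-type tail bound. First I would use the factorization condition \eqref{eq:uncorr-formal}. Since the measurement operators in \eqref{eq:pmf} act on the three tensor factors $A^l$, $B^l$ and $\Basis^l$ separately, substituting $\rho_{A^lB^l\Basis^l}=\rho_{A^lB^l}\otimes\rho_{\Basis^l}$ makes the trace split, giving
\begin{align}
  P_{ZZ'\Basis}(z,z',\basis) = P_{ZZ'}(z,z')\,P_{\Basis}(\basis) \,,
\end{align}
where $P_{ZZ'}$ is the corresponding marginal on the $A^lB^l$-outcomes. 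Hence the error pattern $e=(e_i)_{i=1}^l$ with $e_i=z_i\oplus z_i'$, which depends only on the $A^lB^l$-outcome, is statistically independent of $\Basis$. Combined with \eqref{eq:uniform-formal}, which forces $\Basis$ to be uniform on $\lk$, this means that, conditioned on any fixed error pattern $e$, the set of test positions $\{i : \basis_i = 1\}$ is a uniformly random $k$-element subset of $\upto{l}$, drawn independently of $e$.

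The second step is to recognize the resulting conditional experiment. Fix $e$, let $w=\sum_{i=1}^l e_i$ and $\bar w = w/l$. Then $\Lambdatest$ is the average of $e$ over the random test set (size $k$) and $\Lambdakey$ is the average over its complement, the key set (size $n$); since the complement of a uniform $k$-subset is a uniform $n$-subset, $\Lambdakey$ is a sample mean over a uniformly random $n$-subset drawn without replacement. The identity $w=k\Lambdatest+n\Lambdakey$ gives $\Lambdakey-\bar w=\tfrac{k}{l}(\Lambdakey-\Lambdatest)$, so the event $\{\Lambdakey\geq\Lambdatest+\mu\}$ is exactly the upper-tail event $\{\Lambdakey-\bar w\geq\tfrac{k}{l}\mu\}$.

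Now I would apply Serfling's inequality for sampling without replacement to the key set, viewed as a sample of size $n$ from the population $e\in\lcube$ of range $[0,1]$ and mean $\bar w$. With deviation $s=\tfrac{k}{l}\mu$ and Serfling factor $1-(n-1)/l=(k+1)/l$, this yields
\begin{align}
  P[\Lambdakey-\bar w\geq s \mid e] \leq \exp\!\left(-\frac{2nls^2}{k+1}\right)
    = \exp\!\left(-2\frac{kn}{l}\frac{k}{k+1}\mu^2\right) \,.
\end{align}
Crucially this bound is independent of $e$, so averaging over $e$ preserves it and gives the same bound on $P[\Lambdakey\geq\Lambdatest+\mu]$. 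The anticipated main obstacle is precisely this step: one must view the \emph{key} set (size $n$), not the test set (size $k$), as the Serfling sample, since it is this choice that produces the asymmetric factor $\tfrac{k}{k+1}$ appearing in \eqref{eq:ptail-bound} rather than a $\tfrac{n}{n+1}$.

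Finally I would account for the conditioning on $\{\Lambdatest\leq\qtol\}$ in the definition \eqref{eq:ptail} of $\ptail$. Writing the conditional probability as a ratio and discarding the event $\{\Lambdatest\leq\qtol\}$ in the numerator via $P[A\cap B]\leq P[A]$,
\begin{align}
  \ptail(\mu) = \frac{P[\Lambdakey\geq\Lambdatest+\mu,\ \Lambdatest\leq\qtol]}{\ppass}
    \leq \frac{P[\Lambdakey\geq\Lambdatest+\mu]}{\ppass} \,,
\end{align}
and inserting the bound from the previous step gives exactly \eqref{eq:ptail-bound}.
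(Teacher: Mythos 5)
Your proposal is correct and takes essentially the same route as the paper's own proof: both drop the conditioning on $\{\Lambdatest \leq \qtol\}$ via $P[A \mid B] \leq P[A]/P[B]$, use the two conditions to factor $P_{ZZ'\Basis}$ so that the test/key split becomes a uniform sampling without replacement independent of the error pattern, rewrite the event $\{\Lambdakey \geq \Lambdatest + \mu\}$ as a deviation of $\Lambdakey$ from the total error rate by $\tfrac{k}{l}\mu$, and apply Serfling's bound with the key positions (size $n$) as the sample, which produces the factor $\tfrac{k}{k+1}$. The only cosmetic difference is that you condition on the full error pattern $e$ while the paper conditions on the total error number $\sigmatot$ and exhibits the conditional law of $\Sigmakey$ explicitly as a hypergeometric distribution before invoking Serfling; these are equivalent.
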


We prove \Cref{prop:tail} in \Cref{app:formal-criteria}. The formulation of
\Cref{prop:tail} allows us to see the formal requirements on a sifting protocol
to lead to a correct parameter estimation when followed by SBPE:
Condition \eqref{eq:uniform-formal} is exactly the statement that the sampling
probability does not depend on the sample, i.e. the protocol leads to uniform
sampling. There is one thing that we want to point out here: while it is
sufficient for the sampling probabilities to be the inverse of the number of
possible samples, i.e.
\begin{align}
  P_{\Basis}(\basis) = \frac{1}{\left\vert \lk \right\vert} = {l \choose k}^{-1}
    \quad \Forall \basis \in \lk \,, \end{align}
condition \eqref{eq:uniform-formal} is strictly weaker. In the case where there
is a non-zero probability that the protocol aborts during the sifting phase (as
it is the case for LCA sifting), the sampling probabilities do not
add up to 1 but rather to $1 - p_\text{abort}$, where $p_\text{abort}$ is the
probability that the protocol aborts during the sifting phase. 

Condition \eqref{eq:uncorr-formal} is the formal statement of what it means for
a protocol that the basis choice register is uncorrelated with Alice's and Bob's
qubits before measuring. \Cref{prop:tail} states that if these two conditions
are satisfied, then the correlation test of the SBPE protocol leads to
the right conclusion. Hence, these are the two conditions that a sifting
protocol needs to satisfy in order to be a good sifting protocol.

We point out that the digression to a classical probability space,
\Cref{eq:sample-space,eq:pmf,eq:lkey,eq:ltest,eq:ptail}, is a mere change of
notation. However, the fact that it is possible to express \Cref{prop:tail} in
terms of a classical probability space shows that this part of a QKD security
analysis is purely classical.

\section{Conclusion}
\label{sec:conclusion}

In recent years QKD has emerged as a commercial technology, with the prospect of
global QKD networks on the horizon \cite{SBP09}. All QKD implementations have
finite size, and yet only recently has finite-key analysis approached
mathematical rigor \cite{TLGR12, LCW14,Hayashi2011, Hayashi2014, CXC14,LPTRG13,
Furrer2011, Leverrier2014}. In this work, we showed that further
modifications of the protocols and/or their analysis are needed to make
finite-key analysis rigorous.

We pointed out that \textit{sifting}---a stage of QKD that is often overlooked
with respect to security analysis---is actually crucial for security. A
carelessly designed sifting subroutine can jeopardize the security of an
otherwise reliable protocol. We found that \emph{iterative sifting}, a sifting
protocol that has both been proposed theoretically
\cite{TLGR12,LPTRG13,CXC14,LCW14} and been implemented experimentally
\cite{BCLVV13}, violates two assumptions in the typical security analysis. We
showed how the violation of these assumptions can be exploited by an
eavesdropper, leading to intercept-resend attacks with unexpectedly low error
rates (see Fig.~\ref{fig:attack1}). 

We presented an alternative scheme, LCA sifting and SBPE, and proved
that it solves the two problems. We derived an expression for its abort
probability and therefore provided everything that is needed for
its future use as a subroutine. We argued that this scheme is more economical
and efficient than some other other previously proposed protocols, as it does not
require an additional random seed for the sample and at the same time allows for
asymmetric basis choice probabilities. As we explained, the latter allows for a
significantly higher sifting efficiency \cite{lca05}.

We gave the precise mathematical form of the two assumptions that are needed for
secure sifting in Eqs.~\eqref{eq:uniform-formal} and \eqref{eq:uncorr-formal}.
In doing so, we have provided a guide for the construction of future protocols:
when designing a sifting protocol, one just needs to check these two conditions
in order to make sure that the usual analysis of the parameter estimation based
on Inequality~\eqref{eq:ptail-bound} is correct and the protocol is secure. This
may require a mathematical model for the state $\rho_{A^lB^l\Basis^l}$ or for
the probabilities of the output strings $\lstring{\basis}$, $\lstring{s}$ and
$\lstring{t}$ generated by the sifting protocol. Such models are rarely provided
in the literature. In the case of iterative sifting, the absence of such a model
to check the desired properties has led to a wrong security analysis.

This points to a deeper problem in QKD security analysis: there is often a gap
between the physical protocols that are written down as instructions for Alice
and Bob and the mathematics of the security proof. This is not a purely pedantic
issue, but rather a very practical one which can be exploited by eavesdroppers.
In the future, we advocate that each step in the physical QKD protocol be
explicitly mathematically modeled. In particular, we emphasize that sifting
protocols must be proved to (rather than assumed to) satisfy the desired
assumptions of the analysis. We believe our work will ultimately inspire more
complete security proofs of finite-size QKD.

\section*{Acknowledgments}

We would like to thank Marco Tomamichel, David Elkouss and J{\k{e}}drzej
Kaniewski for insightful discussions. CP and SW are funded by Singapore's MOE
Tier 3A Grant MOE2012-T3-1-009, and STW, Netherlands. PJC and NL are supported
by Industry Canada, Sandia National Laboratories, NSERC Discovery Grant, and
Ontario Research Fund (ORF). 

\bibliographystyle{arxiv}
\bibliography{ArxivPaper}

\onecolumngrid
\pagebreak

\section*{Appendix}

\subsection*{Conventions}

We make some notational conventions for the appendix (in addition to the ones we
made in \Cref{sec:notation}). For the iterative sifting protocol as in
\Cref{prot:iterative}, we denote by $N_x$ the random variable of the number of
$X$-agreements, and analogously, $N_z$ and $N_d$ are the random variables of the
number of $Z$-agreements and disagreements in the loop phase, respectively. We
write events as logical statements of the random variables, e.g. $\Basis=110
\land N_x \geq 2$ is the event in which the protocol runs with more than two
$X$-agreements and produces the output $\basis=110$, and its probability is
given by $P[\Basis=110 \land N_x \geq 2]$. In cases where all involved random
variables have fixed values, we occasionally write expressions in terms of
probability mass functions instead of in terms of probability weights of events
(as we have done it in the main article), e.g. we write
\begin{align}
  P_{\Basis N_xN_zN_d}(\basis,n_x,n_z,n_d) := P[\Basis =\basis, N_x=n_x,
  N_z=n_z, N_d=n_d] \,. 
\end{align}
In cases with inequalities, it is however shorter to use the event notation,
e.g.
\begin{align}
  P[A_1 \neq B_1] = P_{A_1B_1}(0,1) + P_{A_1B_1}(1,0) \,.
\end{align}
We will use whatever notation we find more appropriate in each case.

\appendix

\section{Sampling probability calculation for iterative sifting}
\label{app:sample-prob}

In this appendix, we prove \Cref{prop:sample-prob}, i.e. we calculate the
sampling probabilities $P_\Basis(\basis)$ for iterative sifting with $n=1$ and
$k=2$ and find $P_\Basis (110) = g_z^2$ and $P_\Basis(101) = P_\Basis(011) =
(1-g_z^2)/2$, where $g_z = p_z^2 / (p_z^2 + p_x^2)$.

\begin{proof}[Proof of \Cref{prop:sample-prob}]
  We first write out the sequence of equalities that lead to the claim. We
  explain each equality below. The sequence of equalities looks as follows:
  \begin{align}
    P_\Basis (110) 
    &= \sum_{n_x=1}^\infty \sum_{n_z=2}^\infty \sum_{n_d=0}^\infty
      P_{\Basis N_xN_zN_d}(110,n_x,n_z,n_d) \label{eq:marginal} \\
    &= \sum_{n_z=2}^\infty \sum_{n_d=0}^\infty P_{\Basis
      N_xN_zN_d}(110,1,n_z,n_d) \label{eq:N_x=1} \\
    &= \sum_{n_z=2}^\infty \sum_{n_d=0}^\infty p_x^2 (p_z^2)^k (2p_xp_z)^d
      {n_z+n_d \choose n_d} \label{eq:insert-probs} \\
    &= g_z^2 \,, \quad \text{where } g_z = \frac{p_z^2}{p_z^2 + p_x^2} \,.
      \label{eq:evaluated}
  \end{align}
  Equation \eqref{eq:marginal} is just stating that $P_\Basis $ is the marginal
  of $P_{\Basis N_xN_zN_d}$. The ranges of the sums can be explained as follows.
  The iterative sifting protocol always runs until there have been at least $n$
  $x$-agreements and at least $k$ $z$-agreements. Therefore,
  \begin{align}
    P_{\Basis N_xN_zN_d}(\theta,n_x,n_z,n_d) = 0 \quad \text{if } n_x < n
    \text{ or } n_z < k \,.
  \end{align}
  In our case, $n=1$ and $k=2$, hence the limits of the sums.
  
  Equation \eqref{eq:N_x=1} follows from
  \begin{align}
    \label{eq:vanish}
    P_{\Basis N_xN_zN_d}(110,n_x,n_z,n_d) = 0 \quad \text{for } n_x \geq 2 \,.
  \end{align}
  One can see \eqref{eq:vanish} as follows: if $N_x \geq 2$, then necessarily
  $N_z=2$, because $N_x>n \land N_z>k$ is impossible in iterative sifting (the
  loop phase of the protocol is terminated as soon as both quota are met).  This
  means that during the random discarding, no $Z$-agreement gets discarded.
  Moreover, if $N_x \geq 2$, then the last round of the loop phase must be a
  $Z$-agreement. Since this $Z$-agreement is not discarded, we have that $\Basis
  $ must necessarily end in a $1$ if $N_x \geq 2$, so $\Basis =110$ is
  impossible in that case. 
  
  To see why Equation~\eqref{eq:insert-probs} holds, note that the event
  \begin{align}
    \Basis =110 \land N_x=1 \land N_z=n_z \land N_d=n_d
  \end{align}
  consists of all runs of the protocol in which one $X$-agreement, $n_z$
  $Z$-agreements and $n_d$ disagreements occurred, and where the $X$-agreement
  was the last round of the loop phase. This is because in every such run, one
  necessarily ends up with $\Basis =110$, and if $\Basis =110$, then the last
  round of the loop phase must be an $X$-agreement. There are ${n_z+n_d \choose
  n_d}$ such runs, and each of them has the probability $p_x^2 (p_z^2)^{n_z}
  (2p_xp_z)^{n_d}$, and therefore
  \begin{align}
    P_{\Basis N_xN_zN_d}(110,1,n_z,n_d) = {n_z+n_d \choose n_d}
    p_x^2 (p_z^2)^{n_z} (2p_xp_z)^{n_d} \,.
  \end{align}
  This explains Equation~\eqref{eq:insert-probs}. Finally, equation
  \eqref{eq:evaluated} is just an evaluation of the expression in the line
  above. This shows $P_\Basis (110) = g_z^2$. 
  
  It remains to be shown that $P_\Basis(101) = P_\Basis(011) = (1-g_z^2)/2$. In
  analogy to the above, it holds that
  \begin{align}
    P_\Basis (101) 
    &= \sum_{n_x=1}^\infty \sum_{n_z=2}^\infty \sum_{n_d=0}^\infty
      P_{\Basis N_xN_zN_d}(101,n_x,n_z,n_d) \label{eq:marginal2} \\
    &= \sum_{n_x=2}^\infty \sum_{n_d=0}^\infty P_{\Basis
      N_xN_zN_d}(101,n_x,2,n_d) \label{eq:fix-x} \,.
  \end{align}
  Equation \eqref{eq:marginal2} is, in analogy to Equation \eqref{eq:marginal},
  stating that $P_\Basis $ is the marginal of $P_{\Basis N_xN_zN_d}$, and the
  same argumentation for the limits of the sums applies. Equation
  \eqref{eq:fix-x} is explained by a similar reasoning as for
  Equation~\eqref{eq:N_x=1}: it follows from
  \begin{align}
    \label{eq:vanish2}
    P_{\Basis N_xN_zN_d}(101,n_x,n_z,n_d) = 0 \quad \text{for } n_z \geq 3 \,.
  \end{align}
  For Equation~\eqref{eq:vanish2}, note that if $N_z\geq3$, then $N_x=1$
  because $N_x>n \land N_z>k$ is impossible in iterative sifting. Thus, no
  $x$-agreement gets discarded. Moreover, if $N_z\geq3$, then the last round of
  the loop phase must be an $x$-agreement. Since this $x$-agreement is not
  discarded, $\Basis$ necessarily ends in a $0$ if $N_z\geq3$, so $\Basis=101$
  is impossible in this case. 

  Analogously, it holds that
  \begin{align}
    P_\Basis (011) 
    &= \sum_{n_x=1}^\infty \sum_{n_z=2}^\infty \sum_{n_d=0}^\infty
      P_{\Basis N_xN_zN_d}(011,n_x,n_z,n_d) \label{eq:marginal3} \\
    &= \sum_{n_x=2}^\infty \sum_{n_d=0}^\infty P_{\Basis
      N_xN_zN_d}(011,n_x,2,n_d) \label{eq:fix-x3} \,.
  \end{align}
  The next step is to realize that for every $n_x\geq2$ and for every $n_d \in
  \{0,1,2,\ldots\}$, it holds that
  \begin{align}
    P_{\Basis N_xN_zN_d}(101,n_x,2,n_d) = 
    P_{\Basis N_xN_zN_d}(011,n_x,2,n_d) \,. \label{eq:symm}
  \end{align}
  This is because the event
  \begin{align}
    (\Basis=101, N_x=n_x, N_z=2, N_d=n_d)
  \end{align}
  and the event
  \begin{align}
    (\Basis=011, N_x=n_x, N_z=2, N_d=n_d)
  \end{align}
  consist of equally many histories of the protocol, and each of these histories
  has the same probability. \Cref{eq:fix-x,eq:fix-x3,eq:symm} imply $P_\Basis
  (101) = P_\Basis (011)$. Since $P_\Basis (011) + P_\Basis (101) + P_\Basis
  (110) = 1$ and $P(110) = g_z^2$, it holds that $P_\Basis (011) = P_\Basis
  (101) = (1-g_z^2)/2$ as claimed.
\end{proof}

\newpage

\section{Error rate calculations for the attacks on iterative sifting}

\subsection{Attack that exploits non-uniform sampling}
\label{app:attack1}

Here, we calculate the expected error rate for the attack on iterative sifting
which exploits non-uniform sampling, as explained in
\Cref{sec:attack-on-non-uniform}. We first recall the relevant conventions that
we made in the main article. The iterative sifting protocol is described in
\Cref{prot:iterative}. Eve performs an intercept-resend attack during the loop
phase of the protocol. In the first round, she attacks in the $X$-basis, and in
all the other rounds of the loop phase, she attacks in the $Z$-basis. We defined
the error rate in Equation \eqref{eq:error-rate} in the main article, namely
\begin{align}
  E = \frac{1}{l} \sum_{i=1}^l S_i \oplus T_i \,.
\end{align}
Moreover, recall that we assume that the $X$- and $Z$-basis is the
same for Alice, Bob and Eve, and that they are mutually unbiased. This way, if
Alice and Bob measure in the same basis, but Eve measures in the other basis,
then Eve introduces an error probability of $1/2$ on this qubit. 

The calculation of $\langle E \rangle$ for this attack goes as follows. We first
make a split: 
\begin{align}
  \expect{E} 
  &= \sum_{\basis} P[\Basis =\basis]\expect{E|\Basis =\basis} \\
  &= \underbrace{P[\Basis =01]\expect{E|\Basis =01}}_{\Delta_x} + 
  \underbrace{P[\Basis =10]\expect{E|\Basis =10}}_{\Delta_z} \,.
  \label{eq:expect-e}
\end{align}
We have that
\begin{align}
  \Delta_x 
  &= \sum_{n_x=1}^\infty \bigg( P[\Basis =01 \land N_x=n_x \land A_1=B_1=0]
    \expect{E|\Basis =01 \land N_x=n_x \land A_1=B_1=0} \nonumber \\ 
  &\hspace{1.5cm} + P[\Basis =01 \land N_x=n_x \land A_1 \neq B_1]
    \expect{E|\Basis =01 \land N_x=n_x \land A_1 \neq B_1}
    \label{eq:expand-delta_x} \\
  &\hspace{1.5cm} + \underbrace{P[\Basis =01 \land N_x=n_x \land A_1=B_1=1]}_{0}
    \expect{E|\Basis =01 \land N_x=n_x \land A_1=B_1=1} \bigg) \nonumber
\end{align}
The third summand on the right hand side of Equation~\eqref{eq:expand-delta_x}
vanishes because $\Basis=01$ is impossible if Alice and Bob have a $z$-agreement
in the first round of the loop phase. The event
\begin{align}
  \Basis =01 \land N_x=n_x \land A_1=B_1=0 \label{eq:x-event}
\end{align}
consists of all histories of the protocol in which Alice and Bob have an
$x$-agreement in the first round and $n_x$ $x$-agreements in total. Infinitely
many such histories are possible because an arbitrary number of disagreements is
possible. We express the probability of the event~\eqref{eq:x-event} as the
marginal of the probability of the event
\begin{align}
  \Basis =01 \land N_x=n_x \land A_1=B_1=0 \land N_d=n_d \,. \label{eq:xd-event}
\end{align}
The event~\eqref{eq:xd-event} consists of ${n_x+n_d+1 \choose n_d}$ histories of
the protocol, and each history has the probability $(p_x^2)^{n_x} p_z^2
(2p_xp_z)^{n_d}$. Therefore,
\begin{align}
  P[\Basis =01 \land N_x=n_x \land A_1=B_1=0] 
  &= \sum_{n_d=0}^\infty P[\Basis =01 \land N_x=n_x \land A_1=B_1=0 \land
    N_d=n_d] \\
  &= \sum_{n_d=0}^\infty (p_x^2)^{n_x} p_z^2 (2p_xp_z)^{n_d} {n_x+n_d-1
  \choose n_d} \label{eq:first-prob}
\end{align}
Moreover, we have that
\begin{align}
  \expect{E|\Basis =01 \land N_x=n_x \land A_1=B_1=0} = \frac{1}{4} \left(
  1-\frac{1}{n_x} \right) \,. \label{eq:errrate-x}
\end{align}
The validity of \eqref{eq:errrate-x} can be seen as follows. On the second bit
of $S$ and $T$, there is no error because it comes from a round in which all
parties have measured in the $Z$-basis. Hence, the left had side of
\eqref{eq:errrate-x} is the probability of getting an error on the first bit of
$S$ and $T$, divided by the total number of bits, 2. Hence, we need to determine
the error probability of the first bit. If $N_x =1$, then the first bit comes
from the first round of the loop phase, in which Alice, Bob and Eve have
measured in the $X$-basis and hence, there is no error. However, for $N_x=n_x$,
the first bit of $S$ and $T$ is chosen at random from one of the $n_x$
$x$-agreements. In only one of these $n_x$ rounds, Eve has measured in the
$X$-basis, and in $n_x-1$ rounds, she measured in the $Z$-basis. Hence, the
probability that Eve measured in the wrong basis on the first bit of $S$ and $T$
is $(n_x-1)/n_x$, and therefore the error probability of the first bit is $1/2
\cdot (n_x-1)/n_x$. Thus,
\begin{align}
  \expect{E|\Basis =01 \land N_x=n_x \land A_1=B_1=0} &= \frac{1}{2} \cdot
  \frac{1}{2} \left( \frac{n_x}{n_x-1} \right) \\
  &= \frac{1}{4} \left( 1-\frac{1}{n_x} \right) \,.
\end{align}
Similarly, we get
\begin{align}
  P[\Basis =01 \land N_x=n_x \land A_1 \neq B_1] = \sum_{n_d=0}^\infty
    (p_x^2)^{n_x} p_z^2 (2p_xp_z)^{n_d} {n_x+n_d-1 \choose n_x}
  \label{eq:second-prob}
\end{align}
and
\begin{align}
  \expect{E|\Basis =01 \land N_x=n_x \land A_1 \neq B_1} = \frac{1}{4} \,.
  \label{eq:errrate-z}
\end{align}
Taking \Cref{eq:first-prob,eq:errrate-x,eq:second-prob,eq:errrate-z} together,
we get that
\begin{align}
  \Delta_x &= \frac{1}{4} \sum_{n_x=1}^\infty \sum_{n_d=0}^\infty (p_x^2)^{n_x}
    p_z^2 (2p_xp_z)^{n_d} \left( {n_x+n_d-1 \choose n_d}
    \left(1-\frac{1}{n_x}\right) + {n_x+n_d-1 \choose n_x} \right) \,.
    \label{eq:trianglex}
\end{align}
In a similar way, we get
\begin{align}
  \Delta_z = \frac{1}{4} \sum_{n_z=1}^\infty \sum_{n_d=0}^\infty p_x^2
  (p_z^2)^{n_z} (2p_xp_z)^{n_d} \left( {n_z+n_d-1 \choose n_d} + {n_z+n_d-1
  \choose n_d} \left( 1+\frac{1}{n_x} \right) \right) \,.
  \label{eq:trianglez}
\end{align}
\Cref{eq:expect-e,eq:trianglex,eq:trianglez} taken together result in
\begin{align}
  \expect{E} 
  &= \sum_{n_d=0}^\infty (2p_xp_z)^{n_d} \Bigg( \sum_{n_x=1}^\infty
    (p_x^2)^{n_x} p_x^2 \left( {n_x+n_d-1 \choose n_d} \left( 1-\frac{1}{n_x}
    \right) + {n_x+n_d-1 \choose n_x} \right) \nonumber \\
  &\hspace{1.3cm} + \sum_{n_z=1}^\infty p_x^2 (p_z^2)^{n_z} \left( {n_z+n_d-1
    \choose n_z} + {n_z+n_d-1 \choose n_d} \left( 1+\frac{1}{n_z} \right)
    \right) \Bigg) \,.
    \label{eq:attack1-e}
\end{align}
\Cref{fig:attack1} in the main article shows a plot of $\expect{E}$ as in
\eqref{eq:attack1-e} as a function of $p_x$. As one can see, $\expect{E}$
achieves a minimum of $\expect{E} \approx 22.8\%$ for $p_x \approx 0.73$.

\subsection{Attack that exploits basis-information leak}
\label{app:attack2}

Now we calculate the expected error rate of iterative sifting for the attack
which exploits basis-information leak as described in
\Cref{sec:attack-on-basis}. As before, let $\expect{E}$ be the expected value of
the error rate as defined in \Cref{eq:error-rate}. Again, we assume that the
$X$- and $Z$-basis are the same for Alice, Bob and Eve and that they are
mutually unbiased. Recall the strategy of Eve's intercept-resend attack: Before
the first round of the loop phase, Eve flips a fair coin. Let $F$ be the random
variable of the coin flip outcome and let $0$ and $1$ be its possible values. If
$F=0$, then in the first round, Eve attacks in the $X$ basis, and if $F=1$, she
attacks in the $Z$-basis.  In the subsequent rounds, she keeps attacking in that
basis until Alice and Bob first reached a basis agreement. If it is an
$X$-agreement (equivalent to $\Basis =01$), Eve attacks in the $Z$-basis in all
remaining rounds, and if it is a $Z$-agreement (equivalent to $\Basis =10$), she
attacks in the $X$-basis in all remaining rounds.

The calculation of $\expect{E}$ goes as follows: 
\begin{align}
  \expect{E} 
  &= P_F(0) \expect{E|F=0} + P_F(1) \expect{E|F=1} \label{eq:expect-e-split-up}
    \\ 
  &= \expect{E|F=0} \label{eq:symmetry} \\
  &= \underbrace{P_\Basis (01)}_{1/2} \expect{E|F=0 \land \Basis =01} +
    \underbrace{P_\Basis (10)}_{1/2} \underbrace{\expect{E|F=0 \land \Basis
    =10}}_{1/4} \,.  
    \label{eq:remains}
\end{align}
Equality \eqref{eq:expect-e-split-up} is just a decomposition of $\expect{E}$
into conditional expectations. Equality \eqref{eq:symmetry} follows from the
fact that the problem is symmetric under the exchange of $X$ and $Z$, i.e. under
the exchange of $0$ and $1$. The only quantity that is not trivial to calculate
in Equation \eqref{eq:remains} is the expected value of the error rate, given
that Eve first measures in $X$ and that the first basis agreement is an
$X$-agreement. It is calculated as follows:
\begin{align}
  \expect{E|F=0 \land \Basis =01} 
  &= \sum_{n_x=1}^\infty \expect{E|F=0 \land \Basis =01 \land N_x=n_x}
    \underbrace{P_{N_x|\Basis F}(n_x|01,0)}_{P_{N_x|\Basis }(n_x|01)} \\
  &= \sum_{n_x=1}^\infty \underbrace{\expect{E|F=0 \land \Basis =01 \land
    N_x=n_x}}_{\frac{n_x-1}{4n_x}}
    \underbrace{P_{N_x\Basis }(n_x,01)}_{\sum_{n_d=0}^\infty (p_x^2)^{n_x} p_z^2
    (2p_xp_z)^{n_d} {n_x+n_d \choose n_d}} \underbrace{\frac{1}{P_\Basis 
    (01)}}_{2} \\
  &= \sum_{n_x=1}^\infty \frac{n_x-1}{2n_x} \sum_{n_d=0}^\infty (p_x^2)^{n_x}
    p_z^2 (2p_xp_z)^{n_d} {n_x+n_d \choose n_d} \\
  &= \frac{1}{4}(1-\ln2) \,,
\end{align}
where $\ln$ denotes the logarithm to base $e$. Therefore,
\begin{align}
  \expect{E} 
  &= \frac{1}{2}\frac{1}{4}(1-\ln2) + \frac{1}{2}\frac{1}{4} \\
  &= \frac{2-\ln2}{8} \\
  &\approx 16.3\% \,.
\end{align}

\subsection{Attack that exploits both problems}
\label{app:attack3}

Here we present the error rate induced by the intercept-resend attack presented
in \Cref{sec:attack-on-both}, which exploits both non-uniform sampling and basis
information leak. Let us recall the attack strategy. In the first round of the
loop phase of the iterative sifting protocol, she attacks in the $X$-basis. She
keeps doing that in subsequent rounds until Alice and Bob announce a
basis-agreement. If they announce an $X$-agreement, Eve attacks in the $Z$-basis
in all the following rounds. Otherwise, she keeps attacking in the $X$-basis.

The calculation of the error rate is similar to the calculations done in
\Cref{app:attack1,app:attack2}. We only show the result here:

\begin{align}
  \label{eq:error-rate-both}
  \expect{E} = \sum_{n_z=1}^{\infty} \sum_{n_d=0}^{\infty}  p_x^{2} p_z^{2n_z}
  (2p_xp_z)^{n_d} \binom{n_z+n_d}{n_d} \frac{1}{4}+\sum_{n_x=1}^{\infty}
  \sum_{n_d=0}^{\infty}  p_x^{2n_x} p_z^{2} (2p_xp_z)^{n_d} \binom{n_x+n_d}{n_d}
  \frac{n_x-1}{4n_x} \,.
\end{align}
A plot of \eqref{eq:error-rate-both} is shown in \Cref{fig:attack1} as a
function of $p_x$. As one can see, the expected error rate has a minimum of
$\expect{E} \approx 15.8\%$ for $p_x \approx 0.57$. Hence, this combined attack
on both problems performs much better than the one on non-uniform sampling alone
(with a minimal expected error rate of $\approx 22.8\%$, see
\Cref{sec:attack-on-non-uniform}) and even better than the attack on the basis
information leak alone (with a minimal expected error rate of $\approx 16.3\%$,
see \Cref{sec:attack-on-basis}).

\newpage

\section{Sampling and abort probability calculation for LCA sifting}
\label{app:fixed-round}

In this appendix, we derive the general form of the probabillity distribution
$P_\Basis(\basis)$ for LCA sifting (\Cref{prot:fixed-round}) as a function of
the parameters $n$, $k$, $m$, $p_x$ and $p_z$. This achieves two goals: Firstly,
it turns out that the sampling probability $P_\Basis(\basis)$ is independent of
the sample $\basis \in \lk$, which shows that the protocol samples uniformly.
Secondly, we calculate the abort probability $\pabort = P_\Basis(\perp)$. This
abort probability influences the key rate of potential QKD protocols that use
this protocol as a subroutine, which makes $\pabort$ an important performance
parameter of the protocol.

We start by describing in \Cref{sec:on-prob-models} how we think that proofs of
sampling probabilities should be formalized and how the general strategy of our
proof looks like. In \Cref{app:prob-space-model,app:margin,app:derive-ptheta},
we show the proofs and finally derive $P_\Basis$.

\subsection{On probabilistic models of the protocol}
\label{sec:on-prob-models}

LCA sifting gives rise to a set $\Omega$ of histories of the
protocol. This set can be modelled as the set $\Omega =
\Omega_{ABYY'STUV\Basis}$ of all tuples
\begin{align}
  \label{eq:omega}
  \omega = (a,b,y,y',s,t,u,v,\basis) \,,
\end{align}
where each entry varies over all its possible values. There are finitely many
such histories, and each of them as a probability associated with it. This can
be expressed more formally in the language of discrete probability
theory\footnote{
  By \emph{discrete} probability theory, we mean probability theory with a
  discrete sample space $\Omega$, i.e. where $\Omega$ is finite or countably
  infinite.}
by saying that $\Omega$ forms the sample space of a discrete probability space
$(\Omega, P)$, on which a probability mass function $p$ is defined such that
$P(\omega)$ is the probability of a history $\omega$. Note that by choosing
$\Omega = \Omega_{AB\ldots\Basis}$, we also include impossible combinations of
$a$, $b$, \ldots, $\basis$. For example, a history $\omega$ as in
\eqref{eq:omega} with $u=v$ is not possible, because $u$ stands for the
$X$-agreements chosen for the raw key and $v$ stands for the $Z$-agreements
chosen for the sample, and the two cannot coincide. This is not a problem for
our model, because in this case, we simply have $P(\omega) = 0$.

In this probability theory language, the strings $a,b,\ldots,\basis$ are values
that random variables $A$, $B$, \ldots, $\Basis$ can take. Random variables are
maps from the sample space $\Omega$ to a set which is called the \emph{range} or
\emph{codomain} of the random variable. For example, the random variable $A$ is
a map
\begin{align}
  \begin{array}{cccc}
    A: & \Omega & \rightarrow & \range{A} \\
       & \omega & \mapsto     & A(\omega)
  \end{array}
\end{align}
where $\range{A}$ is the codomain of $A$. We denote the codomains of random
variables with calligraphic letters (except for the random variable $\Basis$,
whose codomain we denote by $\co{\Basis}$). According to the protocol, we have
\begin{align}
  \range{A} = \{0,1\}^m = \{ \mstring{a} \mid a_i \in \{0,1\} \Forall i \in
  \upto{m} \} \,.
\end{align}
In the case where we model
\begin{align}
  \Omega = \Omega_{ABYY'STUV\Basis} = \range{A} \times \range{B} \times \range{Y}
  \times \range{Y}' \times \range{S} \times \range{T} \times \range{U} \times
  \range{V} \times \co{\Basis} \,,
\end{align}
the random variables are simply the (set-theoretic) projections on the
respective components, e.g.
\begin{align}
  \begin{array}{cccc}
    A: & \Omega = \range{A} \times \range{B} \times \ldots \times \co{\Basis}
    & \rightarrow & \range{A} \,, \\
    & (a,b,\ldots,\basis) & \mapsto & a \,.
  \end{array}
\end{align}
Then, the probability $P_A(a)$ that $A = a$ is given by
\begin{align}
  \begin{array}{ccc>{\displaystyle}l}
    P_A: & \range{A} & \rightarrow & [0,1] \\
    & a & \mapsto & \phantom{=}\sum_{\omega \in A^{-1}(a)} P(\omega) \\
    & & & = \sum_{(b,y,\ldots,\basis)} P_{AB\ldots\Basis}(a,b,\ldots,\basis) \,,
  \end{array}
\end{align}
where we have written $P = P_{AB\ldots\Basis}$. This is because in the case
where $\Omega = \Omega_{AB\ldots\Basis}$, $P$ is simply the joint probability
distribution of the random variables $A$, $B$, \ldots, $\Basis$.

Setting $(\Omega, P) = (\Omega_{AB\ldots\Basis}, P_{AB\ldots\Basis})$ is
sufficient to describe the probabilities of the random variables $A$, $B$,
\ldots, $\Basis$ and functions thereof. For our purposes, however, this
description is overloaded. We do not need to incorporate all the random
variables $A$, $B$, \ldots, $\Basis$ in $\Omega$ and $P$. One reason is that
some of the random variables are completely determined by some of the other
random variables. For example, the string $s$ of Alice's sifted measurement
outcomes is completely determined by Alice's measurement outcomes $a$ and the
subsets $u$ and $v$. In the probability theory language, this is expressed as
the fact that the random variable $S$ is a function of the random variables $A$,
$U$ and $V$,
\begin{align}
  \label{eq:dep-1}
  S \equiv S(A,U,V) \,,
\end{align}
or more precisely,
\begin{align}
  \begin{array}{cccc}
    S: & \range{A} \times \range{U} \times \range{V} & \rightarrow & \range{S}
    \\
       & (a,u,v)                                     & \mapsto     & s(a,u,v)
  \end{array}
\end{align}
and its probability distribution is given by
\begin{align}
  P_S(s) &= \sum_{(a,u,v) \in S^{-1}(s)} P_{AUV}(a,u,v) \\
	 &= \sum_{\omega \in (S \circ A \times U \times V)^{-1}(s)} P(\omega)
	 \,.
\end{align}
There are more such dependencies in our list of random variables:
\begin{align}
  T      & \equiv T(B,U,V) \,, \label{eq:dep-2} \\
  \Basis & \equiv \Basis(U,V) \,. \label{eq:dep-3}
\end{align}
Hence, setting
\begin{align}
  \label{eq:indep-model}
  (\Omega,P) = (\Omega_{ABYY'UV},P_{ABYY'UV})
\end{align}
and using the dependencies \eqref{eq:dep-1}, \eqref{eq:dep-2} and
\eqref{eq:dep-3} leads to an equally powerful description, but with a smaller
probability space.

For our purposes, the space \eqref{eq:indep-model} is still overloaded. We are
only interested in the distribution $P_\Basis$ of $\Basis$. According to
\eqref{eq:dep-3}, the relevant probability space is $(\Omega_{UV}, P_{UV})$, and
$\Basis$ is a random variable
\begin{align}
  \begin{array}{cccc}
    \Basis: & \Omega_{UV} = \range{U} \times \range{V} & \rightarrow &
    \co{\Basis} \,, \\
            & (u,v) & \mapsto & \basis(u,v) \,.
  \end{array}
\end{align}
Then, $P_\Basis$ is given by
\begin{align}
  \begin{array}{ccc>{\displaystyle}c}
    P_\Basis: & \co{\Basis} & \rightarrow & [0,1] \\
    & \basis & \mapsto & \sum_{(u,v) \in \Basis^{-1}(\basis)}P_{UV}(u,v)
  \end{array}
\end{align}
It is difficult to write down the probability mass function $P_{UV}$ directly.
Instead, we will derive the probbility mass function $P_{ABUV}$ on the sample
space $\Omega_{ABUV}$, and arrive at the probability distribution $P_{UV}$ via
marginalization of $P_{ABUV}$:
\begin{align}
  P_{UV}(u,v) = \sum_{(a,b) \in \range{A} \times \range{B}} P_{ABUV}(a,b,u,v)
  \,.
\end{align}
Hence, the relevant probability space for our proof of uniform sampling of
LCA sifting is the probability space $(\Omega_{ABUV}, P_{ABUV})$.

\subsection{Formalization of $(\Omega_{ABUV}, P_{ABUV})$}
\label{app:prob-space-model}

According to what we said in the last subsection, the probability space that is
relevant for our proof of uniform sampling of LCA sifting is the
space $(\Omega_{ABUV},P_{ABUV})$, which describes the probabilities of the basis
choice strings $a$ and $b$ of Alice and Bob, as well as the choices $u$ and $v$
of the rounds that are used for the raw key and for parameter estimation,
respectively. We are going to formalize this space in this subsection.

We start by determining the sample space
\begin{align}
  \Omega_{ABUV} = \range{A} \times \range{B} \times \range{U} \times \range{V}
  \,.
\end{align}
In the loop phase of the protocol, Alice and Bob generate basis choice strings
$a = \mstring{a} \in \m$, $b = \mstring{b} \in \m$. This happens in every run,
no matter whether Alice and Bob abort the protocol in the final phase. Hence,
\begin{align}
  \label{eq:ab-range}
  \range{A} = \range{B} = \m \,.
\end{align}
In the final phase of the protocol, Alice and Bob do a quota check, in which
they determine the rounds in which both measured in the $X$-basis
($X$-agreement) the rounds in which both measured in the $Z$-basis
($Z$-agreements). In the case where they had less than $n$ $X$-agreements or
less than $k$ $Z$-agreements, they abort. In this case, Alice and Bob do not
choose subsets $u$ and $v$ of their $X$- and $Z$-agreements, respectively. We
model this by saying that in this case, $u = v = \perp$, where $\perp$ is just a
symbol indicating that Alice and Bob abort. In the case where the quota check of
the protocol is successful, Alice and Bob choose random subsets $u \subseteq
u(m)$ of size $n$ and $v \subseteq v(m)$ of size $k$. We represent these subets
by bit strings $u \in \mn$, $v \in \lk$, where
\begin{align}
  \mn = \left\{ \mstring{u} \in \m \ \middle\vert \ \sum_{i=1}^m u_i = n
    \right\} \,, \quad
  \mk = \left\{ \mstring{v} \in \m \ \middle\vert \ \sum_{i=1}^m v_i = k
    \right\} \,.  
\end{align}
They are to be interpreted as follows: For $u \in \mn$ and $i \in \upto{m}$,
$u_i = 1$ means that $i$ is contained in the subset $u \subseteq u(m)$, and
$u_i =0$ means that $i$ is not contained, and likewise for $v \in \mk$. The
requirement that the subsets $u$ and $v$ have size $n$ and $k$ translates into
the conditions that the string components sum up to $n$ and $k$, respectively.
Taking the two possibilities (the protocol aborts or the quota check is
successful) together, we have that
\begin{align}
  &\range{U} = \mn \cup \{ \perp \} \,, \label{eq:u-range} \\
  &\range{V} = \mk \cup \{ \perp \} \,. \label{eq:v-range} \,,
\end{align}
and hence
\begin{align}
  \label{eq:omegaabuv}
  \Omega_{ABUV} = \range{A} \times \range{B} \times \range{U} \times \range{V} =
  \m \times \m \times \Big(\mn \cup \{\perp\}\Big) \times \Big(\mk \cup
  \{\perp\}\Big) \,.
\end{align}
This is the sample space of the probability space $(\Omega_{ABUV}, P_{ABUV})$
that we are looking for. 

Next, we determine the probability mass function $P_{ABUV}$. We can write
\begin{align}
  P_{ABUV}(a,b,u,v) = P_{AB}(a,b) P_{UV|AB}(u,v|a,b) \label{eq:split-pabuv}
\end{align}
where $P_{UV|AB}(u,v|a,b)$ is the probability that $U=u$ and $V=v$, conditioned
on $A=a$ and $B=b$. The probability distribution $P_{AB}(a,b)$ is easily
determined. Each bit $a_i$, $b_i$, $i \in \upto{m}$ is generated independently
at random and takes the value $0$ with probability $p_x$ and the value $1$ with
probability $p_z$. Hence,
\begin{align}
  \Forall (a,b) \in \range{A} \times \range{B}: \quad P_{AB}(a,b) 
  &= \prod_{i=1}^m p_x^{1-a_i} p_z^{a_i} p_x^{1-b_i} p_z^{b_i} \\
  &= p_x^{m-\abs{a}} p_z^{\abs{a}} p_x^{m-\abs{b}} p_z^{\abs{b}} \\
  &= p_x^{2m-\abs{a}-\abs{b}} p_z^{\abs{a}+\abs{b}} \,, \label{eq:pab}
\end{align}
where for a string $a \in \m$, we write
\begin{align}
  \abs{a} := \sum_{i=1}^m a_i \,.
\end{align}
The conditional probability distribution $P_{UV|AB}$ is a bit more tricky to
write down. What is crucial for this conditional probability is whether the
strings $a$ and $b$ have at least $n$ $X$-agreements and at least
$Z$-agreements. We want to give this condition a formula as follows. Imagine
Alice and Bob want to count their $X$- and $Z$-agreements. To do so, they can
first determine the string $a \wedge b$, given by
\begin{align}
  a \wedge b := (a_ib_i)_{i=1}^m \,.
\end{align}
The $i$-th entry $a_ib_i$ of $a \wedge b$ is $1$ if the corresponding bits $a_i$
and $b_i$ are both $1$, i.e. if they had a $Z$-agreement, and $0$ otherwise.
Hence, to count their $Z$-agreements, they can sum up the components of $a
\wedge b$:
\begin{align}
  \text{number of $Z$-agreements } = \sum_{i=1}^m a_ib_i = \abs{a \wedge b} \,.
\end{align}
Therefore, the condition that Alice and Bob had at least $k$ $Z$-agreements can
be expressed as
\begin{align}
  \abs{a \wedge b} \geq k \,.
\end{align}
Likewise, the condition that they had at least $n$ $X$-agreements can be written
as
\begin{align}
  \abs{\no{a} \wedge \no{b}} \geq n \,,
\end{align}
where for a string $a \in \m$, we write
\begin{align}
  \no{a} = \mstring{1-a} \in \m \,.
\end{align}
Taken together, the quota check condition reads
\begin{align}
  \label{eq:quota-cond}
  \abs{\no{a} \wedge \no{b}} \geq n \quad \text{and} \quad \abs{a \wedge b} \geq
  k\,.
\end{align}
In the case where condition \eqref{eq:quota-cond} is not satisfied, Alice and
Bob abort, and therefore it must be that $(u,v) = (\perp,\perp)$. We can write
this as
\begin{align}
  \label{eq:first-range}
  \Forall (a,b) \in \m \times \m \text{ such that } (\abs{a \wedge b} < k \text{
  or } \abs{\no{a} \wedge \no{b}} < n): \quad 
  P_{UV|AB}(u,v|a,b) = \chi(u = v = \perp) \,,
\end{align}
where $\chi$ is the \emph{indicator} function, which evaluates to 1 if its
argument is true and which evaluates to 0 if its argument is false. 

For $(a,b) \in \m \times \m$ such that condition \eqref{eq:quota-cond} is
satisfied, the conditional probability $P_{UV|AB}$ is a little more difficult to
write down.  In that case, both $u = \perp$ and $v = \perp$ are impossible.
Moreover, only those $u \in \mn$ are possible which are subsets of Alice and
Bob's $X$-agreements, i.e. which satisfy
\begin{align}
  u_i = 1 \implies a_i = b_i = 0 \quad \Forall i \in \upto{m} \,.
\end{align}
Note that
\begin{align}
  \Forall (a,b,u) \in \m \times \m \times \mn: \quad
  (u_i = 1 \implies a_i = b_i = 0) \iff \abs{\no{a} \wedge \no{b} \wedge u} = n
  \,.
\end{align}
Hence, the condition that $u$ is a subset of the $X$-agreements simply reads
\begin{align}
  \abs{\no{a} \wedge \no{b} \wedge u} = n \,,
\end{align}
and likewise, the condition that $v$ is a subset of the $Z$-agreements reads
\begin{align}
  \abs{a \wedge b \wedge v} = k \,.
\end{align}
Hence, in the case where \eqref{eq:quota-cond} holds, only those $(u,v) \in \mn
\times \mk$ are possible for which
\begin{align}
  \label{eq:isconsistent}
  \abs{\no{a} \wedge \no{b} \wedge u} = n \quad \text{and} \quad \abs{a \wedge b
  \wedge v} = k \,.
\end{align}
We can combine the two conditions in a single formula:
\begin{align}
  \Forall (a,b,u,v) \in \m \times \m \times \mn \times \mk: \\ 
  (\abs{\no{a} \wedge \no{b} \wedge u} = n \text{ and } \abs{a \wedge b \wedge
  v} = k) \iff \abs{\no{a} \wedge \no{b} \wedge u} + \abs{a \wedge b \wedge v} =
  l \,,
\end{align}
where $l := n+k$. If this condition is satisfied, then the pair $u$ is a subset
of the $X$-agreements. Since the number of $X$-agreements is given by
$\abs{\no{a} \wedge \no{b}}$, we have that
\begin{align}
  \text{number of subsets of $X$-agreements of size $n$ } = {\abs{\no{a} \wedge
  \no{b}} \choose n} \,.
\end{align}
Since Alice and Bob are discarding surplus fully at random, each such subset is
equally likely, and thus, has a probability of $1 / {\abs{\no{a} \wedge \no{b}}
\choose n}$. Arguing similarly for $v$ and noting that the choices
of $u$ and $v$ are independent when the quota condition is passed leads to
\begin{align}
  \label{eq:second-range}
  \begin{array}{>{\displaystyle}l}
    \Forall (a,b) \in \m \times \m \text{ such that } \abs{a \wedge b} \geq k
    \text{ and } \abs{\no{a} \wedge \no{b}} \geq n: \\
    P_{UV|AB}(u,v|a,b) = \chi(u \neq \perp, v \neq \perp, \abs{\no{a} \wedge
    \no{b} \wedge u} + \abs{a \wedge b \wedge v} = l) {\abs{\no{a} \wedge
    \no{b}} \choose n}^{-1} {\abs{a \wedge b} \choose k}^{-1} \,.
  \end{array}
\end{align}
These two cases fully determine the conditional probability, i.e.
\eqref{eq:first-range} and \eqref{eq:second-range} determine $P_{UV|AB}$ for all
$(a,b) \in \m \times \m$, namely:
\begin{align}
  P_{UV|AB}(u,v|a,b) = 
  \begin{cases}
    \chi(u = v = \perp) & \text{if } \abs{a \wedge b} < k \text{ or }
      \abs{\no{a} \wedge \no{b}} < n \\
    \chi(u \neq \perp, v \neq \perp, \abs{\no{a} \wedge \no{b} \wedge u} +
      \abs{a \wedge b \wedge v} = l) {\abs{\no{a} \wedge \no{b}} \choose n}^{-1}
      {\abs{a \wedge b} \choose k}^{-1} & \text{if } \abs{a \wedge b} \geq k
      \text{ and } \abs{\no{a} \wedge \no{b}} \geq n
  \end{cases}
\end{align}
We can write this as
\begin{align}
  P_{UV|AB}(u,v|a,b) &= \chi(\abs{a \wedge b} < k \text{ or } \abs{\no{a} \wedge
    \no{b}} < n) \chi(u=v=\perp) \\
  &\phantom{=} + \chi(\abs{a \wedge b} \geq k \text{ and } \abs{\no{a} \wedge
    \no{b}} \geq n) \chi(u \neq \perp, v \neq \perp, \abs{\no{a} \wedge \no{b}
    \wedge u} + \abs{a \wedge b \wedge v} = l) {\abs{\no{a} \wedge \no{b}}
    \choose n}^{-1} {\abs{a \wedge b} \choose k}^{-1} \nonumber \\
  &= \chi(\abs{a \wedge b} < k \text{ or } \abs{\no{a} \wedge
    \no{b}} < n) \chi(u=v=\perp) \nonumber \\
  &\phantom{=} + \chi(u \neq \perp, v \neq \perp, \abs{\no{a} \wedge \no{b}
    \wedge u} + \abs{a \wedge b \wedge v} = l) {\abs{\no{a} \wedge \no{b}}
    \choose n}^{-1} {\abs{a \wedge b} \choose k}^{-1} \,, \label{eq:puvab}
\end{align}
where the last equality follows form
\begin{align}
  u \neq \perp, v \neq \perp, \abs{\no{a} \wedge \no{b} \wedge u} + \abs{a
  \wedge b \wedge v} = l \implies \abs{a \wedge b} \geq k \text{ and }
  \abs{\no{a} \wedge \no{b}} \geq n \,.
\end{align}
Taking \eqref{eq:split-pabuv}, \eqref{eq:pab} and \eqref{eq:puvab} together, we
get
\begin{align}
  P_{ABUV}(a,b,u,v) = p_x^{2m-\abs{a}-\abs{b}} p_z^{\abs{a}+\abs{b}} \Bigg(
  &\chi(\abs{a \wedge b} < k \text{ or } \abs{\no{a} \wedge \no{b}} < n)
    \chi(u=v=\perp) \nonumber \\
  &\phantom{=} + \chi(u \neq \perp, v \neq \perp, \abs{\no{a} \wedge \no{b}
    \wedge u} + \abs{a \wedge b \wedge v} = l) {\abs{\no{a} \wedge \no{b}}
    \choose n}^{-1} {\abs{a \wedge b} \choose k}^{-1} \Bigg) \,.
    \label{eq:pabuv}
\end{align}
This concludes our formalization of $(\Omega_{ABUV},P_{ABUV})$.

\begin{defn}
  We define the discrete probability space $(\Omega_{ABUV},P_{ABUV})$ by
  equations \eqref{eq:omegaabuv} and \eqref{eq:pabuv}.
\end{defn}

\subsection{Marginalization to $(\Omega_{UV}, P_{UV})$}
\label{app:margin}

\begin{defn}
  We define the probability space $(\Omega_{UV}, P_{UV})$ by
  \begin{align}
    \Omega_{UV} &:= \range{U} \times \range{V} = \Big(\mn \cup \{\perp\}\Big)
    \times \Big(\mk \cup \{\perp\}\Big) \\
    P_{UV}(u,v) &:= \sum_{a,b \in \range{A} \times \range{B}} P_{ABUV}(a,b,u,v)
    \,.
  \end{align}
\end{defn}

\begin{prop}
  \label{prop:margin}
  It holds that
  \begin{align}
    \label{eq:puv}
    &P_{UV}(u,v) = \chi(u=v=\perp) \left( \sum_{n_x=0}^{n-1}
      \sum_{n_z=0}^{m-n_x} + \sum_{n_x=n}^{m} \sum_{n_z=0}^{\min(m-n_x,k-1)}
      \right) {m \choose n_x} {m-n_x \choose n_z} 2^{m-n_x-n_z} p_x^{m+n_x-n_z}
      p_z^{m-n_x+n_z} \nonumber \\
    &\ + \chi(u \neq \perp, v \neq \perp, \abs{u \wedge v} = 0)
      \sum_{n_x=n}^{m-k} \sum_{n_z=k}^{m-n_x} {m-l \choose n_x-n} {m-k-n_x
      \choose n_z-k} 2^{m-n_x-n_z} p_x^{m+n_x-n_z} p_z^{m-n_x+n_z} {n_x \choose
      n}^{-1} {n_z \choose k}^{-1} \,.
  \end{align}
\end{prop}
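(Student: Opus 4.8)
The plan is to compute the marginal $P_{UV}(u,v) = \sum_{(a,b)} P_{ABUV}(a,b,u,v)$ directly from the explicit form \eqref{eq:pabuv}, splitting into the two cases that make the expression nonzero: the abort case $(u,v) = (\perp,\perp)$ and the success case $(u,v) \in \mn \times \mk$. (If exactly one of $u$, $v$ equals $\perp$, both indicator terms in \eqref{eq:pabuv} vanish, so $P_{UV}(u,v) = 0$, which is why no such term appears in \eqref{eq:puv}.) The central simplification, valid in both cases, is to reparametrize the sum over $(a,b)$ by the number of $X$-agreements $n_x = \abs{\no{a} \wedge \no{b}}$ and $Z$-agreements $n_z = \abs{a \wedge b}$. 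First I would record the identity $\abs{a} + \abs{b} = 2n_z + (m - n_x - n_z) = m - n_x + n_z$, since each $Z$-agreement contributes $2$ and each disagreement contributes $1$ to $\abs{a}+\abs{b}$ while $X$-agreements contribute $0$; this turns the weight $p_x^{2m - \abs{a} - \abs{b}} p_z^{\abs{a} + \abs{b}}$ into $p_x^{m + n_x - n_z} p_z^{m - n_x + n_z}$, which already matches the powers of $p_x$ and $p_z$ appearing in \eqref{eq:puv}.

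For the abort case, the second term of \eqref{eq:pabuv} drops out and I am left with $\sum_{(a,b) \text{ aborting}} p_x^{m+n_x-n_z} p_z^{m-n_x+n_z}$. Since the summand depends on $(a,b)$ only through $(n_x, n_z)$, I would count the pairs with prescribed $(n_x, n_z)$: choose the $X$-agreement positions (${m \choose n_x}$ ways), then the $Z$-agreement positions among the rest (${m-n_x \choose n_z}$ ways), and assign each of the $m - n_x - n_z$ disagreements one of two orientations ($2^{m-n_x-n_z}$ ways). The abort condition $\abs{a \wedge b} < k$ or $\abs{\no{a} \wedge \no{b}} < n$ then becomes a condition on $(n_x, n_z)$, which I would split into the two disjoint regions $\{n_x < n\}$ and $\{n_x \geq n,\ n_z < k\}$; these produce exactly the two double sums with limits $\sum_{n_x=0}^{n-1}\sum_{n_z=0}^{m-n_x}$ and $\sum_{n_x=n}^{m}\sum_{n_z=0}^{\min(m-n_x,k-1)}$ in the first line of \eqref{eq:puv}.

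For the success case I fix $u \in \mn$ and $v \in \mk$ and keep only the second term of \eqref{eq:pabuv}. Its constraint $\abs{\no{a} \wedge \no{b} \wedge u} + \abs{a \wedge b \wedge v} = l$ forces every position where $u_i = 1$ to be an $X$-agreement and every position where $v_i = 1$ to be a $Z$-agreement; in particular $u$ and $v$ must be disjoint, which I would factor out as $\chi(\abs{u \wedge v} = 0)$. With the $n$ positions of $u$ and the $k$ positions of $v$ thus frozen, only the remaining $m - l$ positions are free. Parametrizing them by the excess counts $n_x - n$ of further $X$-agreements and $n_z - k$ of further $Z$-agreements gives the count ${m-l \choose n_x-n}{m-k-n_x \choose n_z-k} 2^{m-n_x-n_z}$, with $n_x$ ranging in $\{n,\dots,m-k\}$ and $n_z$ in $\{k,\dots,m-n_x\}$. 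The binomial weights ${\abs{\no{a} \wedge \no{b}} \choose n}^{-1}{\abs{a \wedge b} \choose k}^{-1} = {n_x \choose n}^{-1}{n_z \choose k}^{-1}$ depend only on $(n_x, n_z)$ and hence come along unchanged, yielding the second line of \eqref{eq:puv}. Assembling the two cases gives the claimed formula.

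The step I expect to require the most care is the success-case bookkeeping: correctly identifying which positions are frozen by $u$ and $v$, checking that the leftover $m-l$ positions are counted by ${m-l \choose n_x - n}{m-k-n_x \choose n_z-k}2^{m-n_x-n_z}$ with the right ranges, and --- crucially for the uniform-sampling corollary to follow --- verifying that this count depends on $u$ and $v$ only through $\abs{u} = n$, $\abs{v} = k$ and the disjointness $\abs{u \wedge v} = 0$, and not on their particular supports. The abort-case ranges, especially the upper limit $\min(m-n_x, k-1)$, likewise need a careful disjoint decomposition of the complement of the quota condition.
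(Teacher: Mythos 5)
Your proposal is correct and follows essentially the same route as the paper's own proof: you marginalize $P_{ABUV}$ by grouping the pairs $(a,b)$ according to their agreement counts $(n_x,n_z)$, on which the weight $p_x^{m+n_x-n_z}p_z^{m-n_x+n_z}$ is constant, count each class with the same binomial factors ${m \choose n_x}{m-n_x \choose n_z}2^{m-n_x-n_z}$ in the abort case and ${m-l \choose n_x-n}{m-k-n_x \choose n_z-k}2^{m-n_x-n_z}$ in the success case, and use the same disjoint split of the abort region and the same freezing-of-positions argument (forcing $\abs{u\wedge v}=0$) that the paper uses. The only cosmetic difference is that the paper packages the classes as explicitly named sets $\Gamma(n_x,n_z)$ and $\Phi(u,v,n_x,n_z)$ and handles the mixed-$\perp$ case as a separate item (iii), which you dispatch in a parenthetical remark.
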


\begin{proof}
  To show equation \eqref{eq:puv}, we need to show three things:
  \begin{align}
    &P_{UV}(\perp,\perp) = \left( \sum_{n_x=0}^{n-1} \sum_{n_z=0}^{m-n_x} +
    \sum_{n_x=n}^{m} \sum_{n_z=0}^{\min\{m-n_x,k-1\}} \right) {m \choose n_x}
    {m-n_x \choose n_z} 2^{m-n_x-n_z} p_x^{m+n_x-n_z} p_z^{m-n_x+n_z} \,,
    \tag{i} \\
    &\begin{array}{>{\displaystyle}l}
      \Forall (u,v) \in \mn \times \mk: \\
      P_{UV}(u,v) = \chi(\abs{u \wedge v} = 0) \sum_{n_x=n}^{m-k}
	\sum_{n_z=k}^{m-n_x} {m-l \choose n_x-n} {m-k-n_x \choose n_z-k}
	2^{m-n_x-n_z} p_x^{m+n_x-n_z} p_z^{m-n_z+n_z} {n_x \choose n} {n_z
	\choose k} \,, \tag{ii}
    \end{array}
    \\
    &\Forall (u,v) \in \Big( \{\perp\} \times \mk \Big) \cup \Big( \mn \times
    \{\perp\} \Big): P_{UV}(u,v) = 0 \,. \tag{iii}
  \end{align}
  We start with showing (i). We have that
  \begin{align}
    P_{UV}(\perp,\perp) 
    &= \sum_{(a,b) \in \range{A} \times \range{B}} P_{ABUV}(a,b,\perp,\perp) \\
    &= \sum_{(a,b) \in \range{A} \times \range{B}} p_x^{2m-\abs{a}-\abs{b}}
      p_z^{\abs{a}+\abs{b}} \chi(\abs{a \wedge b} < k \text{ or } \abs{\no{a}
      \wedge \no{b}} < n) \\
    &= \sum_{(a,b) \in \Gammabort} p_x^{2m-\abs{a}-\abs{b}}
      p_z^{\abs{a}+\abs{b}} \,,
  \end{align}
  where
  \begin{align}
    \Gammabort = \left\{ (a,b) \in \m \times \m \ \middle\vert \ \abs{a \wedge
    b} < k \text{ or } \abs{\no{a} \wedge \no{b}} < n \right\} \,.
  \end{align}
  We can partition $\Gammabort$ as follows:
  \begin{align}
    \Gammabort = \bigsqcup_{(n_x,n_z) \in \Iabort} \Gamma(n_x,n_z) \,,
  \end{align}
  where the ``square cup'' $\sqcup$ stands for disjoint union (the union of
  disjoint sets) and where
  \begin{align}
    \Iabort &= \{ (n_x,n_z) \in \{0,\ldots,m\} \times \{0,\ldots,m\} \mid
		  n_x+n_z\leq m, (n_x < n \text{ or } n_z < k) \} \,, \\
    \Gamma(n_x,n_z) &= \left\{ (a,b) \in \m \times \m \ \middle\vert \ \abs{a
    \wedge b} = n_x, \abs{\no{a} \wedge \no{b}} = n_z \right\} \,.
  \end{align}
  Hence,
  \begin{align}
    \label{eq:puv-comp}
    P_{UV}(\perp,\perp) = \sum_{(n_x,n_z) \in \Iabort} \sum_{(a,b) \in
    \Gamma(n_x,n_z)} p_x^{2m-\abs{a}-\abs{b}} p_z^{\abs{a}+\abs{b}}
  \end{align}
  The set $\Gamma(n_x,n_z)$ consists of all $(a,b) \in \m \times \m$ with
  exactly $n_x$ $X$-agreements and exactly $n_z$ $Z$-agreements. For these
  strings,
  \begin{align}
    \Forall (a,b) \in \Gamma(n_x,n_z): p_x^{2m-\abs{a}-\abs{b}}
    p_z^{\abs{a}+\abs{b}} 
    &= p_x^{2n_x}p_z^{2n_z}(p_xp_z)^{m-n_x-n_z} \\
    &= p_x^{m+n_x-n_z} p_z^{m-n_z+n_x} \,,
  \end{align}
  so equation \eqref{eq:puv-comp} simplifies to
  \begin{align}
    \label{eq:puv-simple}
    P_{UV}(\perp,\perp) = \sum_{(n_x,n_z)\in\Iabort} \abs{\Gamma(n_x,n_z)}
    p_x^{m+n_x-n_z} p_z^{m-n_z+n_x} 
  \end{align}
  The number $\abs{\Gamma(n_x,n_z)}$ of elements of $\Gamma(n_x,n_z)$ is given
  by
  \begin{align}
    \label{eq:gamma-count}
    \abs{\Gamma(n_x,n_z)} = {m \choose n_x} {m-n_x \choose n_z} 2^{m-n_x-n_z}
    \,.
  \end{align}
  This can be seen as follows: ${m \choose n_x}$ is the number of possible
  distributions of the $n_x$ $X$-agreements over the $m$ rounds, and ${m-n_x
  \choose n_x}$ is the number of possible distributions of the $n_z$
  $Z$-agreements over the remaining $m-n_x$ rounds. For the rounds where the
  strings have basis agreement, they are fully determined, but for $i$ in the
  remaining $m-n_x-n_z$ rounds, we can have that either $a_i=0$ and $b_i=1$ for
  a basis disagreement or $a_i=1$ and $b_i=0$. Thus, there are two possibilities
  for every disagreement, which explains the factor $2^{m-n_x-n_z}$.
  Combining equations \eqref{eq:puv-simple} and \eqref{eq:gamma-count} yields
  \begin{align}
    P_{UV}(\perp,\perp) 
    &= \sum_{(n_x,n_z)\in\Iabort} {m \choose n_x} {m-n_x \choose n_z}
    2^{m-n_x-n_z} p_x^{m+n_x-n_z} p_z^{m-n_z+n_x} \\
    &= \left( \sum_{n_x=0}^{n-1} \sum_{n_z=0}^{m-n_x} +
	\sum_{n_x=n}^{m} \sum_{n_z=0}^{\min(m-n_x,k-1)} \right) {m \choose n_x}
	{m-n_x \choose n_z} 2^{m-n_x-n_z} p_x^{m+n_x-n_z} p_z^{m-n_x+n_z} \,,
  \end{align}
  where the last equation follows from splitting up $\Iabort$ into the two
  respective sets. This shows (i).

  We proceed with showing (ii). We get from equation \eqref{eq:pabuv} that
  \begin{align}
    &\Forall (u,v) \in \mn \times \mk: \\
    P_{UV}(u,v) 
    &= \sum_{(a,b) \in \m \times \m} p_x^{2m-\abs{a}-\abs{b}}
      p_z^{\abs{a}+\abs{b}} \chi(u \neq \perp, v \neq \perp, \abs{\no{a} \wedge
      \no{b} \wedge u} + \abs{a \wedge b \wedge v} = l) {\abs{\no{a} \wedge
      \no{b}} \choose n}^{-1} {\abs{a \wedge b} \choose k}^{-1} \\
    &= \sum_{(a,b) \in \Phi(u,v)} p_x^{2m-\abs{a}-\abs{b}} p_z^{\abs{a}+\abs{b}}
      {\abs{\no{a} \wedge \no{b}} \choose n}^{-1} {\abs{a \wedge b} \choose
      k}^{-1} \,, 
  \end{align}
  where
  \begin{align}
    \Phi(u,v) = \{(a,b) \in \m \times \m \mid \abs{\no{a} \wedge \no{b} \wedge
    u} + \abs{a \wedge b \wedge v} = l \}
  \end{align}
  In analogy to the way we split up $\Gammabort$ above, we now split up
  $\Phi(u,v)$:
  \begin{align}
    \Phi(u,v) = \bigsqcup_{(n_x,n_z) \in \Ipass} \Phi(u,v,n_x,n_z) \,,
  \end{align}
  where
  \begin{align}
    \Ipass &= \{(n_x,n_z) \in \{0,\ldots,m\} \times \{0,\ldots,m\} \mid
      n_x+n_z \leq m, n_x \geq n, n_z \geq k \} \,, \\
    \Phi(u,v,n_x,n_z) &= \{(a,b) \in \m \times \m \mid \abs{\no{a} \wedge \no{b}
    \wedge u} + \abs{a \wedge b \wedge v} = l, \abs{\no{a} \wedge \no{b}} = n_x,
    \abs{a \wedge b} = n_z \} \,.
  \end{align}
  This gives us
  \begin{align}
    \label{eq:puv-phi}
    &\Forall (u,v) \in \mn \times \mk: P_{UV}(u,v) = \sum_{(n_x,n_z) \in \Ipass}
    \sum_{(a,b) \in \Phi(u,v,n_x,n_z)} p_x^{2m-\abs{a}-\abs{b}}
    p_z^{\abs{a}+\abs{b}} {\abs{\no{a} \wedge \no{b}} \choose n}^{-1} {\abs{a
    \wedge b} \choose k}^{-1}
  \end{align}
  Again, in analogy to our calculation of $P_{UV}(u,v)$, the sets
  $\Phi(u,v,n_x,n_z)$ are sets on which the summand in equation
  \eqref{eq:puv-phi} is constant. More precisely, for every $(a,b,u,v) \in
  \m \times \m \times \Ipass$, it holds that
  \begin{align}
    \Forall (a,b) \in \Phi(u,v,n_x,n_z): \quad p_x^{2m-\abs{a}-\abs{b}}
      p_z^{\abs{a}+\abs{b}} {\abs{\no{a} \wedge \no{b}} \choose n}^{-1} {\abs{a
      \wedge b} \choose k}^{-1} 
    &= p_x^{2n_x} p_z^{2n_z} (p_xp_z)^{m-n_x-n_z} {n_x \choose n} {n_z \choose
      k} \\
    &= p_x^{m+n_x-n_z} p_z^{m-n_z+n_z} {n_x \choose n} {n_z \choose k}
    \label{eq:constant-puv}
  \end{align}
  This leads us to determining the size of $\Phi(u,v,n_x,n_z)$. In words, this
  set contains all pairs $(a,b) \in \m \times \m$ with $n_x$ $X$-agreements and
  $n_z$ $Z$-agreements such that $n$ $X$-agreements are located where $u_i=1$
  and $k$ $Z$-agreements are located where $v_i=1$. The size of this set is
  \begin{align}
    \abs{\Phi(u,v,n_x,n_z)} = \chi(\abs{u \wedge v} = 0) {m-l \choose n_x-n}
    {m-k-n_x \choose n_z-k} 2^{m-n_x-n_z} \,. \label{eq:sizeofphi}
  \end{align}
  This can be seen as follows. If $\abs{u \wedge v} \neq 0$, there cannot be any
  $(a,b) \in \m \times \m$ such that $\abs{\no{a} \wedge \no{b} \wedge u} +
  \abs{a \wedge b \wedge v} = l$, and hence the set must be empty in that case.
  This explains the factor $\chi(\abs{u \wedge v} = 0)$. For those $(u,v) \in
  \mn \times \mk$ for which $\abs{u \wedge v} = 0$, the strings $(a,b) \in
  \Phi(u,v,n_x,n_z)$ are determined on $n+k=l$ positions by $u$ and $v$. On the
  remaining $m-l$ rounds are partitioned into $n_x-n$ rounds of $X$-agreements,
  $n_z-k$ $Z$-agreements and $m-n_x-n_z$ disagreements. There are ${m-l \choose
  n_x-n} {m-k-n_x \choose n_z-k}$ such partitions. Finally, on each position of
  the $m-n_x-n_z$ disagreements, we have the two possibilities $(a_i,b_i) =
  (0,1)$ and $(a_i,b_i) = (1,0)$, which contributes the factor $2^{m-n_x-n_z}$.
  Taking equations \eqref{eq:constant-puv} and \eqref{eq:sizeofphi} together, we
  get
  \begin{align}
    &\Forall (u,v) \in \mn \times \mk: \nonumber \\
    P_{UV}(u,v) 
    &= \sum_{(n_x,n_z) \in \Ipass} \chi(\abs{u \wedge v} = 0) {m-l \choose
      n_x-n} {m-k-n_x \choose n_z-k} 2^{m-n_x-n_z} p_x^{m+n_x-n_z}
      p_z^{m-n_z+n_z} {n_x \choose n} {n_z \choose k} \\
    &= \chi(\abs{u \wedge v} = 0) \sum_{n_x=n}^{m-k} \sum_{n_z=k}^{m-n_x} {m-l
      \choose n_x-n} {m-k-n_x \choose n_z-k} 2^{m-n_x-n_z} p_x^{m+n_x-n_z}
      p_z^{m-n_z+n_z} {n_x \choose n} {n_z \choose k} \,.
  \end{align}
  This shows (ii).
  
  The remaining case (iii) is easily shown. It follows directly from
  \eqref{eq:pabuv}, because
  \begin{align}
    &\Forall (u,v) \in \Big( \{\perp\} \times \mk \Big) \cup \Big( \mn \times
    \{\perp\} \Big):
    \chi(u=v=\perp) = \chi(u \neq \perp, v \neq \perp, \abs{\no{a} \wedge \no{b}
    \wedge u} + \abs{a \wedge b \wedge v} = l) = 0 \,.
  \end{align}
  This shows (iii) and therefore completes the proof.
\end{proof}

\subsection{Formalization of $\Basis$ and derivation of $P_\Basis$}
\label{app:derive-ptheta}

We have derived the probability space $(\Omega_{UV},P_{UV})$ as demanded in
\Cref{sec:on-prob-models}. Now we are left to define the random variable
\begin{align}
  \begin{array}{cccl}
    \Basis: & \Omega_{UV} & \rightarrow & \co{\Basis} \\
    & (u,v) & \mapsto & 
    \begin{cases}
      h(u,v) & \text{if } (u,v) \in \{(u,v) \in \mn \times \mk \mid \abs{u
      \wedge v} = 0 \} \,, \\
      \perp  & \text{otherwise} \,.
    \end{cases}
  \end{array}
\end{align}
and to derive an expression for
\begin{align}
  \label{eq:pbasis-general}
  \begin{array}{ccc>{\displaystyle}c}
    P_\Basis: & \co{\Basis} & \rightarrow & [0,1] \\
    & \basis & \mapsto & \sum_{(u,v) \in \Basis^{-1}(\basis)}P_{UV}(u,v) \,.
  \end{array}
\end{align}
The range $\co{\Basis}$ of $\Basis$ is given by
\begin{align}
  \co{\Basis} = \lk \cup \{\perp\} \,,
\end{align}
where an element of $\lk$ is a sifted basis choice string as in LCA sifting and
where we set $\theta = \perp$ in the case where Alice and Bob abort the
protocol.

To derive the random variable $\Basis$, assume that Alice and Bob arrived at
strings $(u,v) \in \range{U} \times \range{V}$. How do these two strings
determine the sifted basis choice string $\basis$? Let us first assume the case
where $(u,v) \in \mn \times \mk$ such that $\abs{u \wedge v} = 0$. The relevant
set of indices in this case is the set of round indices $r$ for which $u_r=1$ or
$v_r=1$:
\begin{align}
  \alpha(u,v) := \{ r \in \m \mid u_r = 1 \text{ or } v_r = 1 \} \,.
\end{align}
Note that $\abs{\alpha(u,v)} = n+k = l$. For $i \in \upto{l}$, we define
\begin{align}
  \label{eq:alpha}
  \alpha_i(u,v) := \text{the $i$-th element of } \alpha(u,v) \,.
\end{align}
With this notation at hand, we can determine $\basis$ from $u$ and $v$ as
follows: for $i \in \upto{l}$, we have that $\basis_i = 0$ if $u_{\alpha_i(u,v)}
= 1$ and $\basis_i = 1$ if $v_{\alpha_i(u,v)} = 1$. (Note that for $i \in
\upto{l}$, it always holds either $u_{\alpha_i(u,v)} = 1$ or $v_{\alpha_i(u,v)}
= 1$, but never both, so this is well-defined.) We can write this in terms of a
helper function $h$ as
\begin{align}
  \begin{array}{cccl}
    h: & \{(u,v) \in \mn \times \mk \mid \abs{u \wedge v} = 0 \} & \rightarrow &
    \lk \\
    & (u,v) & \mapsto & (h_i(u,v))_{i=1}^l \,,
  \end{array}
\end{align}
where
\begin{align}
  h_i(u,v) =
  \begin{cases}
    0 & \text{if } u_{\alpha_i(u, v)} = 1 \,, \\
    1 & \text{if } v_{\alpha_i(u, v)} = 1 \,.
  \end{cases}
\end{align}
This determines $\Basis$ for all $(u,v) \in \mn \times \mk$ such that $\abs{u
\wedge v} = 0$. However, since these are the only pairs $(u,v)$ for which a
sifted basis choice string $\basis \in \lk$ is generated, we just let $\Basis$
send all other pairs $(u,v)$ to $\perp$:
\begin{align}
  \label{eq:basis}
  \begin{array}{cccl}
    \Basis: & \range{U} \times \range{V} & \rightarrow & \co{\Basis} \\
    & (u,v) & \mapsto & 
    \begin{cases}
      h(u,v) & \text{if } (u,v) \in \{(u,v) \in \mn \times \mk \mid \abs{u
      \wedge v} = 0 \} \,, \\
      \perp  & \text{otherwise} \,.
    \end{cases}
  \end{array}
\end{align}
This way, pairs $(u,v)$ are mapped to $\perp$ which cannot occur in the protocol
(e.g. $(\perp,b)$ with $b \in \lk$). This is unproblematic, because for these
pairs, $P_{UV}(u,v)=0$, so according to equation \eqref{eq:pbasis-general}, they
do not contribute to $P_\Basis$.
\begin{defn}
  We define the sifted basis choice string random variable $\Basis$ on
  $\Omega_{UV}$ by equation \eqref{eq:basis}. Its associated probability mass
  function $P_\Basis$ is given by \eqref{eq:pbasis-general}.
\end{defn}

We are ready to state the result.

\begin{prop}
  \label{prop:pbasis}
  For LCA sifting (\Cref{prot:fixed-round}), we have
  that \begin{align}
    &\pabort = P_\Basis(\perp) = \left( \sum_{n_x=0}^{n-1}
      \sum_{n_z=0}^{m-n_x} + \sum_{n_x=n}^{m} \sum_{n_z=0}^{\min(m-n_x,k-1)}
      \right) {m \choose n_x} {m-n_x \choose n_z} 2^{m-n_x-n_z} p_x^{m+n_x-n_z}
      p_z^{m-n_x+n_z} \,, \label{eq:result-abort} \\
    &\Forall \basis \in \lk: P_\Basis(\basis) = {m \choose n+k}
    \sum_{n_x=n}^{m-k} \sum_{n_z=k}^{m-n_x} {m-n-k \choose n_x-n} {m-k-n_x
    \choose n_z-k} 2^{m-n_x-n_z} p_x^{m+n_x-n_z} p_z^{m-n_x+n_z} {n_x \choose
    n}^{-1} {n_z \choose k}^{-1} \,. \label{eq:result-sample}
  \end{align}
\end{prop}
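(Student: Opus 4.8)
The plan is to read off both formulas directly from Proposition \ref{prop:margin}, which already pins down $P_{UV}$, combined with the definition \eqref{eq:basis} of the random variable $\Basis$ and the defining formula \eqref{eq:pbasis-general} for $P_\Basis$. The whole argument amounts to organizing the fibres $\Basis^{-1}(\basis)$ and noting that $P_{UV}$ is constant on each relevant fibre. I would introduce the shorthand $G := \{(u,v) \in \mn \times \mk \mid \abs{u \wedge v} = 0\}$ for the ``good'' pairs, so that by \eqref{eq:basis} one has $\Basis(u,v) = h(u,v) \in \lk$ for $(u,v) \in G$ and $\Basis(u,v) = \perp$ otherwise.

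First I would handle the abort probability. Since $\Basis(u,v) = \perp$ for every pair outside $G$, the fibre $\Basis^{-1}(\perp)$ consists of $(\perp,\perp)$ together with pairs on which $P_{UV}$ vanishes, as one reads off from parts (i) and (iii) of Proposition \ref{prop:margin} and the structure of \eqref{eq:puv}. Hence $P_\Basis(\perp) = P_{UV}(\perp,\perp)$, and the claimed formula \eqref{eq:result-abort} is precisely part (i) of Proposition \ref{prop:margin}. No further work is needed here.

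Next I would treat $\basis \in \lk$. By \eqref{eq:pbasis-general}, $P_\Basis(\basis) = \sum_{(u,v) \in \Basis^{-1}(\basis)} P_{UV}(u,v)$, where $\Basis^{-1}(\basis) = \{(u,v) \in G \mid h(u,v) = \basis\}$. The key observation is supplied by part (ii) of Proposition \ref{prop:margin}: the double sum appearing there depends only on $m,n,k,p_x,p_z$ and on the constraint $\abs{u \wedge v} = 0$, not on the particular pair, so $P_{UV}$ takes one common value $c$ on all of $G$. Therefore $P_\Basis(\basis) = \bigl\lvert \Basis^{-1}(\basis) \bigr\rvert \cdot c$, and the whole task reduces to counting the fibre. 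The combinatorial heart is the identity $\bigl\lvert \Basis^{-1}(\basis) \bigr\rvert = \binom{m}{l}$, which I would establish by a bijection: a pair $(u,v) \in G$ with $h(u,v) = \basis$ is determined by its set of $l$ ``active'' positions $\alpha_1 < \cdots < \alpha_l$ in $\upto{m}$ (those where $u$ or $v$ equals $1$), because $\basis$ then dictates slot by slot, via \eqref{eq:alpha}, whether each active position carries $u_{\alpha_i}=1$ (when $\basis_i = 0$) or $v_{\alpha_i}=1$ (when $\basis_i = 1$). Every $l$-subset of $\upto{m}$ yields exactly one such pair in $G$, and distinct subsets yield distinct pairs, giving $\binom{m}{l} = \binom{m}{n+k}$ preimages. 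Multiplying by $c$ and writing $l = n+k$ produces \eqref{eq:result-sample}.

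The main obstacle I expect is not any analytic estimate but making this fibre-counting bijection fully rigorous: one must verify that $h$ restricted to $G$ is exactly $\binom{m}{l}$-to-one onto $\lk$, i.e.\ that the correspondence (active-position set)$\,\mapsto (u,v)$ is well-defined, lands in $G$, and is a bijection onto $h^{-1}(\basis)$ for each fixed $\basis$. Everything else is bookkeeping on top of Proposition \ref{prop:margin}. As a free by-product, since neither $c$ nor $\binom{m}{l}$ depends on the choice of $\basis \in \lk$, the value $P_\Basis(\basis)$ is the same for all $\basis \in \lk$, which is exactly the uniform-sampling statement of Proposition \ref{prop:uniform-sampling}; I would record this as the advertised corollary.
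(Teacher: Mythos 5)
Your proposal is correct and follows essentially the same route as the paper's own proof: both reduce $P_\Basis(\perp)$ to $P_{UV}(\perp,\perp)$ by noting that all other points of the fibre carry zero probability (the indicator $\chi(\abs{u \wedge v} = 0)$ in \eqref{eq:puv} killing the pairs with $\abs{u \wedge v} \neq 0$), and both obtain \eqref{eq:result-sample} by observing that $P_{UV}$ is constant on $h^{-1}(\basis)$ and counting that fibre via the bijection with $l$-element subsets of $\upto{m}$, yielding the factor $\binom{m}{n+k}$. The uniform-sampling corollary you record at the end is exactly how the paper deduces \Cref{prop:uniform-sampling}.
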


Before we prove \Cref{prop:pbasis}, let us point out its importance.  Equation
\eqref{eq:result-abort} is the probability that the sifting protocol aborts
because Alice and Bob did not reach the quota on the $X$- and $Z$-agreements,
and is therefore a performance parameter of the protocol.  Equation
\eqref{eq:result-sample} is the sampling probability for each $\basis \in \lk$.
Since \eqref{eq:result-sample} is independent of $\basis \in \lk$, we get
uniform sampling as a corollary of \Cref{prop:pbasis}.

\begin{cor}
  \label{cor:uniform-sampling}
  The combination of LCA sifting (\Cref{prot:fixed-round}) and SBPE
  (\Cref{prot:paramest}) samples uniformly. In other words, the LCA sifting
  protocol satisfies
  \begin{align}
    P_\Basis(\basis) = P_\Basis(\basis') \quad \Forall \basis, \basis' \in \lk
    \,.
  \end{align}
\end{cor}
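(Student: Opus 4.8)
The plan is to read off $P_\Basis$ directly from the marginal $P_{UV}$ established in \Cref{prop:margin}, using the defining relation \eqref{eq:pbasis-general}, namely $P_\Basis(\basis) = \sum_{(u,v) \in \Basis^{-1}(\basis)} P_{UV}(u,v)$, together with the explicit form of the random variable $\Basis$ in \eqref{eq:basis}. The structural fact I would exploit is that $P_{UV}(u,v)$ in \eqref{eq:puv} has only two kinds of nonzero contributions: a $\perp$-term supported exactly on $(u,v)=(\perp,\perp)$, and a sampling term whose value is the \emph{same} double sum for every $(u,v) \in \mn \times \mk$ with $\abs{u \wedge v} = 0$ (the dependence on the pair enters only through the indicator $\chi(\abs{u \wedge v}=0)$). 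This constancy is what makes the final distribution easy to assemble.

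First I would treat $\basis = \perp$. By \eqref{eq:basis}, the preimage $\Basis^{-1}(\perp)$ is everything in $\Omega_{UV}$ except the pairs $(u,v) \in \mn \times \mk$ with $\abs{u \wedge v} = 0$. But \eqref{eq:puv} shows $P_{UV}$ vanishes whenever exactly one of $u,v$ equals $\perp$, and also on every $(u,v) \in \mn \times \mk$ with $\abs{u \wedge v} \neq 0$, because of the indicator in the sampling term. Hence the only surviving summand is $P_{UV}(\perp,\perp)$, i.e. the $\perp$-term of \eqref{eq:puv}, which is precisely \eqref{eq:result-abort}; this case requires no further work.

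The substance lies in the case $\basis \in \lk$, where $\Basis^{-1}(\basis) = \{(u,v) \in \mn \times \mk \mid \abs{u \wedge v} = 0,\ h(u,v) = \basis\}$ and $P_{UV}$ is constant on this set, so it suffices to count it and multiply by the common value. The plan is to set up a bijection between these pairs and the size-$l$ subsets $\alpha \subseteq \upto{m}$: given $\alpha$, the target $\basis$ (which has $n$ zeros and $k$ ones, since $\sum_i \basis_i = k$ and $l=n+k$) forces the $i$-th element of $\alpha$ to carry a $u$-mark when $\basis_i = 0$ and a $v$-mark when $\basis_i = 1$, reconstructing $(u,v)$ uniquely; conversely $\alpha(u,v)$ recovers $\alpha$. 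This yields $\abs{\Basis^{-1}(\basis)} = \binom{m}{n+k}$, independent of $\basis$. Multiplying the common value of the sampling term in \eqref{eq:puv} by this count gives \eqref{eq:result-sample}, and since the right-hand side plainly does not involve $\basis$, \Cref{cor:uniform-sampling} drops out immediately.

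I expect the only real (though modest) obstacle to be the bookkeeping of this bijection: one must verify that the order-preserving assignment encoded by $h$ through $\alpha_i(u,v)$ in \eqref{eq:alpha} is genuinely invertible — distinct $\alpha$ giving distinct admissible pairs, every preimage pair arising in this way — and that each resulting pair really lands in $\mn \times \mk$ with the correct Hamming weights $n$ and $k$ and satisfies $\abs{u \wedge v}=0$. Once this correspondence is pinned down, the remainder is a direct substitution of \Cref{prop:margin}.
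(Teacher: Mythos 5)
Your proposal is correct and follows essentially the same route as the paper's own proof of \Cref{prop:pbasis}: both read off $P_\Basis$ from the marginal $P_{UV}$ of \Cref{prop:margin}, exploit the constancy of $P_{UV}$ on $\{(u,v)\in \mn\times\mk \mid \abs{u\wedge v}=0\}$, and count the preimage of each $\basis\in\lk$ as ${m \choose n+k}$ via exactly the bijection with size-$l$ subsets of $\upto{m}$ that you describe. Since both the count and the common value of $P_{UV}$ are independent of $\basis$, uniformity follows, just as in the paper.
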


This proves \Cref{prop:uniform-sampling}. It leads us to proposing the protocol
as a secure alternative to the insecure iterative sifting protocol.

Now we proceed to the proof of \Cref{prop:pbasis}.

\begin{proof}[Proof of \Cref{prop:pbasis}]
  We first show equation \eqref{eq:result-abort}. By definition, it holds that
  \begin{align}
    P_\Basis(\perp) = \sum_{(u,v) \in \Basis^{-1}(\perp)} P_{UV}(u,v) \,,
  \end{align}
  where
  \begin{align}
    \Basis^{-1}(\perp) = \Big( \{\perp\} \times \mk \Big) \cup \Big( \mn \cup
    \{\perp\} \Big) \cup \{(\perp,\perp)\} \cup \{ (u,v) \in \mn \times \mk \mid
      \abs{u \wedge v} \neq 0 \}
  \end{align}
  We know from \Cref{prop:margin} that
  \begin{align}
    \Forall (u,v) \in \Big( \{\perp\} \times \mk \Big)
    \cup \Big( \mn \cup \{\perp\} \Big): \quad P_{UV}(u,v) = 0 \,.
  \end{align}
  Since
  \begin{align}
    \Forall (a,b,u,v) \in \m \times \m \times \mn \times \mk: \quad
    \abs{u \wedge v} \neq 0 \implies \abs{\no{a} \wedge \no{b} \wedge u} +
    \abs{a \wedge b \wedge v} \neq 0 \,,
  \end{align}
  we also have
  \begin{align}
    \Forall (u,v) \in \{(u',v') \in \mn \times \mk \mid \abs{u' \wedge v'} \neq
    0 \}: \quad P_{UV}(u,v) = 0 \,.
  \end{align}
  Thus,
  \begin{align}
    P_\Basis(\perp) &= P_{UV}(\perp,\perp) \\
    &= \left( \sum_{n_x=0}^{n-1} \sum_{n_z=0}^{m-n_x} + \sum_{n_x=n}^{m}
      \sum_{n_z=0}^{\min(m-n_x,k-1)} \right) {m \choose n_x} {m-n_x \choose n_z}
      2^{m-n_x-n_z} p_x^{m+n_x-n_z} p_z^{m-n_x+n_z} \,,
  \end{align}
  where the last equality follows form \Cref{prop:margin}. This shows equation
  \eqref{eq:result-abort}. 

  We proceed with showing equation \eqref{eq:result-sample}. We have that
  \begin{align}
    \Forall \basis \in \lk: \quad P_\Basis(\basis) 
    &= \sum_{(u,v) \in \Basis^{-1}(\basis)} P_{UV}(u,v) \\
    &= \sum_{(u,v) \in h^{-1}(\basis)} P_{UV}(u,v) \,,
  \end{align}
  where
  \begin{align}
    \label{eq:theta-preimage}
    h^{-1}(\basis) = \left\{ (u,v) \in \mn \times \mk \ \middle\vert \
    \begin{array}{l}
      \abs{u \wedge v} = 0 \,, \\
      \basis_i = 0 \implies u_{\alpha_i(u,v)} = 1 \,, \\
      \basis_i = 1 \implies v_{\alpha_i(u,v)} = 1
    \end{array}
    \right\} \,.
  \end{align}
  Since $\abs{u \wedge v} = 0$ for all $(u,v) \in h^{-1}(\basis)$, we know from 
  \Cref{prop:margin} that 
  \begin{align}
    &\Forall (u,v) \in (u,v) \in h^{-1}(\basis): \nonumber \\
    &P_{UV}(u,v) = \sum_{n_x=n}^{m-k} \sum_{n_z=k}^{m-n_x} {m-l \choose n_x-n}
    {m-k-n_x \choose n_z-k} 2^{m-n_x-n_z} p_x^{m+n_x-n_z} p_z^{m-n_x+n_z} {n_x
    \choose n}^{-1} {n_z \choose k}^{-1} \,.
  \end{align}
  Thus,
  \begin{align}
    &\Forall \basis \in \lk: \nonumber \\
    &P_\Basis(\basis) = \abs{h^{-1}(\basis)} \sum_{n_x=n}^{m-k}
    \sum_{n_z=k}^{m-n_x} {m-l \choose n_x-n} {m-k-n_x \choose n_z-k}
    2^{m-n_x-n_z} p_x^{m+n_x-n_z} p_z^{m-n_x+n_z} {n_x
    \choose n}^{-1} {n_z \choose k}^{-1} \,. \label{eq:pbasistheta}
  \end{align}
  For every $\basis \in \lk$, the set $h^{-1}(\basis)$ is the set of all pairs
  $(u,v) \in \mn \times \mk$ such that the following two properties are
  satisfied: 
  \begin{itemize}
    \item $\abs{u \wedge v} = 0$,
    \item for the set $\alpha(u,v)$ as in equation \eqref{eq:alpha}, it holds
      for every $i \in \upto{m}$ that $u_{\alpha_i(u,v)} = 1$ if $\basis_i = 0$
      and $v_{\alpha_i(u,v)} = 1$ if $\basis_i = 1$.
  \end{itemize}
  Now note that the only thing that matters is the question which $l = n+k$
  elements of $\upto{m}$ form the subset $\alpha_i(u,v) \subseteq~\upto{m}$: for
  every subset $\alpha \subseteq \upto{m}$ of size $l$, there is exactly one
  pair $(u,v)$ which satisfies the above two properties such that $\alpha =
  \alpha_i(u,v)$. Hence, counting the elements of $h^{-1}(\basis)$ is the same
  as counting the $l$-element subsets of $\upto{m}$, and therefore
  \begin{align}
    \abs{h^{-1}(\basis)} = {m \choose n+k} \,.
  \end{align}
  This reduces equation \eqref{eq:pbasistheta} to
  \begin{align}
    &\Forall \basis \in \lk: P_\Basis(\basis) = {m \choose n+k}
    \sum_{n_x=n}^{m-k} \sum_{n_z=k}^{m-n_x} {m-n-k \choose n_x-n} {m-k-n_x
    \choose n_z-k} 2^{m-n_x-n_z} p_x^{m+n_x-n_z} p_z^{m-n_x+n_z} {n_x \choose
    n}^{-1} {n_z \choose k}^{-1} \,,
  \end{align}
  which is what we wanted to show.
\end{proof}

\newpage

\section{Efficiency calculation}
\label{app:efficiency}

Here we compare the efficiencies of iterative sifting and LCA
sifting. Recall from Equation \eqref{eq:efficiency} that we define the
efficiency $\eta$ of a sifting protocol as 
\begin{align}
  \eta = \frac{R}{M} \,,
\end{align}
where $R$ is the random variable of the number of rounds that are kept after
sifting and $M$ is the random variable of the total number of rounds performed
in the loop phase of the protocol. The efficiency $\eta$ depends on the
particular history of the protocol: different runs of the protocol may have
different efficiencies. Therefore, $\eta$ is a random variable. In the
following, $R_I$ and $M_I$ denote the random variables $R$ and $M$ for the
iterative sifting protocol, and $R_L$ and $M_L$ denote the corresponding random
variables for the LCA protocol.  Whereas in the case of iterative
sifting, the number $R_I$ is fixed and the number $M_I$ is a random variable,
the opposite is true for the LCA sifting protocol, where $M_L = m$ is
fixed but but $R_L$ is a random variable. (Note that the LCA sifting
protocol may abort, in which case $R_L=0$).

To compare the efficiencies of the two protocols, we calculate the expected
value of $\eta$ in each case. We first do this for the case of iterative
sifting. Recall that $A_r$, $B_r$ is the random variable of Alice's and Bob's
basis choice in round $r$, respectively, and that $N_d$ is the number of basis
disagreements. Then we have:
\begin{align}
  \expect{\eta_I} 
  &= \expect{\frac{R_I}{M_I}} \\
  &= (n+k) \expect{\frac{1}{M_I}} \\
  &= (n+k) \sum_{m=n+k}^\infty \frac{1}{m} P_{M_I}(m) \\
  &= (n+k) \sum_{m=n+k}^\infty \frac{1}{m} \sum_{n_d=0}^{m-n-k}
    P_{M_IN_d}(m,n_d) \\
  &= (n+k) \sum_{m=n+k}^\infty \frac{1}{m} \sum_{n_d=0}^{m-n-k} \left(
    P_{M_IN_dA_mB_m}(m,n_d,0,0) + P_{M_IN_dA_mB_m}(m,n_d,1,1) \right) \\
  &= (n+k) \sum_{m=n+k}^\infty \frac{1}{m} \sum_{n_d=0}^{m-n-k} \Bigg( (p_x^2)^n
    (p_z^2)^{m-n-n_d} (2p_xp_z)^{n_d} {m-1 \choose n_d} {m-n_d-1 \choose n-1} +
    \\ 
  &\hspace{4.5cm} (p_x^2)^{m-k-n_d} (p_z^2)^k (2p_xp_z)^{n_d} {m-1 \choose n_d}
    {m-n_d-1 \choose k-1} \Bigg) \nonumber \\
  &= (n+k) \sum_{m=n+k}^\infty \frac{1}{m} \sum_{n_d=0}^{m-n-k}
    (2p_xp_z)^{n_d} {m-1 \choose n_d} \Bigg( (p_x^2)^n (p_z^2)^{m-n-n_d}
    {m-n_d-1 \choose n-1} + \label{eq:etaIFinal} \\
  &\hspace{7.5cm} (p_x^2)^{m-k-n_d} (p_z^2)^k {m-n_d-1 \choose k-1} \Bigg) \,.
    \nonumber
\end{align}
For the case of the LCA sifting protocol, we have:
\begin{align}
  \expect{\eta_L} 
  &= \frac{R_L}{M_L} \\
  &= \frac{1}{m} \expect{R_L} \\
  &= \frac{1}{m} (n+k) P[N_x \geq n \land N_z \geq k] \\
  &= \frac{1}{m} (n+k) \sum_{n_d=0}^{m-n-k} P[N_x \geq n \land N_z \geq k \land
    N_d = d] \\
  &= \frac{n+k}{m} \sum_{n_d=0}^{m-n-k} \sum_{n_z=k}^{m-n_d-n} P[N_x \geq n
  \land N_z = n_z \land N_d = n_d] \\
  &= \frac{n+k}{m} \sum_{n_d=0}^{m-n-k} \sum_{n_z=k}^{m-n_d-n}
    (p_x^2)^{m-n_z-n_d} (p_z^2)^{n_z} (2p_xp_z)^{n_d} {m \choose n_d} {m-n_d
    \choose n_z} \,. \label{eq:etaFFinal}
\end{align}
The calculation of the expected efficiencies \eqref{eq:etaIFinal} and
\eqref{eq:etaFFinal} requires a lot of computational power. We wrote programs
that compute numerical lower bounds on $\expect{\eta_I}$ and $\expect{\eta_L}$
for the case where the probabilities are symmetric ($p_x = p_z = 1/2$) and where
the quotas coincide ($n=k$). A plot of these lower bounds is shown in
\Cref{fig:efficiency}. In order to plot the lower bound on $\expect{\eta_L}$, a
choice for $m$ had to be made for each value of $n=k$. Our program choses an $m$
which is likely to maximize the expected efficiency for the given value of
$n=k$. Note that 1/2, being the expected fraction of basis agreements, is an
upper bound on the expected efficiencies. Hence, \Cref{fig:efficiency} indicates
that the difference in the expected efficiencies becomes insignificant for
practically relevant values of the block length $n+k$. This means that replacing
iterative sifting by LCA sifting is unlikely to have a significant effect on the
key rate of a QKD protocol.

\newpage

\section{Proof of the sufficiency of the formal criteria}
\label{app:formal-criteria}

In this appendix, we prove that the two formal criteria for good sifting,
\eqref{eq:uniform-formal} and \eqref{eq:uncorr-formal}, are sufficient for good
sifting in the sense that the relevant statistical inequality,
\eqref{eq:ptail-bound2}, follows from these two conditions. In other words, we
prove \Cref{prop:tail}. 

\begin{proof}[Proof of \Cref{prop:tail}]
  According to Bayes' Theorem, we have that
  \begin{align}
    \ptail &= P[\Lambdakey \geq \Lambdatest + \mu \mid \Lambdatest \leq \qtol]
      \\
    &= \frac{P[\Lambdatest \leq \qtol \mid \Lambdakey \geq \Lambdatest +
      \mu]P[\Lambdakey \geq \Lambdatest + \mu]}{P[\Lambdatest \leq \qtol]}
      \label{eq:ptail-bayes-1} \\
    &\leq \frac{P[\Lambdakey \geq \Lambdatest + \mu]}{\ppass} \,.
    \label{eq:ptail-bayes-2}
  \end{align}
  We define the \emph{total error rate} $\Lambdatot$ as the random variable
  \begin{align}
    \label{eq:ltot}
    \begin{array}{lccl}
      \Lambdatot: & \Omega_{ZZ'\Basis} & \rightarrow & [0,1] \\
      & (z,z',\basis) & \mapsto & {\displaystyle \frac{1}{l} \sum_{i=1}^l z
      \oplus z' } \,.  
    \end{array}
  \end{align}
  For all $(z,z',\basis) \in \Omega_{ZZ'\Basis}$, it holds that
  \begin{align}
    \Lambdakey(z,z,\basis) &\geq \Lambdatest(z,z,\basis) + \mu \\
    \iff \frac{1}{n} \sum_{i=1}^l (1-\basis_i)(z_i \oplus z'_i) &\geq
      \frac{1}{k} \sum_{i=1}^l \basis_i (z_i \oplus z'_i) + \mu \\
    \iff \frac{1}{n} \sum_{i=1}^l (1-\basis_i) (z_i \oplus z'_i) +
      \frac{1}{k} \sum_{i=1}^l (1-\basis_i) (z_i \oplus z'_i) &\geq \frac{1}{k}
      \sum_{i=1}^l \basis_i (z_i \oplus z'_i) + \frac{1}{k} \sum_{i=1}^l
      (1-\basis_i) (z_i \oplus z'_i) + \mu \\
    \iff \left( \frac{1}{n} + \frac{1}{k} \right) \sum_{i=1}^l (1-\basis_i) (z_i
      \oplus z'_i) &\geq \frac{1}{k} \sum_{i=1}^l (z_i \oplus z'_i) + \mu \\
    \iff \frac{k}{l} \left( \frac{1}{n} + \frac{1}{k} \right) \sum_{i=1}^l
      (1-\basis_i) (z_i \oplus z'_i) &\geq \frac{k}{l} \frac{1}{k} \sum_{i=1}^l
      (z_i \oplus z'_i) + \frac{k}{l} \mu \\
    \iff \Lambdakey(z,z,\basis) &\geq \Lambdatot(z,z,\basis) + \frac{k}{l} \mu
      \,.
    \label{eq:event-equiv}
  \end{align}
  We express the error \emph{rates} $\Lambdakey$, $\Lambdatest$ and $\Lambdatot$
  in terms of the error \emph{numbers}
  $\Sigmakey$, $\Sigmatest$ and $\Sigmatot$,
  \begin{align}
    \Sigmakey = n\Lambdakey \,, \quad \Sigmatest = k\Lambdatest \,, \quad
    \Sigmatot = l\Lambdatot \,.
  \end{align}
  This gives us
  \begin{align}
    \Lambdakey \geq \Lambdatot + \frac{k}{l}\mu \iff \Sigmakey \geq n \left(
    \frac{\Sigmatot}{l} + \frac{l-n}{l}\mu \right)
  \end{align}
  Therefore,
  \begin{align}
    \textstyle P[\Lambdakey \geq \Lambdatest + \mu] = P\left[ \Sigmakey \geq n
    \left( \frac{\Sigmatot}{l} + \frac{l-n}{l}\mu \right)\right]
  \end{align}
  and hence
  \begin{align}
    \label{eq:ptail-ltot}
    \ptail 
    &\leq \frac{{\textstyle P\left[ \Sigmakey \geq n \left(
      \frac{\Sigmatot}{l} + \frac{l-n}{l}\mu \right)\right]}}{\ppass} \\
    &= \frac{ \sum_{\sigmatot} P[\Sigmatot = \sigmatot] P\left[\Sigmakey \geq
      n \left( \frac{\sigmatot}{l} + \frac{l-n}{l} \mu \right) \Mid
      \Sigmatot = \sigmatot \right] }{\ppass} \\
    &= \frac{ \sum_{\sigmatot} P[\Sigmatot = \sigmatot] \sum_{j}
      P\left[\Sigmakey = j \Mid \Sigmatot = \sigmatot \right] }{\ppass} \,,
      \label{eq:sigmatot-split}
  \end{align}
  where the sum over $j$ ranges over all possible values of $\Sigmakey$ that are
  larger or equal to the according value, i.e.
  \begin{align}
    j = \left\lceil n \left( \frac{\sigmatot}{l} + \frac{l-n}{l} \mu \right)
    \right\rceil, \left\lceil n \left( \frac{\sigmatot}{l} + \frac{l-n}{l} \mu
    \right) \right\rceil +1, \ldots, n \,,  
    \label{eq:j-range}
  \end{align}
  where $\lceil\,\cdot\,\rceil$ denotes the ceiling function.
  \begin{align}
     h(\sigmatot, l, n, j) := \ &P\left[\Sigmakey = j \Mid \Sigmatot =
       \sigmatot \right] \\
     = \ &\frac{P[\Sigmakey=j \land
       \Sigmatot=\sigmatot]}{P[\Sigmatot=\sigmatot]} \\
     = \ &\frac{P[\Omega_{j\sigmatot}]}{P[\Omega_{\sigmatot}]} \,, \\
     = \ &\frac{\sum_{(z,z',\basis)\in\Omega_{j\sigmatot}}
     P_{ZZ'\Basis}(z,z',\basis)}{\sum_{(z,z',\basis)\in\Omega_{\sigmatot}}
       P_{ZZ'\Basis}(z,z',\basis)} \label{eq:h-def}
  \end{align}
  where
  \begin{align}
    \Omega_{j\sigmatot} &= \left\{ (z,z',\basis) \in \Omega_{ZZ'\Basis} \mid
      \Sigmakey(z,z',\basis)=j \land \Sigmatot(z,z',\basis)=\sigmatot \right\}
      \,, \\
    \Omega_{\sigmatot} &= \left\{ (z,z',\basis) \in \Omega_{ZZ'\Basis} \mid
      \Sigmatot(z,z',\basis)=\sigmatot \right\} \,.
  \end{align}
  It holds for all $(z,z',\basis) \in \Omega_{ZZ'\Basis}$ that
  \begin{align}
    P_{ZZ'\Basis}(z,z',\basis) 
    &= P_{ZZ'}(z,z') P_{\Basis}(\basis) \label{eq:factors} \\
    &= P_{ZZ'}(z,z') c \,, \label{eq:unifs}
  \end{align}
  where $P_{ZZ'}$ and $P_{\Basis}$ are the according marginal distributions of
  $P_{ZZ'\Basis}$. Equation \eqref{eq:factors} follows from
  \eqref{eq:uncorr-formal}, and \Cref{eq:unifs} follows from
  \Cref{eq:uniform-formal}.
  This implies
  \begin{align}
    h(\sigmatot,l,n,j) 
    &= \frac{\sum_{(z,z',\basis)\in\Omega_{j\sigmatot}}P_{ZZ'}(z,z')p}{
      \sum_{(z,z',\basis)\in\Omega_{\sigmatot}}P_{ZZ'}(z,z')p}
      \\ 
    &= \frac{\sum_{(z,z',\basis)\in\Omega_{j\sigmatot}}
      P_{ZZ'}(z,z')}{\sum_{(z,z',\basis)\in\Omega_{\sigmatot}}P_{ZZ'}(z,z')} \\
    &= \frac{\sum_{(z,z')\in\Gamma_{\sigmatot}}P_{ZZ'}(z,z') {\sigmatot \choose
      j} {l-\sigmatot \choose n-j} }{\sum_{(z,z') \in
      \Gamma_{\sigmatot}}P_{ZZ'}(z,z') {l \choose n} } \\
    &= {\sigmatot \choose j} {l-\sigmatot \choose n-j} {l \choose n}^{-1} \,,
      \label{eq:h-is-hyper}
  \end{align}
  where
  \begin{align}
    \Gamma_{\sigmatot} = \left\{ (z,z') \in \lcube \times \lcube \Mid \sum_{i=1}^l z_i
    \oplus z'_i = \sigmatot \right\} \,.
  \end{align}
  \Cref{eq:h-is-hyper} means that $h(\sigmatot,l,n,j)$ is a hypergeometric
  distribution. We are interested in the \emph{tail} of this distribution,
  \begin{align}
    H(\sigmatot,l,n,d) := \sum_{j=d}^n h(\sigmatot,l,n,j) \,,
  \end{align}
  because according to \Cref{eq:sigmatot-split,eq:j-range},
  \begin{align}
    \ptail \leq \frac{\sum_{\sigmatot}P[\Sigmatot=\sigmatot]
    H(\sigmatot,l,n,d)}
    {\ppass} \,,
    \label{eq:ptail-is-hypertail}
  \end{align}
  where
  \begin{align}
    d = \left\lceil n \left( \frac{\sigmatot}{l} + \frac{l-n}{l} \mu \right)
    \right\rceil \,.
  \end{align}
  There are several well-known bounds on the tail of a hypergeometric
  distribution \cite{BLM13}. For our case, \emph{Serfling's bound} is a suitable
  one \cite{Ser74}. The appropriate special case of Serfling's bound for this
  case reads
  \begin{align}
    H(\sigmatot,l,n,d) 
    &\leq \exp\left( -2\frac{(l-n)n}{l}\frac{l-n}{l-n+1}\mu^2 \right) \\
    &= \exp\left( -2\frac{kn}{l}\frac{k}{k+1}\mu^2 \right) \,.
    \label{eq:serfling}
  \end{align}
  (Instead of Serfling's bound, one may use \emph{Hoeffding's bound}
  \cite{Hoe63}.  That bound is weaker than Serfling's bound in this case, but it
  has the advantage that it has been formulated directly in terms of
  hypergeometric distributions \cite{Chv78,Ska13}, so these references are
  easier to understand in our context.) Inequalities
  \eqref{eq:ptail-is-hypertail} and \eqref{eq:serfling} together imply
  \begin{align}
    \ptail 
    &\leq \frac{\sum_{\sigmatot}P[\Sigmatot=\sigmatot]
    H(\sigmatot,l,n,d)}
      {\ppass} \\
    &\leq \frac{\exp\left( -2\frac{kn}{l}\frac{k}{k+1}\mu^2 \right)}{\ppass} \,,
  \end{align}
  which completes the proof.
\end{proof}

\end{document}